\documentclass[draftclsnofoot,onecolumn]{IEEEtran}

\usepackage{mathtools,bm}
\usepackage{hyperref}
\usepackage{microtype}
\usepackage{booktabs}
\usepackage{cite}
\usepackage[nameinlink,capitalise]{cleveref}
\usepackage{listings}

\usepackage{enumitem}
\usepackage{array}
\usepackage{multirow}
\usepackage{makecell}
\usepackage[export]{adjustbox}
\usepackage{amsthm,amsmath,amssymb}
\usepackage{mathrsfs}
\usepackage{balance}
\usepackage{algorithm,algorithmic}

\hypersetup{
	colorlinks=true,
	linkcolor=blue,
	citecolor=blue,
	urlcolor=blue
}

\usepackage{orcidlink}
\newcounter{magicrownumbers}

\newtheorem{theorem}{Theorem}
\newtheorem{lemma}{Lemma}
\newtheorem{proposition}{Proposition}
\newtheorem{corollary}{Corollary}
\newtheorem{assumption}{Assumption}
\newtheorem{remark}{Remark}

\newcommand{\cX}{\mathcal X}
\newcommand{\cC}{\mathcal C}
\newcommand{\cS}{\mathcal S}
\newcommand{\cV}{\mathcal V}
\newcommand{\cR}{\mathcal R}
\newcommand{\cY}{\mathcal Y}
\newcommand{\cZ}{\mathcal Z}
\newcommand{\cP}{\mathcal P}
\newcommand{\cM}{\mathcal M}
\newcommand{\cL}{\mathcal L}
\newcommand{\cU}{\mathcal U}
\newcommand{\cQ}{\mathcal Q}
\newcommand{\bbP}{\mathbb P}

\newcommand{\bbR}{\mathbb R}
\newcommand{\bbZ}{\mathbb Z}

\newcommand{\vol}{\operatorname{vol}}
\newcommand{\aff}{\operatorname{aff}}
\newcommand{\conv}{\operatorname{conv}}

\newcommand{\diag}{\operatorname{diag}} 
\newcommand{\relbd}{\partial_{\mathrm{rel}}} 
\usepackage{caption}
\title{
	Volume-Refined Achievability and Converse Approximations for Noisy Permutation Channels
}
\author{
	Lugaoze Feng$^{\orcidlink{0009-0000-4014-4154}}$,
	Guocheng Lv$^{\orcidlink{0000-0002-7136-3402}}$,
	Xunan Li$^{\orcidlink{0000-0002-5740-161X}}$, and
	Ye Jin
	\thanks{Lugaoze Feng, Guocheng Lv and Ye Jin are with the School of Electronics, Peking University, Beijing 100871, China (e-mail: lgzf@stu.pku.edu.cn; lv.guocheng@pku.edu.cn; jinye@pku.edu.cn).}
	\thanks{Xunan Li is with the National Computer Network Emergency Response Technical Team/Coordination Center of China, Beijing 100029, China (e-mail: lixunan@cert.org.cn).}}

\date{}

\begin{document}
	\maketitle

	\begin{abstract}
	We study volume-refined achievability and converse bounds for noisy permutation channels
	generated by strictly positive DMCs, allowing the reachable
	output polytope to have arbitrary affine dimension \(d\ge1\).  The reachable
	output polytope may be lower-dimensional than the output simplex, whereas
	existing refined achievability analyses and fixed-error converses are not
	adapted to this intrinsic affine geometry. 
	On the achievability side, we develop an affine-coordinate simplex-lattice
	construction adapted to the reachable output polytope, together with a
	nearest-neighbor decoder and a geometric error-reduction argument in the same
	coordinate space.  This yields a Gaussian achievability approximation with an
	\(o(1)\) remainder.
	On the converse side, we first use a meta-converse combined with a KL covering
	and a local testing estimate to obtain a fixed-error converse with a bounded
	remainder, which implies the logarithmic \(\epsilon\)-capacity \(d/2\).  We then
	apply the meta-converse with a stratified Jeffreys-mixture auxiliary output
	distribution.  Using a local Laplace approximation and a local likelihood-ratio approximation, this choice identifies the Fisher-volume term and an explicit
	Gaussian testing constant, yielding a constant-order converse approximation
	with an \(o(1)\) remainder. The achievability and converse constants arise from different constructions and are not claimed to match in general.
	\end{abstract}

	\begin{IEEEkeywords}
		Noisy permutation channels, finite-blocklength analysis, output polytopes,
		binary hypothesis testing, Jeffreys mixtures.
	\end{IEEEkeywords}

	\section{Introduction}
	The \textit{noisy permutation channel} model describes a setting in which transmitted data packets arrive corrupted and out of order. It is a useful abstraction for \textit{multipath routing networks}, where packets may experience unpredictable latency variations due to changes in network topology or load-balancing-induced rerouting \cite{maclaren_walsh_optimal_2009}. This setting has been extensively studied in the coding literature \cite{kovacevic_subset_2013,kovacevic_codes_2018,kovacevic_perfect_2015}. The noisy permutation channel is also relevant to \textit{DNA storage systems} \cite{yazdi_dna-based_2015,kiah2016codes}, where DNA sequences are broken during storage and read using order-independent shotgun sequencing. The study of binary noisy permutation channels was initiated in \cite{makur_bounds_2020} and extended to general noisy permutation channels in \cite{makur_coding_2020}. The capacity result was further investigated by \cite{tang_capacity_2023} using \textit{divergence covering}. Additionally, this channel setting was extended to adder multiple-access channels and more general  multiple-access channels in \cite{lu2024permutation_add} and \cite{lu2024permutation}, respectively.
	
	The systems to which noisy permutation channels are applicable typically operate
	at blocklengths ranging from hundreds to thousands of symbols
	\cite{heckel_fundamental_2017,maclaren_walsh_optimal_2009}, where first-order
	asymptotic results may not accurately describe the finite-blocklength behavior.
	Several finite-blocklength and refined-asymptotic results have therefore been
	developed for noisy permutation channels. The Gaussian achievability analyses
	in \cite{feng_lower_isit_2025} were developed primarily
	for strictly positive channels whose reachable output polytope is full-dimensional relative to the output simplex, and their code construction uses grids in the output simplex. On the converse side,
	\cite{feng_upper_tcomm_2025} developed finite-blocklength upper bounds based on
	\textit{binary hypothesis testing}, symbol relaxation, and Kullback-Leibler (KL)
	divergence covering.

	These results leave open the refined finite-blocklength behavior when the reachable output distributions are contained in a lower-dimensional affine slice of the output simplex. In the achievability analysis of \cite{feng_lower_isit_2025}, the code construction is based on a uniform grid in the full output simplex \(\Delta_{|\mathcal Y|-1}\), and is therefore tailored to channels whose reachable output polytope is full-dimensional relative to that simplex. In the general noisy permutation channel model, however, the reachable output polytope may occupy only a lower-dimensional affine slice of the output simplex, and only distributions in this polytope can be induced by input distributions.
	A full-simplex grid is then not naturally aligned with the reachable affine hull and does not directly provide a regular collection of reachable probability points, whereas a direct construction on an arbitrary lower-dimensional polytope would have to account for its possible shape and lattice structure.
	Thus, the full-dimensional setting considered in the existing refined random-coding analysis should be viewed as a special case, and a different construction is needed to obtain a uniform refined achievability analysis in the general affine-geometric setting.
	
	On the converse side, the mutual-information covering approach of
	\cite{tang_capacity_2023} identifies the first-order logarithmic capacity for
	noisy permutation channels generated by a strictly positive discrete
	memoryless channel (DMC) with transition matrix \(W\). The logarithmic capacity
	equals one half of the affine dimension of the reachable output polytope,
	equivalently \((\operatorname{rank}(W)-1)/2\). However, this approach yields only a weak converse and therefore does not directly address fixed error probabilities.
	Existing fixed-error bounds \cite{feng_upper_tcomm_2025}, on the
	other hand, are based on a \textit{meta-converse}
	\cite{polyanskiy_channel_2010} combined with a symbol-relaxation
	step. This relaxation avoids a direct local comparison between the
	noisy permutation transition kernel and product auxiliary output
	distributions, but the resulting bounds are governed by the dimension of the
	output simplex \( |\mathcal Y|-1 \), rather than by the affine dimension
	\(d\) of the reachable polytope. In the lower-dimensional setting
	\(d<|\mathcal Y|-1\), this incurs an additional
	\[
	(|\mathcal Y|-1-d)\log\sqrt n
	\]
	term and does not yield a bounded-remainder fixed-error converse
	governed by the reachable affine geometry. The key step is therefore to
	retain the noisy permutation transition kernel in the binary test and control
	the local test between this kernel and a nearby product auxiliary distribution,
	uniformly in the blocklength. Establishing this local hypothesis-testing
	estimate is a central step in obtaining a fixed-error converse governed by the
	affine dimension \(d\).

	These considerations motivate an affine-coordinate achievability construction
	with a corresponding error-reduction argument, as well as a fixed-error
	converse that retains the noisy permutation transition kernel and controls
	the resulting local testing penalty uniformly in the blocklength.

	This paper develops refined achievability and converse bounds whose
	blocklength-dependent term is governed by the affine dimension of the reachable
	output polytope. To this end, we make the geometry of the reachable output set explicit.
	The set of reachable output distributions is the \textit{output polytope}
	\[
	\mathcal P_W
	:=
	\operatorname{conv}\{W(\cdot|x):x\in\mathcal X\},
	\]
	where \(W\) denotes the DMC transition matrix.
	Since the random permutation block removes ordering information, the empirical output distribution, or equivalently the output \textit{multiset}, is a sufficient statistic for decoding. We therefore formulate our achievability and converse bounds in terms of the affine dimension 
	\[ 
		d:=\dim \mathcal P_W . 
	\] 
	Equivalently, since the rows of \(W\) are probability vectors, \(d=\operatorname{rank}(W)-1\). We use the geometric notation \(\dim\mathcal P_W\) throughout because the code construction, the covering argument, and the volume factor are all formulated on the affine hull of the reachable output polytope.

	On the achievability side, the proposed simplex-lattice construction yields,
	for every \(c<c_\epsilon\),
	\[
	\log M^\star(n,\epsilon)
	\ge
	d\log(c\sqrt n)+\log \lambda_W^\star-\log d!+o(1),
	\]
	where \(c_\epsilon\) is the Gaussian coefficient associated with this
	construction and is determined by an average of local coordinate variances over
	the reachable output polytope.  The factor \(\lambda_W^\star\) is the Euclidean
	volume ratio induced by the chosen minimum-volume reference simplex in the
	reachable affine hull, and it enters the achievability bound through the
	simplex-lattice counting estimate.
	
	On the converse side, the refined Jeffreys-mixture construction yields
	\[
	\log M^\star(n,\epsilon)
	\le
	d\log \sqrt{n}+\log\mathcal J_W+C_{W,\epsilon}^{J,{\rm str}}+o(1),
	\]
	where \(\mathcal J_W\) is the Fisher volume of the reachable output polytope
	and \(C_{W,\epsilon}^{J,{\rm str}}\) can be evaluated from the Gaussian limit experiment. The achievability and converse approximations both have the
	same affine-dimensional blocklength term \(d\log\sqrt n\), but their
	constant-order geometric factors arise from different natural constructions:
	the Euclidean reference-simplex geometry on the achievability side and the
	Fisher-volume geometry on the converse side. The constants in the achievability and converse bounds need not coincide.

	Our main contributions are as follows:
	\begin{enumerate}
		\item[$\bullet$]
		We establish an achievability bound for noisy permutation channels generated by
		strictly positive DMCs with arbitrary reachable affine dimension \(d\ge1\).  The construction
		represents the reachable output polytope by its affine preimage, a
		full-dimensional coordinate polytope, so that simplex-lattice counting and
		nearest-neighbor error reduction can be applied uniformly.  A geometric
		argument shows that each decoding error is contained in a union of finitely
		many one-dimensional transfer events in the affine coordinate space.
		
		\item[$\bullet$]
		We develop a local hypothesis-testing estimate for the meta-converse with the
		noisy permutation transition kernel retained in the binary test.  Combined with
		an affine-dimensional KL covering of \(\mathcal P_W\), this gives a
		bounded-remainder fixed-error converse.  Together with the achievability bound,
		it yields the logarithmic \(\epsilon\)-capacity \(d/2\).
		
		\item[$\bullet$]
		We refine the Gaussian achievability analysis by retaining the local
		coordinate variances over the reachable output polytope.  This yields, for the
		proposed simplex-lattice construction, a Gaussian achievability approximation
		with a construction-dependent coefficient \(c_\epsilon\) determined by an
		average over the reachable output polytope.
		
		\item[$\bullet$]
		We refine the converse approximation by applying the meta-converse with a
		stratified Jeffreys-mixture auxiliary distribution.  Its \(\mathcal P_W\)
		component identifies, through a local Laplace approximation, the Fisher-volume
		term \(\log\mathcal J_W\) for interior output types, while the local
		likelihood-ratio expansion yields an explicit Gaussian testing constant.  Its
		lower-dimensional face components control boundary output types and make the
		bound uniform.  This gives, with an \(o(1)\) remainder, a constant-order converse upper approximation parallel to the Gaussian achievability lower approximation.
	\end{enumerate}
	The remainder of this section introduces notation. Section~\ref{sec:system_model}
	presents the system model and defines the reachable output polytope.
	Section~\ref{sec:achievability} develops the affine-coordinate
	simplex-lattice achievability bound. Section~\ref{sec:converse-covering}
	establishes the converse bound and the logarithmic \(\epsilon\)-capacity through KL covering . 
	Section~\ref{sec:refined-average-gaussian}
	develops the Gaussian achievability approximation. Section~\ref{sec:jeffreys-mixture-converse} develops the stratified
	Jeffreys-mixture refined converse, and
	Section~\ref{sec:numerical_results} presents numerical results.
	
	\subsection{Notation}
	\label{sec:notation}
	
	We use standard information-theoretic notation throughout.  The finite input and output alphabets are denoted by $\cX$ and $\cY$, with cardinalities $|\cX|=k$ and $|\cY|=m$.  For a positive integer $s$, let
	\[
	\Delta_{s-1}:=\left\{p\in\bbR^s: p_i \ge 0,\sum_{i=1}^s p_i=1\right\}
	\]
	denote the standard probability simplex.  All logarithms are base two unless explicitly stated otherwise. For distributions \(P\) and \(Q\) on the same finite alphabet, the KL divergence is
	\[
	D(P\|Q):=\sum_z P(z)\log\frac{P(z)}{Q(z)} .
	\] 
	For \(y^n\in\cY^n\), let \(N(y^n)=(N_y(y^n))_{y\in\cY}\), where
	\(N_y(y^n):=\sum_{t=1}^n \mathbf 1\{y_t=y\}\).  For
	\(q\in\Delta_{m-1}\), \(\operatorname{Mult}(n,q)\) denotes the multinomial
	distribution of \(N(Y^n)\) when \(Y_1,\ldots,Y_n\) are independent with common
	law \(q\).
	
	For two distributions \(P\) and \(Q\) on the same space
	\(\cZ\), and for \(\alpha\in(0,1)\), the Neyman-Pearson type-II error with
	power at least \(\alpha\) under \(P\) is defined as
	\begin{equation}
		\beta_\alpha(P,Q)
		:=
		\inf_{P_{T|Z}:\,P[T=1]\ge \alpha} Q[T=1],
	\end{equation}
	where the infimum is over all random transformations $P_{T|Z}:\cZ \to \{0,1\}$. The event \(T=1\)
	means that the test chooses \(P\). 
	
	For a set \(B\subseteq\bbR^s\), let \(\operatorname{span}(B)\),
	\(\conv(B)\), \(\aff(B)\), and \(\dim B\) denote its linear span, convex hull,
	affine hull, and affine dimension, respectively.  If \(B\) is contained in a \(d\)-dimensional affine subspace,
	then \(\vol_d(B)\) denotes the \(d\)-dimensional Euclidean volume induced on
	that affine subspace.  
	
	For a convex set \(B\), \(\operatorname{relint}(B)\) denotes the interior
	relative to \(\aff(B)\), and \(\relbd B\) denotes the boundary of \(B\) in the
	relative topology of \(\aff(B)\).  If \(B\) is closed in \(\aff(B)\), then
	\[
	\relbd B=B\setminus\operatorname{relint}(B).
	\]
	More generally, a \(d\)-simplex means the convex hull of \(d+1\) affinely
	independent points.
	
	Unless otherwise specified, \(\|\cdot\|\) denotes the Euclidean norm and
	\[
	\operatorname{dist}(x,B):=\inf_{b\in B}\|x-b\|
	\]
	denotes Euclidean distance to a set \(B\).  When \(B\) lies in a lower-dimensional
	affine subspace, this is the Euclidean distance inherited from the ambient
	space; for \(x\in\aff(B)\), it is equivalently the distance computed within
	\(\aff(B)\).

	Let \(e_i^{(s)}\) denote the \(i\)-th standard basis vector of \(\mathbb R^s\).
	When the dimension is clear from context, we write \(e_i\).  Output-alphabet
	basis vectors in \(\mathbb R^m\) will be denoted separately by \(b_y\). The notation \(O_W(1)\) denotes a quantity uniformly bounded, for all
	sufficiently large \(n\), by a finite constant depending only on \(W\).
	Similarly, \(O_{W,\epsilon}(1)\) allows the constant and the threshold on \(n\) to depend also on the fixed error probability \(\epsilon\). The standard normal cumulative distribution function is denoted by $\Phi$, and $\Phi^{-1}$ denotes its inverse.
	
	For an input sequence $x^n=(x_1,\ldots,x_n)\in\cX^n$, define the product output distribution
	\begin{equation}
		W_{x^n}(y^n):=\prod_{t=1}^n W(y_t|x_t),
		\qquad y^n\in\cY^n.
		\label{eq:def-Wxn}
	\end{equation}
	If \(Q_1,\ldots,Q_n\) are output distributions on \(\cY\), then
	\(\prod_{t=1}^n Q_t\) denotes the product distribution on \(\cY^n\) given by
	\[
	\left(\prod_{t=1}^n Q_t\right)(y^n)
	=
	\prod_{t=1}^n Q_t(y_t).
	\]
	When \(Q_1=\cdots=Q_n=Q\), we write this product distribution as
	\(Q^{\otimes n}\).

	\section{System Model and Reachable Output Polytope} \label{sec:system_model}

	\subsection{System Model}
	
	A code \(\cC_n\) consists of a message set \(\cM\), an encoding function
	\(f_n:\cM\to\cX^n\), and a decoding function
	\(g_n:\cY^n\to\cM\cup\{\mathtt e\}\). The message \(U\) is uniformly distributed
	on \(\cM\), and the input codeword is \(X^n=f_n(U)\).
	
	Let $\cX$ and $\cY$ be finite input and output alphabets with $|\cX|=k$ and $|\cY|=m$. 
	A DMC from $\cX$ to $\cY$ is represented by a row-stochastic matrix
	\[
	W=\{W(y|x):x\in\cX,\ y\in\cY\}.
	\] 
	For \(p \in \Delta_{k-1}\), we denote the induced output distribution by
	\[
	q :=p W,
	\qquad
	q (y)=\sum_{x\in\cX}p (x)W(y|x),
	\quad y\in\cY.
	\]
	
	The transmitter uses the encoder to map the message to an \textit{input codeword} \(X^n\). 
	\(X^n\) goes through the DMC \(W:\cX\to\cY\) to generate a \textit{noisy codeword} \( Z^n\). Then, \(Z^n\) goes through an independent uniformly random permutation block
	\(P_{Y^n|Z^n}\) to produce the \textit{output codeword} \(Y^n\), where
	\(P_{Y^n|Z^n}\) is defined as follows. Let \(\sigma:\{1,\ldots,n\}\to\{1,\ldots,n\}\) be drawn uniformly and randomly from the symmetric group \(\cS_n\) over \( \{ 1,...,n \}\). Then \(Y_{\sigma(i)}=Z_i\) for \(i=1,\ldots,n\). This system is shown in Fig.~\ref{sys:perm}.
	
	\begin{figure*}[t]
		\normalsize	
		\centering
		 \includegraphics[width = 1\textwidth]{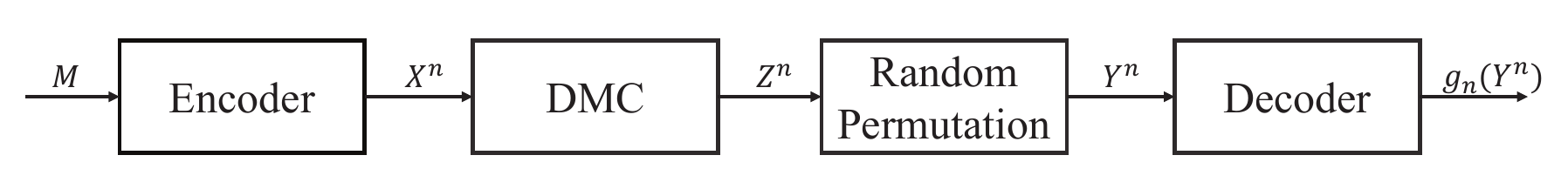}
		 \captionsetup{font=footnotesize}
		 \caption{\  Illustration of a communication system with a DMC followed by a random permutation block.}
		 \label{sys:perm}
	\end{figure*}
	
	The average error probability of code \(\cC_n\) is
	\(
	P_e:=\bbP[g_n(Y^n)\ne U].
	\)
	Since the permutation is uniform and independent, the channel transition law
	is invariant under permutations of the output coordinates. Therefore, any
	decoder can be symmetrized over output permutations without increasing the
	average error probability. The empirical output distribution, or
	equivalently the output count vector, is a sufficient statistic for decoding in
	the average error setting considered here.
	For a given
	\(\epsilon\in(0,1)\) and blocklength \(n\), we define
	\[
	M^\star(n,\epsilon)
	:=
	\max\bigl\{
	|\cM|:\exists\,\cC_n \text{ s.t. } P_e\le\epsilon
	\bigr\}.
	\]
	Since the random permutation removes ordering information, the number of
	possible output empirical distributions is polynomial in the blocklength
	\(n\).  We therefore use the logarithmic normalization \(\log M/\log n\).
	For fixed \(\epsilon\in(0,1)\), define the logarithmic \(\epsilon\)-capacity by
	\begin{equation}
		C_\epsilon
		:=
		\liminf_{n\to\infty}
		\frac{\log M^\star(n,\epsilon)}{\log n}. \nonumber
	\end{equation}
	The logarithmic capacity is then defined as
	\begin{equation}
		C := \lim_{\epsilon \to 0^+} C_\epsilon. \nonumber
	\end{equation}
	
	\subsection{Reachable Output Geometry}
	
	The set of all reachable output distributions is the convex polytope
	\begin{equation}
		\begin{aligned}
			\cP_W
			&:=\{pW:p\in\Delta_{k-1}\}\\
			&=\conv\{W(\cdot|x):x\in\cX\}\\
			&\subseteq \Delta_{m-1}.
		\end{aligned}
		\label{eq:reachable-polytope}
	\end{equation}
	Because every row of \(W\) sums to one, \(\cP_W\) lies in the affine hyperplane
	\(\{q\in\bbR^m:\sum_y q(y)=1\}\). Let
	\begin{equation}
		d:=\dim \cP_W
		\label{eq:def-d}
	\end{equation}
	be the affine dimension of the reachable output polytope. If \(r\) denotes the
	row rank of \(W\), then
	\[
	d=r-1,
	\]
	since the rows of \(W\) are probability vectors. Indeed, the rows of \(W\)
	lie in the affine hyperplane \(\{q\in\mathbb R^m:\sum_y q(y)=1\}\), which
	does not pass through the origin; hence their linear span has dimension one
	larger than their affine hull. We call the case
	\(d<m-1\), equivalently \(r<m\), the lower-dimensional setting.
	For the \((d+1)\)-dimensional coordinate space, we index the standard basis vectors by \(0,\ldots,d\) and write \(e_i:=e_i^{(d+1)}\), \(0\le i\le d\).
	
	Recall that the output alphabet is \(\cY=\{1,\ldots,m\}\). For
	\(y\in\cY\), let \(b_y:=e_y^{(m)}\in\mathbb R^m\) denote the standard basis
	vector corresponding to the output symbol \(y\).
	
	In this paper, we impose the following strictly positive matrix assumption.
	\begin{assumption} 
		\label{ass:positive-support}
		Assume that the transition probabilities of \(W\) are strictly positive:
		\[
		p_{\min}:=\min_{x\in\cX,\ y\in\cY} W(y|x)>0.
		\]
		Equivalently, for every \( q\in\cP_W \) and every \(y\in\cY \), 
		\(
		q(y)\ge p_{\min}.
		\)
	\end{assumption}
	\begin{remark}
		Assumption~\ref{ass:positive-support} is used to obtain uniform local
		estimates over the reachable output polytope.  In particular, it yields a
		uniform comparison between KL divergence and squared Euclidean distance and
		prevents the local coordinate variances and multinomial probabilities from
		degenerating near the boundary of the output simplex. 
	\end{remark}
	
	\section{Achievability Bounds}\label{sec:achievability}
	In this section, we prove the achievability bound by constructing codes through an affine-coordinate parametrization of the reachable output polytope. When \(\cP_W\) is lower-dimensional relative to the output simplex, a grid in the full output simplex is not naturally aligned with the reachable output set. We therefore represent \(\cP_W\) by its affine preimage \(K_W\), a full-dimensional coordinate polytope inside the standard simplex, and place a simplex lattice on \(K_W\).
	This ensures that every message point corresponds to a reachable output distribution while preserving a regular lattice structure suitable for cardinality estimates and decoding.
	
	Section~\ref{sec:Simplex-Lattice Message Sets} constructs this simplex-lattice message set and establishes a volume-sensitive cardinality estimate. 
	Section~\ref{sec:Coordinate Decoder} introduces a projection-based nearest-neighbor decoder: the empirical output distribution is first projected onto the reachable affine hull and then expressed in the affine coordinates used for the message set.  Section~\ref{sec:Voronoi-Transfer Reduction} develops the geometric error-reduction argument, showing that any nearest-neighbor decoding error must cross one of finitely many elementary transfer halfspaces. Finally,
	Section~\ref{sec:Low-Rank Gaussian Achievability} combines this reduction with the random-coding construction to derive the finite-blocklength achievability bound.
	
	\subsection{Simplex-Lattice Message Sets} \label{sec:Simplex-Lattice Message Sets}
	Let
	\begin{equation}
		\Delta_d:=\left\{u\in\bbR^{d+1}:u_i\ge0,\ \sum_{i=0}^d u_i=1\right\}
		\label{eq:def-Delta-d}
	\end{equation}
	denote the standard \(d\)-simplex of affine coordinates.
	
	Let
	\begin{equation}
		\cR_W:=\aff(\cP_W)\cap\Delta_{m-1}.
		\label{eq:ambient-output-slice}
	\end{equation}
	This is the portion of the output simplex that lies in the affine hull of the
	reachable output polytope.  Thus, when \(\cP_W\) is not full-dimensional relative to the output simplex 
	\(\Delta_{m-1}\), the natural reference region is \(\cR_W\) rather than \(\Delta_{m-1}\).
	
	Let \(S_W^\star\) be a minimum-volume \(d\)-simplex satisfying
	\[
	\cR_W \subseteq S_W^\star\subseteq \aff(\cP_W).
	\]
	Such a simplex exists because \(\cR_W\) is compact and
	\(d\)-dimensional in \(\aff(\cP_W)\).  Indeed, a minimizing sequence may be
	chosen with uniformly bounded volume; since \(\cR_W\) has nonempty relative
	interior, such a sequence has vertices in a bounded subset of \(\aff(\cP_W)\),
	and compactness yields a convergent subsequence whose limit is an enclosing
	\(d\)-simplex.  Define the volume ratio
	\begin{equation}
		\lambda_W^\star
		:=
		\frac{\vol_d(\cP_W)}
		{\vol_d(S_W^\star)}
		=
		\frac{\vol_d(\cP_W)}
		{
			\inf_{ S:\,\cR_W\subseteq S\subseteq\aff(\cP_W)}
			\vol_d( S)
		}.
		\label{eq:lambdaW}
	\end{equation}
	  All volumes are
	\(d\)-dimensional Euclidean volumes induced on \(\aff(\cP_W)\). When \(\cR_W\) is a \(d\)-simplex, this ratio reduces to
	\[
	\lambda_W^\star
	=
	\frac{\vol_d(\cP_W)}{\vol_d(\cR_W)}.
	\]
	Hence, $\lambda_W^\star$ measures the volume of the reachable
	output polytope normalized by that of the reference simplex.
	Write the vertices of \( S_W^\star\) as \(s_0,\ldots,s_d\). Define the
	affine bijection
	\begin{equation}
		T:\Delta_d\to S_W^\star,
		\qquad
		T(u)=\sum_{i=0}^d u_i s_i.
		\label{eq:T-map}
	\end{equation}
	We use the same notation \(T\) for its affine extension from
	\(\{u\in\bbR^{d+1}:\sum_{i=0}^d u_i=1\}\) onto \(\aff(\cP_W)\), and \(T^{-1}\)
	denotes the inverse of this extension.
	
	The preimage of the reachable output polytope is
	\begin{equation}
		K_W:=T^{-1}(\cP_W)\subseteq\Delta_d.
		\label{eq:KW}
	\end{equation}
	Since \(T\) is an affine bijection from \(\Delta_d\) onto
	\( S_W^\star\), it scales all \(d\)-dimensional volumes by the same
	constant. Therefore,
	\begin{equation}
		\frac{\vol_d(K_W)}{\vol_d(\Delta_d)}
		=
		\frac{\vol_d(\cP_W)}{\vol_d(S_W^\star)}
		=
		\lambda_W^\star .
		\label{eq:KW-volume-ratio}
	\end{equation}
	Clearly, \(K_W\) is a full-dimensional convex polytope contained in \(\Delta_d\).

	For a positive integer \(N\), define the simplex \(N\)-grid, or simplex lattice,
	\begin{equation}
		\cL_N
		:=
		\left\{
		u\in\Delta_d:
		u_i=\frac{a_i}{N},\ a_i\in\mathbb Z_{\ge0},\
		\sum_{i=0}^d a_i=N
		\right\}.
	\end{equation}
	Equivalently, \(\cL_N=\Delta_d\cap N^{-1}\mathbb Z^{d+1}\).  We refer to
	\(N\) as the lattice resolution.
	The message set is then defined as
	\begin{equation}
		\cU_N:=K_W\cap\cL_N.
		\label{eq:UN}
	\end{equation}
	The messages are indexed by lattice points in the affine coordinate representation of the reachable output polytope, rather than by input
	distributions.
	The corresponding set of output distributions is
	\begin{equation}
		\cQ_N:=T(\cU_N)\subseteq\cP_W.
		\label{eq:QN}
	\end{equation}
	Thus each message point \(u\in\cU_N\) is paired with a target output
	distribution \(q_u=T(u)\). Equivalently, \(u=T^{-1}(q_u)\) is the
	affine coordinate representation of \(q_u\) induced by the simplex
	\(S_W^\star\). This construction guarantees reachability: if \(q=T(u)\) with
	\(u\in\cU_N\), then \(u\in K_W=T^{-1}(\cP_W)\), hence \(q\in\cP_W\). Therefore,
	by \eqref{eq:reachable-polytope}, there exists at least one input distribution
	\(p_u\) such that \(p_uW=T(u)\).

	The following lemma estimates the cardinality of the message set.
	
	\begin{lemma}
		\label{lem:lattice-counting}
		As \(N\to\infty\),
		\begin{equation}
			|K_W\cap\cL_N|
			=
			\lambda_W^\star\binom{N+d}{d}
			+
			O_W(N^{d-1}).
			\label{eq:lattice-count-asymp}
		\end{equation}
		Consequently,
		\begin{equation}
			\log |\cU_N|
			=
			d\log N+\log\lambda_W^\star-\log d!+o(1).
			\label{eq:log-cardinality}
		\end{equation}
	\end{lemma}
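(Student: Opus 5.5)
The plan is to reduce the count to a lattice-point count for a full-dimensional convex polytope in $\bbR^d$ and apply the standard Jordan-measure estimate with a boundary-layer error. Concretely, let $\pi:\{u\in\bbR^{d+1}:\sum_i u_i=1\}\to\bbR^d$ drop the coordinate $u_0$. Then $\pi$ is an affine bijection that maps $\Delta_d$ onto the corner simplex $\Sigma_d:=\{v\in\bbR^d:v_i\ge0,\ \sum_{i=1}^d v_i\le1\}$. It maps $\cL_N$ bijectively onto $\Sigma_d\cap N^{-1}\bbZ^d$. It also scales $d$-volumes by a fixed constant, so by \eqref{eq:KW-volume-ratio} the image $K:=\pi(K_W)\subseteq\Sigma_d$ is a convex polytope with nonempty interior and
\[
\vol_d(K)=\lambda_W^\star\vol_d(\Sigma_d)=\frac{\lambda_W^\star}{d!}.
\]
It therefore suffices to show $|K\cap N^{-1}\bbZ^d|=N^d\vol_d(K)+O_W(N^{d-1})$. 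Indeed, $\binom{N+d}{d}=N^d/d!+O(N^{d-1})$ and $\lambda_W^\star\le1$, so the difference between $\lambda_W^\star\binom{N+d}{d}$ and $N^d\lambda_W^\star/d!$ is absorbed into the error term.

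For the main estimate, I will attach to each point $z\in K\cap N^{-1}\bbZ^d$ the half-open cube $C_z:=z+[0,1/N)^d$. These cubes are disjoint, each has volume $N^{-d}$, and each has diameter $\sqrt d/N$. Write $K^{+r}$ for the closed $r$-neighborhood of $K$ and $K^{-r}:=\{x\in K:\operatorname{dist}(x,\bbR^d\setminus K)\ge r\}$, and set $r=\sqrt d/N$. Every cube $C_z$ lies in $K^{+r}$. Conversely, every point of $K^{-r}$ lies in some cube $C_z$ whose lattice corner $z$ is within distance $r$, so that $z\in K$. This gives the sandwich
\[
N^d\vol_d(K^{-r})\le|K\cap N^{-1}\bbZ^d|\le N^d\vol_d(K^{+r}).
\]

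The main step is to bound $\vol_d(K^{+r})-\vol_d(K^{-r})=O(r)$ with a constant depending only on $K$, hence only on $W$. For the outer neighborhood, Steiner's formula for convex bodies gives $\vol_d(K^{+r})=\vol_d(K)+O(r)$ for $r\le1$, with the constant controlled by the surface area of $K$. For the inner set, $K\setminus K^{-r}$ is contained in the $r$-neighborhood of $\partial K$. Since $K$ is a polytope, this set is covered by slabs of width $r$ over the finitely many facets, so its volume is at most $r$ times the sum of the facet areas. Alternatively, one can use that $K$ contains a ball of radius $\rho>0$ around some point $c$, which gives $(1-r/\rho)(K-c)+c\subseteq K^{-r}$. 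Either route yields an $O_W(1/N)$ relative error, which proves \eqref{eq:lattice-count-asymp}.

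Finally, for \eqref{eq:log-cardinality}, I note that $\lambda_W^\star>0$ because $K_W$ is full-dimensional. Hence
\[
|\cU_N|=\frac{\lambda_W^\star N^d}{d!}\bigl(1+O_W(N^{-1})\bigr),
\]
and taking logarithms gives $d\log N+\log\lambda_W^\star-\log d!+O(1/N)$. I expect no serious obstacle. The only care needed is to make the boundary-layer constant depend solely on $W$, via the facets of $K_W$ and the choice of $S_W^\star$. I will also check that the half-open cube convention gives a bijection between cubes and lattice points, with no double counting at shared faces.
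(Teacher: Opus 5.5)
Your proof is correct. It differs from the paper mainly in that you prove the lattice-point estimate yourself rather than citing it. The paper works in the hyperplane $\{\sum_i u_i=1\}$ and views $\cL_N$ as $\Delta_d$ intersected with a translate of $N^{-1}\Lambda$, where $\Lambda=\{z\in\bbZ^{d+1}:\sum_i z_i=0\}$. It checks the normalization by computing $\det\Lambda=\sqrt{d+1}$ and $\vol_d(\Delta_d)=\sqrt{d+1}/d!$, and then invokes Davenport's Lipschitz principle to get $|K_W\cap\cL_N|=N^d\vol_d(K_W)/\det\Lambda+O_W(N^{d-1})$.

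Your chart dropping $u_0$ is in effect the paper's identification through the basis $\{e_i-e_0\}$ of $\Lambda$. The difference is that you let the invariance of the volume ratio \eqref{eq:KW-volume-ratio} under affine bijections fix the normalization, so no determinant computation is needed. You then derive the count from the half-open-cube sandwich $N^d\vol_d(K^{-r})\le|K\cap N^{-1}\bbZ^d|\le N^d\vol_d(K^{+r})$.

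The cited principle is more general, since it needs only a Lipschitz-parametrizable boundary. Your argument uses convexity of $K_W$, and in return it is elementary and gives an explicit constant. The inscribed-ball route you mention actually handles both sides at once. If $B(c,\rho)\subseteq K$, then $c+(1-r/\rho)(K-c)\subseteq K^{-r}$. Since $K+tK=(1+t)K$ for convex $K$, also $K^{+r}\subseteq c+(1+r/\rho)(K-c)$. Hence for $N>\sqrt d/\rho$ the error is at most $\vol_d(K)\bigl[(N+\sqrt d/\rho)^d-N^d\bigr]=O(N^{d-1})$.

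This is cleaner than combining Steiner's formula with the facet covering. For that covering to give the bound ``$r$ times total facet area'', the pieces must be the right prisms $F+[0,r)n_F$, with $n_F$ the inward unit normal. These prisms do cover $K\setminus K^{-r}$: for an interior point, the nearest point of $\partial K$ is the foot of the perpendicular to the nearest facet hyperplane, and that foot lies in the facet.

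You also state explicitly that $\lambda_W^\star>0$, which the paper uses tacitly when taking logarithms.
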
	
	\begin{proof}
		The points of \(\cL_N\) are in one-to-one correspondence with integer vectors
		\((a_0,\ldots,a_d)\in\mathbb Z_{\ge0}^{d+1}\) satisfying
		\(\sum_{i=0}^d a_i=N\). Hence, by the stars-and-bars formula,
		\begin{equation}
			|\cL_N|
			=
			\binom{N+d}{d}
			=
			\frac{N^d}{d!}\left(1+O\left(\frac1N\right)\right).
			\label{eq:stars-and-bars}
		\end{equation}
		
		Let
		\[
		\Lambda
		:=
		\left\{
		z\in\mathbb Z^{d+1}:\sum_{i=0}^d z_i=0
		\right\}
		\]
		be the lattice in the linear span of \(\Delta_d\).  The simplex lattice
		\(\cL_N\) is the intersection of \(\Delta_d\) with a translate of the
		scaled lattice \(N^{-1}\Lambda\).  With respect to the induced
		\(d\)-dimensional Euclidean measure on
		\(\{z:\sum_i z_i=0\}\), the determinant of \(\Lambda\) is \(\sqrt{d+1}\),
		whereas
		\( 
		\vol_d(\Delta_d)=\frac{\sqrt{d+1}}{d!}.
		\)
		Consequently,
		\begin{equation}
			\frac{N^d\vol_d(\Delta_d)}{\det(\Lambda)}
			=
			\frac{N^d}{d!},
			\label{eq:volume_main_term}
		\end{equation}
		which agrees with the leading term of the stars-and-bars count above.

		Since \(K_W\) is a fixed convex polytope with piecewise flat boundary, the
		Lipschitz principle for lattice-point counting \cite{davenport1951principle},
		applied after identifying \(\operatorname{span}(\Delta_d-\Delta_d)\) with
		\(\mathbb R^d\) through a basis of \(\Lambda\), gives
		\[
		|K_W\cap\cL_N|
		=
		\frac{N^d\vol_d(K_W)}{\det(\Lambda)}
		+
		O_W(N^{d-1}).
		\]
		Combining this estimate with \eqref{eq:stars-and-bars} and \eqref{eq:volume_main_term}, we obtain
		\[
		|K_W\cap\cL_N|
		=
		\frac{\vol_d(K_W)}{\vol_d(\Delta_d)}|\cL_N|
		+
		O_W(N^{d-1}).
		\]
		Using \eqref{eq:KW-volume-ratio} and
		\(|\cL_N|=\binom{N+d}{d}\) yields
		\[
		|K_W\cap\cL_N|
		=
		\lambda_W^\star\binom{N+d}{d}
		+
		O_W(N^{d-1}),
		\]
		which proves \eqref{eq:lattice-count-asymp}.
		
		Finally, since
		\[
		\binom{N+d}{d}
		=
		\frac{N^d}{d!}\left(1+O\left(\frac1N\right)\right),
		\]
		we have
		\( 
		|\cU_N|
		=
		\lambda_W^\star\frac{N^d}{d!}
		+
		O_W(N^{d-1}).
		\)
		Taking logarithms completes the proof of \eqref{eq:log-cardinality}.
	\end{proof}
	
	\subsection{Coordinate Decoder}\label{sec:Coordinate Decoder}
	
	In this subsection, we introduce the nearest-neighbor decoder used in our
	achievability bound.
	
	Let \(A:=\aff(\cP_W)\) be the reachable output affine hull, and let
	\(\Pi_A:\bbR^m\to A\) denote the Euclidean projection onto \(A\). 
	We define the affine coordinate map
	\begin{equation}
		H(q):=T^{-1}(\Pi_A(q)),
		\qquad q\in\bbR^m .
		\label{eq:H-projection}
	\end{equation}
	
	Since \(\Pi_A\) is affine and the extended inverse
	\(T^{-1}:A\to\bbR^{d+1}\) is affine, \(H\) is an affine map from
	\(\bbR^m\) to \(\bbR^{d+1}\).
	Moreover, because
	\(\Pi_A (q)=q\) for all \(q\in A\), we have
	\begin{equation}
		H(T(u))=u,
		\qquad \forall u\in\Delta_d .\nonumber
	\end{equation}
	Thus \(H\) \footnote{After the minimizing simplex \( S_W^\star\) is fixed, the projection map \(H\) in \eqref{eq:H-projection} is fixed as well; this dependence is
		suppressed in constants such as \(O_W(\cdot)\) and \(O_{W,\epsilon}(\cdot)\).
	} first projects an empirical output distribution onto the reachable
	affine hull and then expresses it in the simplex coordinates induced by \(T\). 
	
	The projection in \eqref{eq:H-projection} leaves every reachable output
	distribution unchanged. For \(q_u\in\cP_W\), it removes only the component of
	the empirical fluctuation \(\widehat q_n-q_u\) orthogonal to the linear subspace
	parallel to \(A\), while all differences between reachable output laws lie in
	that subspace.

	Write \(L\) for the linear part of \(H\), so that
	\begin{equation}
		H(q)-H(q')=L(q-q'),
		\qquad q,q'\in\bbR^m .\nonumber
	\end{equation}

	For an observed output sequence \(Y^n\), define its empirical output distribution
	by
	\[
	\widehat P_{Y^n}(y)
	:=
	\frac1n\sum_{t=1}^n \mathbf 1\{Y_t=y\},
	\qquad y\in\cY.
	\]
	For brevity, we write \(\widehat q_n:=\widehat P_{Y^n}\).  Its coordinate image is
	\begin{equation}
		\widehat u_n:=H(\widehat q_n)
		=T^{-1}(\Pi_A\widehat q_n)
		\in\bbR^{d+1}.\nonumber
	\end{equation}
	Thus the decoder uses the empirical output distribution only through its
	projection onto the reachable affine hull, expressed in the affine coordinates
	induced by \(T\).

	The decoder chooses the unique nearest neighbor in the affine coordinates.  For
	\(z\in\bbR^{d+1}\), define the nearest-neighbor set
	\[
	\Gamma_N(z)
	:=
	\left\{
	v\in\cU_N:
	\|z-v\|_2=\min_{w\in\cU_N}\|z-w\|_2
	\right\}.
	\]
	Then define
	\begin{equation}
		g_N(z):=
		\begin{cases}
			u, & \text{if } \Gamma_N(z)=\{u\},\\
			\mathtt e, & \text{otherwise}.
		\end{cases}
		\label{eq:nearest-decoder}
	\end{equation}
	For the observed output sequence, the decoder output is
	\[
	\widehat u:=g_N(\widehat u_n),
	\]
	where ties are declared as decoding errors.

	\subsection{Error Reduction for Coordinate Decoder}\label{sec:Voronoi-Transfer Reduction}
	
	We next introduce an error reduction lemma tailored to the decoder introduced in Section~\ref{sec:Coordinate Decoder}.  It plays the role of the
	neighboring error reduction used in the full-dimensional ML analysis
	\cite[Lemma~1]{feng_lower_isit_2025}, but is formulated for Voronoi decision
	regions rather than likelihood-ratio comparisons.  Such
	decision regions are the usual Voronoi regions associated with Euclidean
	nearest-neighbor decoding and lattice quantization; see, e.g.,
	\cite{conway_sloane_sphere_1999,zamir_lattice_2014}.
	For a finite set $\cU_N\subseteq\Delta_d$ and a point
	$u\in\cU_N$, its decision region, or Voronoi
	region, is
	\begin{equation}
		\begin{aligned}
			\cV(u;\cU_N)
			:=
			\bigl\{z\in\bbR^{d+1}:\
			&\|z-u\|_2 < \|z-v\|_2,\\
			&\forall v\in\cU_N,\ v\ne u
			\bigr\}.
		\end{aligned}
		\label{eq:voronoi-region}
	\end{equation}
	We use strict inequalities because ties are counted as errors.
	Thus, if $u$ is transmitted, a decoding error occurs whenever
	$z$ lies outside $\cV(u;\cU_N)$, equivalently whenever
	there exists some $v\in\cU_N$, $v\ne u$, such that
	$\|z-v\|_2\le \|z-u\|_2$. For \(i\ne j\), the inner product
	\(\langle z-u,e_i-e_j\rangle=(z_i-u_i)-(z_j-u_j)\) measures the coordinate
	error along the elementary transfer direction \(e_i-e_j\).

	The following lemma gives an error reduction for the
	affine coordinate message set \(\cU_N\subseteq\cL_N\). 
	In the proof, we only use the fact that differences
	between two points of \( \cL_N\) are zero-sum
	integer vectors, and hence can be decomposed into elementary
	transfer directions $e_i-e_j$.
	
	\begin{lemma} 
		\label{lem:voronoi-transfer}
		Let \(\cL_N\) be the simplex lattice, and let
		\(\cU_N\subseteq\cL_N\) be an arbitrary finite message set.
		
		Fix \(u\in\cU_N\) and let \(z\in\bbR^{d+1}\) be an arbitrary point in the
		coordinate space. Consider the nearest-neighbor decoder \(g_N\) in
		\eqref{eq:nearest-decoder}. Suppose that \(z\) is not decoded uniquely as
		\(u\), i.e., \(g_N(z)\ne u\). Then there exists an elementary transfer
		direction \(e_i-e_j\), with \(0\le i,j\le d\) and \(i\ne j\), such that
		\begin{equation}
			\langle z-u,e_i-e_j\rangle\ge \frac1N .
			\label{eq:transfer-event}
		\end{equation}
		The number of ordered transfer directions is
		\begin{equation}
			R_d:=d(d+1).
			\label{eq:Rd}
		\end{equation}
	\end{lemma}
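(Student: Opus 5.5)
The plan is to reduce the decoding-error event to a single competing lattice point, and then to spread the resulting inner-product inequality over an integer decomposition of the difference vector into elementary transfers.

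\textbf{Step 1: a competitor.} Since \(g_N(z)\ne u\), the point \(z\) lies outside \(\cV(u;\cU_N)\). Hence there exists \(v\in\cU_N\), \(v\ne u\), with \(\|z-v\|_2\le\|z-u\|_2\). Expanding \(\|z-v\|^2=\|z-u\|^2-2\langle z-u,v-u\rangle+\|v-u\|^2\) gives
\[
2\langle z-u,v-u\rangle\ge\|v-u\|_2^2 .
\]

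\textbf{Step 2: integer structure of the difference.} Set \(w:=N(v-u)\). Because both \(u,v\in\cL_N\), the vector \(w\) is a nonzero integer vector with \(\sum_i w_i=0\), that is, \(w\in\Lambda\setminus\{0\}\). Let \(s:=\sum_i w_i^+=\sum_i w_i^-=\|w\|_1/2\), which is a positive integer. I would decompose \(w\) as a sum of exactly \(s\) elementary transfers:
\[
w=\sum_{r=1}^{s}(e_{i_r}-e_{j_r}),
\]
with each \(i_r\) in the positive support of \(w\) and each \(j_r\) in the negative support, so that \(i_r\ne j_r\). Such a decomposition exists by greedily pairing units of positive mass with units of negative mass; an induction on \(s\) makes this rigorous.

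\textbf{Step 3: averaging.} Substituting the decomposition into Step 1 gives
\[
\sum_{r=1}^{s}\langle z-u,e_{i_r}-e_{j_r}\rangle=N\langle z-u,v-u\rangle\ge\frac{N}{2}\|v-u\|^2=\frac{\|w\|_2^2}{2N}.
\]
So some index \(r\) satisfies
\[
\langle z-u,e_{i_r}-e_{j_r}\rangle\ge\frac{\|w\|_2^2}{2Ns}.
\]
The key elementary inequality is that, for a nonzero integer vector, \(w_i^2\ge|w_i|\) for every \(i\). Hence \(\|w\|_2^2\ge\|w\|_1=2s\), and the right-hand side is at least \(1/N\), which is \eqref{eq:transfer-event}.

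\textbf{Step 4: counting.} The ordered pairs \((i,j)\) with \(0\le i,j\le d\) and \(i\ne j\) number \((d+1)d=R_d\).

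The only step needing care is Step 3. A naive bound through the number of transfers, without the comparison \(\|w\|_2^2\ge\|w\|_1\), would lose a factor depending on \(\|w\|\). That integer inequality is exactly what makes the threshold \(1/N\) uniform over all competitors \(v\), including non-adjacent ones. The argument never uses any property of \(\cU_N\) beyond \(\cU_N\subseteq\cL_N\), so it holds for an arbitrary message set, as claimed.
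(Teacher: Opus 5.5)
Your proposal is correct and follows essentially the same route as the paper. Both use a competitor \(v\) to get the halfspace inequality, decompose \(N(v-u)\) into \(\tfrac12\|N(v-u)\|_1\) elementary transfers, and invoke the integer bound \(\|k\|_2^2\ge\|k\|_1\). The only difference is presentational: the paper argues by contradiction, whereas you argue by direct averaging.
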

	\begin{remark}
		Lemma~\ref{lem:voronoi-transfer} replaces the neighboring-error reduction used
		in the full-dimensional ML analysis of \cite[Lemma~1]{feng_lower_isit_2025}.
		While both reductions localize the union over competing messages to finitely
		many elementary directions, the objects being localized are different.  The
		earlier reduction is formulated for likelihood-ratio comparisons between
		neighboring output distributions.  In the present construction, the message set, equivalently the set of target output laws, is represented as a simplex lattice in the affine coordinates \(u=T^{-1}(q)\), and
		the decoder is Euclidean nearest-neighbor decoding in these coordinates.  Hence
		the relevant error event is a Voronoi event rather than a likelihood-ratio
		event.
		The lemma shows that, although a Voronoi error may be caused by an arbitrary
		competing point \(v\in\cU_N\), the difference \(v-u\) is a zero-sum integer
		vector scaled by \(1/N\) and can be decomposed into elementary transfers
		\(e_i-e_j\).  Therefore every nearest-neighbor error crosses one of only
		\(d(d+1)\) transfer halfspaces.  This reduction allows the Gaussian analysis in
		Section~\ref{sec:refined-average-gaussian} to use one-dimensional
		affine coordinate fluctuations instead of likelihood ratios.
	\end{remark}
	
	\begin{proof}
		Since \(g_N(z)\ne u\), by the definition of \(g_N\) there exists
		\(v\in\cU_N\), \(v\ne u\), such that
		\begin{equation}
			\|z-v\|_2\le \|z-u\|_2 .
			\label{eq:nn-error-condition}
		\end{equation}
		Since \(\cU_N\subseteq\cL_N\), both \(u\) and \(v\) are scaled simplex
		lattice points. Hence there exists a nonzero vector
		\[
		k=(k_0,\ldots,k_d)\in\bbZ^{d+1},
		\qquad
		\sum_{i=0}^d k_i=0,
		\]
		such that
		\(
		v-u=\frac{k}{N}.
		\)
		Let
		\(
		a:=z-u .
		\)
		Then \eqref{eq:nn-error-condition} is equivalent to
		\(
		\left\|a-\frac{k}{N}\right\|_2^2\le \|a\|_2^2.
		\)
		Expanding the square gives
		\begin{equation}
			\langle a,k\rangle\ge \frac{\|k\|_2^2}{2N}.
			\label{eq:nearest-halfspace}
		\end{equation}
		
		Write
		\begin{align*}
			k&=k^+-k^-,\\
			k_i^+&=\max\{k_i,0\},\qquad
			k_i^-=\max\{-k_i,0\}.
		\end{align*}
		Since $\sum_i k_i=0$, the total positive and negative masses
		are equal. Define
		\[
		m(k):=\sum_i k_i^+=\sum_i k_i^-.
		\]
		The vector $k$ can be decomposed into $m(k)$ elementary
		transfer directions:
		\begin{equation}
			k=\sum_{\ell=1}^{m(k)}(e_{i_\ell}-e_{j_\ell}),
			\label{eq:k-transfer-decomposition}
		\end{equation}
		where the indices $i_\ell$ are chosen from the positive
		coordinates of $k$ and the indices $j_\ell$ are chosen from
		the negative coordinates of $k$, with multiplicities.
		
		Assume, for contradiction, that no elementary transfer
		direction satisfies \eqref{eq:transfer-event}. Then
		\[
		\langle a,e_i-e_j\rangle<\frac1N,
		\qquad
		\forall i\ne j.
		\]
		Using the decomposition \eqref{eq:k-transfer-decomposition},
		we obtain
		\begin{equation}
			\langle a,k\rangle
			=
			\sum_{\ell=1}^{m(k)}
			\langle a,e_{i_\ell}-e_{j_\ell}\rangle
			<
			\frac{m(k)}{N}.
			\label{eq:ak-upper}
		\end{equation}
		
		On the other hand, since $k_i$ is integer-valued, we have
		$k_i^2\ge k_i^+$ when $k_i>0$ and
		$k_i^2\ge k_i^-$ when $k_i<0$. Therefore,
		\[
		\|k\|_2^2
		=
		\sum_i k_i^2
		\ge
		\sum_i k_i^+ + \sum_i k_i^-
		=
		2m(k).
		\]
		Thus,
		\begin{equation}
			\frac{m(k)}{N}
			\le
			\frac{\|k\|_2^2}{2N}.
			\label{eq:mass-norm-comparison}
		\end{equation}
		Combining \eqref{eq:nearest-halfspace},
		\eqref{eq:ak-upper}, and \eqref{eq:mass-norm-comparison}
		yields the contradiction
		\[
		\frac{m(k)}{N}
		>
		\langle a,k\rangle
		\ge
		\frac{\|k\|_2^2}{2N}
		\ge
		\frac{m(k)}{N}.
		\]
		Therefore, at least one ordered pair $(i,j)$ with $i\ne j$
		must satisfy
		\( 
		\langle a,e_i-e_j\rangle\ge\frac1N.
		\)
		Since $a=z-u$, this proves \eqref{eq:transfer-event}. The number of ordered
		transfer directions is the number of ordered pairs \((i,j)\) with
		\(0\le i,j\le d\) and \(i\ne j\), namely \((d+1)d\).  This proves
		\eqref{eq:Rd}.
	\end{proof}
	 	
	\subsection{Finite-Blocklength Achievability}\label{sec:Low-Rank Gaussian Achievability}
	
	We briefly clarify the coordinate form of the decoder.  A message
	\(u\in\cU_N\subseteq\Delta_d\) is associated with the reachable output law
	\(q_u=T(u)\in\cP_W\).  Given \(Y^n\), the decoder forms the empirical output
	distribution \(\widehat q_n\), projects it onto
	\(A=\aff(\cP_W)\), and maps it back to the coordinate space:
	\[
	\widehat u_n
	=
	H(\widehat q_n)
	=
	T^{-1}(\Pi_A\widehat q_n).
	\]
	Since \(T(u)\in A\), the definition of \(H\) gives \(H(T(u))=u\). Hence this
	coordinate representation is consistent with the message parametrization.  The decoder then chooses the Euclidean nearest
	neighbor of \(\widehat u_n\) in \(\cU_N\), with ties declared as errors.
	
	We now give our main result in this section.
	\begin{theorem}
		\label{thm:finite-achievability}
		For every \(n\ge1\) and every \(N\ge1\) such that
		\(\cU_N\ne\varnothing\), there exists a code $\cC_n$ with message set
		\(\cU_N\) whose average error probability satisfies
		\begin{align}
			P_e 
			&\le
			\frac1{|\cU_N|}
			\sum_{u\in\cU_N}
			\sum_{\substack{0\le i,j\le d\\i\ne j}}
			\bbP_u\left[
			\langle\widehat u_n-u,e_i-e_j\rangle\ge\frac1N
			\right],
			\label{eq:finite-achievability-union}
		\end{align}
		where \(\bbP_u\) denotes the law induced by \(X^n\sim p_u^{\otimes n}\), the DMC,
		and the random permutation block.
	\end{theorem}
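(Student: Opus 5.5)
We construct the code explicitly and then bound its error pointwise using Lemma~\ref{lem:voronoi-transfer}; no random coding is needed at this stage. For each message \(u\in\cU_N\), reachability of \(T(u)\) gives an input distribution \(p_u\in\Delta_{k-1}\) with \(p_uW=T(u)\). We fix one such \(p_u\) for each \(u\).

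\textbf{Encoder.} We use a randomized encoder: to send \(u\), transmit \(X^n\sim p_u^{\otimes n}\). This is exactly the law \(\bbP_u\) in the statement. The random encoder is then derandomized by the standard averaging argument. The average error probability of the random-encoder scheme equals \(\mathbb E_{f_n}[P_e(f_n)]\) over the ensemble of deterministic encoders \(f_n(u)\sim p_u^{\otimes n}\), chosen independently across \(u\), with the decoder held fixed. Hence some deterministic encoder \(f_n\) attains \(P_e\) no larger than the ensemble average, which is the claimed existence.

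\textbf{Decoder.} The decoder is \(\widehat u=g_N(H(\widehat q_n))\), as in Section~\ref{sec:Coordinate Decoder}. It depends on \(Y^n\) only through the empirical distribution, so it is a legitimate decoder for the permuted output. The decoder does not depend on the encoder realization, which is exactly what the averaging step above requires.

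\textbf{Error bound.} Given message \(u\), an error means \(g_N(\widehat u_n)\ne u\), with ties counted as errors. Apply Lemma~\ref{lem:voronoi-transfer} with \(z=\widehat u_n\). On the error event, some ordered pair \((i,j)\) with \(i\ne j\) satisfies \(\langle \widehat u_n-u,e_i-e_j\rangle\ge 1/N\). So
\[
\{g_N(\widehat u_n)\ne u\}\subseteq\bigcup_{i\ne j}\Bigl\{\langle\widehat u_n-u,e_i-e_j\rangle\ge \tfrac1N\Bigr\}
\]
holds pointwise. The union bound under \(\bbP_u\) gives the per-message bound. Averaging over the uniform message \(u\in\cU_N\) yields \eqref{eq:finite-achievability-union}.

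\textbf{Main subtlety.} The only real point of care is the derandomization step. We must make clear that \(P_e\) for the random-coding ensemble is the average of \(\bbP_u\)-error probabilities. The DMC and the permutation are independent of the codeword, so conditioning on \(U=u\) and averaging over the i.i.d.\ codeword \(X^n\sim p_u^{\otimes n}\) recovers precisely \(\bbP_u\). Linearity of expectation then transfers the bound to some deterministic codebook. The nonemptiness of \(\cU_N\) is needed only so that the message set and the nearest-neighbor decoder are well defined.
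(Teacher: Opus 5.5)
Your proposal is correct and follows essentially the same route as the paper: choose $p_u$ with $p_uW=T(u)$, draw $X^n(u)\sim p_u^{\otimes n}$, keep the codebook-independent decoder $g_N\circ H$, apply Lemma~\ref{lem:voronoi-transfer} pointwise with $z=\widehat u_n$, take a union bound over the $d(d+1)$ transfer directions, and extract a deterministic code by averaging over the ensemble. One wording point: your opening claim that ``no random coding is needed'' contradicts your own randomized-encoder-plus-averaging step, which is exactly the random-coding argument the paper uses.
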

	\begin{proof}
		For each \(u\in\cU_N\), since \(T(u)\in\cP_W\), choose
		\(p_u\in\Delta_{k-1}\) satisfying
		\[
		p_uW=T(u)=:q_u .
		\]
		
		Generate a random codebook as follows.  For each \(u\in\cU_N\), draw the
		codeword \(X^n(u)\) independently according to \(p_u^{\otimes n}\). Consider
		the error probability conditioned on message \(u\), averaged over this random
		codebook and over the channel.  After averaging over the random choice of
		\(X^n(u)\), the DMC output satisfies \(Z^n\sim q_u^{\otimes n}\).  The permutation block is uniform and independent; because
		\(q_u^{\otimes n}\) is permutation-invariant, the output also satisfies
		\(Y^n\sim q_u^{\otimes n}\).  Thus, under the averaged law \(\bbP_u\),
		\(Y_1,\ldots,Y_n\) are i.i.d. according to \(q_u\)
		\cite{makur_coding_2020}.
		
		The nearest-neighbor decoder computes \(\widehat q_n=\widehat P_{Y^n}\) and
		\(\widehat u_n=H(\widehat q_n)\), and then applies \(g_N\).  Under message
		\(u\), the error event is \(\{g_N(\widehat u_n)\ne u\}\).  Applying
		Lemma~\ref{lem:voronoi-transfer} to each realization with \(z=\widehat u_n\)
		gives
		\[
		\{g_N(\widehat u_n)\ne u\}
		\subseteq
		\bigcup_{\substack{0\le i,j\le d\\i\ne j}}
		\left\{
		\langle\widehat u_n-u,e_i-e_j\rangle\ge\frac1N
		\right\}.
		\]
		Averaging over the uniformly distributed message on \(\cU_N\), over the random codebook,
		and over the channel, and then applying the union bound, we obtain
		\[
		\mathbb E_{\mathcal C} P_e(\mathcal C)
		\le
		\frac1{|\cU_N|}
		\sum_{u\in\cU_N}
		\sum_{\substack{0\le i,j\le d\\i\ne j}}
		\bbP_u\left[
		\langle\widehat u_n-u,e_i-e_j\rangle\ge\frac1N
		\right],
		\]
		where \(\mathbb E_{\mathcal C}\) denotes expectation with respect to the
		random codebook ensemble.  Therefore, there exists a code \(\cC_n\), whose average error probability is no larger than
		the right-hand side above.
	\end{proof}

	\section{Converse Bounds}
	\label{sec:converse-covering}
	
	In this section, we prove a fixed-error converse bound. The proof follows the
	meta-converse and \textit{information-spectrum} approaches in
	\cite{polyanskiy_channel_2010,wang_simple_2009,polyanskiy_saddle_2013,tomamichel_tight_2013}:
	the channel-coding problem is reduced to binary hypothesis testing, and the
	auxiliary output distribution is constructed from a divergence covering of the
	reachable output polytope \(\cP_W\). The key point is that, in the
	fixed-error setting, a lower-dimensional covering alone is not sufficient:
	one must also control the binary test between the true noisy permutation
	transition kernel and the nearby product auxiliary distributions induced by
	the covering. We therefore retain the noisy permutation transition kernel in
	the hypothesis test defining \(\beta_\alpha\) and prove a uniform local
	hypothesis-testing bound. This bound shows that, when the covering radius is
	chosen as \(a^2/n\), the associated local testing term remains bounded
	uniformly in \(n\). Consequently, the \(n\)-dependent part of the converse is
	governed by the divergence covering number of \(\cP_W\), while the local testing and meta-converse terms are absorbed into \(O_{W,\epsilon,a}(1)\). Divergence covering and related minimax redundancy
	ideas are closely connected to
	\cite{yang_information-theoretic_1999,Jennifer_2022}.

	\subsection{Divergence Covering}
	For \(\mathcal A\subseteq\Delta_{m-1}\) and \(\rho>0\), define the KL-covering number
	\[
	N_D(\mathcal A,\rho)
	:=
	\min\left\{
	|\mathcal G|:\mathcal G\subseteq\mathcal A,\ 
	\sup_{p\in\mathcal A}\inf_{q\in\mathcal G}D(p\|q)\le \rho
	\right\}.
	\]
	
	\begin{lemma}[Subspace divergence covering]
		\label{lem:subspace-div-covering}
		Under Assumption~\ref{ass:positive-support}, let
		\(d=\dim\cP_W\ge1\).  There exist constants \(C_W<\infty\) and
		\(r_0>0\) such that, for every \(0<r<r_0\),
		\begin{equation}
			\log N_D(\cP_W,r)
			\le
			d\log\frac{1}{\sqrt r}
			+
			C_W .
			\label{eq:subspace-covering-general}
		\end{equation}
	\end{lemma}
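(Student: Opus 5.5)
The plan is to reduce KL covering to Euclidean covering of the $d$-dimensional polytope $\cP_W$ inside its affine hull $A=\aff(\cP_W)$. The reduction uses the uniform comparison between KL divergence and squared Euclidean distance supplied by Assumption~\ref{ass:positive-support}.

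\textbf{Step 1 (KL versus Euclidean distance).} For $p,q\in\cP_W$ we have $q(y)\ge p_{\min}$ for all $y$. The standard bound $D(p\|q)\le \chi^2(p\|q)\log e$, where $\chi^2(p\|q)=\sum_y (p(y)-q(y))^2/q(y)$, then gives
\[
D(p\|q)\le \frac{\log e}{p_{\min}}\,\|p-q\|_2^2 .
\]
Hence any Euclidean $\delta$-net $\mathcal G\subseteq\cP_W$ of $\cP_W$ is a KL $r$-net as soon as $\delta^2\log e/p_{\min}\le r$. It suffices to take $\delta:=\sqrt{r\,p_{\min}/\log e}$. Note that the centers must lie in $\cP_W$, because the definition of $N_D$ requires $\mathcal G\subseteq\mathcal A$.

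\textbf{Step 2 (Euclidean covering of a $d$-dimensional convex body).} I would bound the size of a $\delta$-net of $\cP_W$ with centers in $\cP_W$ by a volumetric packing argument inside $A$. Take a maximal $\delta$-separated subset $\mathcal G\subseteq\cP_W$. By maximality it is a $\delta$-net. The relative balls of radius $\delta/2$ around its points are disjoint and contained in the $\delta/2$-neighborhood of $\cP_W$ within $A$.

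For $\delta\le 1$, the Steiner formula (or simply inclusion in a ball of radius $\operatorname{diam}(\cP_W)+1$) bounds the volume of this neighborhood by a constant $V_W$. Therefore
\[
|\mathcal G|\le \frac{V_W}{\omega_d(\delta/2)^d},
\]
where $\omega_d$ is the volume of the unit $d$-ball. An alternative is to use the simplex lattice $T(\cL_N)$ from Section~\ref{sec:achievability}, but the maximal-packing argument avoids the issue of lattice points falling outside $\cP_W$.

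\textbf{Step 3 (assembling).} Combining the two steps gives
\[
\log N_D(\cP_W,r)\le \log\frac{2^dV_W}{\omega_d}+d\log\frac1\delta
= d\log\frac1{\sqrt r}+\underbrace{\log\frac{2^dV_W}{\omega_d}+\frac d2\log\frac{\log e}{p_{\min}}}_{=:C_W}.
\]
We choose $r_0$ so that $\delta\le1$, that is, $r_0:=\log e/p_{\min}$; any smaller positive value also works. The constant $C_W$ depends only on $W$, through $d$, $p_{\min}$ and $\operatorname{diam}(\cP_W)$.

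The main point requiring care is Step 1. The $\chi^2$ bound must be applied with the denominator $q$ being the covering center, which lies in $\cP_W$ and hence is bounded below by $p_{\min}$. This orientation matches $D(p\|q)$ in the definition of $N_D$, so no difficulty arises near the boundary of the output simplex. The rest is routine, with all volumes taken as $d$-dimensional within $A$. This is what yields the exponent $d$ rather than $m-1$.
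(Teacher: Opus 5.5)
Your proof is correct, but it takes a different route from the paper's. The paper gives no geometric argument of its own. It writes $\cP_W$ as the image of the input simplex under $W$ and invokes the subspace-covering reduction of \cite[Proposition~2]{tang_capacity_2023}, which bounds $N_D(\cP_W,r)$ by the KL covering number of a $d$-dimensional simplex up to a $W$-dependent factor. It then applies the simplex KL-covering bound \cite[Theorem~4]{tang_capacity_2023} to get $N_D(\cP_W,r)\le C'_W r^{-d/2}$.

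You instead use Assumption~\ref{ass:positive-support} from the start. The $\chi^2$ bound $D(p\|q)\le(\log e/p_{\min})\|p-q\|_2^2$ holds because the center $q\in\cP_W$ has all masses at least $p_{\min}$, and it turns KL balls into Euclidean balls. A maximal $\delta$-packing inside $\aff(\cP_W)$ then gives the $d$-dimensional count, with centers automatically in $\cP_W$ as the definition of $N_D$ requires.

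Your argument is self-contained and gives explicit $r_0$ and $C_W$. The price is that $C_W$ contains a $\frac d2\log(\log e/p_{\min})$ term, so the bound is tied to strict positivity and degrades as $p_{\min}\to0$. The cited results are built for KL covering of the whole simplex, including near its boundary where your quadratic comparison fails, so the paper's route depends less on $p_{\min}$. The cost is that it outsources the argument and leaves implicit how a simplex cover becomes a cover of $\cP_W$ with centers in $\cP_W$. Since the lemma is stated under Assumption~\ref{ass:positive-support}, your direct argument is fully sufficient and arguably more transparent.
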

	
	\begin{proof}
		The reachable output polytope can be written as
		\[
		\cP_W=\{pW:p\in\Delta_{|\cX|-1}\}.
		\]
		Thus \(\cP_W\) is the image of an input simplex under the stochastic matrix
		\(W\).  Since \(\dim\cP_W=d\), the rank of this affine image is \(d+1\).
		By the subspace covering argument of
		\cite[Proposition~2]{tang_capacity_2023}, the KL covering number of
		\(\cP_W\) is bounded by the KL covering number of a \(d\)-dimensional
		simplex, up to a constant depending only on \(W\).  Combining this with the
		simplex KL-covering bound in
		\cite[Theorem~4]{tang_capacity_2023} gives
		\[
		N_D(\cP_W,r)
		\le
		C'_W r^{-d/2}
		\]
		for all sufficiently small \(r\).  Taking logarithms proves
		\eqref{eq:subspace-covering-general}.
	\end{proof}
	
	\subsection{Local testing and Meta-Converse}
	\label{sec:constant-remainder-converse}

	In this subsection, we prove a local testing bound and a
	meta-converse tailored to the noisy permutation channel.  The testing estimate
	bounds the type-II error between the noisy permutation transition kernel
	\(P_{Y^n|x^n}^{\rm perm}\) and a nearby product output distribution
	\(\bar q^{\otimes n}\).  We first introduce the notation needed for this bound.
	
	For a deterministic input sequence \(x^n\in\mathcal X^n\), let
	\(\pi_{x^n}\in\Delta_{|\mathcal X|-1}\) denote its input type:
	\[
	\pi_{x^n}(x)
	:=
	\frac1n\sum_{t=1}^n \mathbf 1\{x_t=x\},
	\qquad x\in\mathcal X .
	\]
	Define the corresponding average output distribution
	\[
	q_{x^n}
	:=
	\pi_{x^n}W
	=
	\frac1n\sum_{t=1}^n W(\cdot|x_t)
	\in\mathcal P_W .
	\]
	
	Let \(P_{Y^n|x^n}^{\rm perm}\) denote the output distribution of the
	noisy permutation channel when the deterministic input sequence is
	\(x^n\). Equivalently, if \(Z_1,\ldots,Z_n\) are conditionally
	independent given \(x^n\), with
	\[
	Z_t\sim W(\cdot|x_t),
	\]
	and \(\sigma\) is an independent uniformly random permutation of
	\(\{1,\ldots,n\}\), then
	\[
	Y_{\sigma(t)}=Z_t, \qquad t=1,\ldots,n,
	\]
	and \(P_{Y^n|x^n}^{\rm perm}\) is the law of \(Y^n\).
	
	Recall that \(N(y^n)\) denotes the output count vector of \(y^n\).
	For \(x^n\in\mathcal X^n\), define
	\[
	q_{x^n}:=\pi_{x^n}W,
	\qquad
	P_{x^n}:=P_{Y^n|x^n}^{\rm perm}.
	\]
	
	We will use the following lemmas.
	\begin{lemma} 
		\label{lem:count-space-likelihood-ratio}
		Let \(x^n\in\cX^n\) be deterministic.  Let \(P_{x^n}^{\rm cnt}\) denote the law
		of \(N(Y^n)\) under \(P_{x^n}\).  For any distribution \(r\) on \(\cY\) with
		\(r(y)>0\) for all \(y\), and every \(y^n\in\cY^n\) with
		\(N(y^n)=\nu\),
		\begin{equation}
			\log\frac{dP_{x^n}}{dr^{\otimes n}}(y^n)
			=
			\log
			\frac{
				P_{x^n}^{\rm cnt}(\nu)
			}{
				\operatorname{Mult}(n,r)(\nu)
			}.
			\label{eq:count-space-likelihood-ratio}
		\end{equation}
	\end{lemma}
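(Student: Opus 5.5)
The idea is that both \(P_{x^n}\) and \(r^{\otimes n}\) are exchangeable laws on \(\cY^n\). Each therefore spreads its mass uniformly over every type class \(\{y^n:N(y^n)=\nu\}\), which has \(\binom{n}{\nu}:=n!/\prod_y \nu_y!\) elements. The likelihood ratio is then constant on type classes and equals the ratio of the count-vector laws. I will verify the uniform-spreading property for each law separately and then divide.

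\emph{Step 1: the product measure.} For \(y^n\) with \(N(y^n)=\nu\), we have \(r^{\otimes n}(y^n)=\prod_y r(y)^{\nu_y}\), which depends only on \(\nu\). Summing over the type class gives
\[
\operatorname{Mult}(n,r)(\nu)=\binom{n}{\nu}\prod_y r(y)^{\nu_y}.
\]
Hence \(r^{\otimes n}(y^n)=\operatorname{Mult}(n,r)(\nu)/\binom{n}{\nu}\). Since \(r(y)>0\) for all \(y\), this quantity is strictly positive for every \(y^n\). The Radon--Nikodym derivative is therefore the ordinary ratio of probability mass functions, and absolute continuity is automatic.

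\emph{Step 2: the permutation kernel.} For a fixed permutation \(\sigma\), the map \(z^n\mapsto y^n\) with \(y_{\sigma(t)}=z_t\) preserves the count vector, so \(N(Y^n)=N(Z^n)\) surely. Next I show that \(P_{x^n}\) is invariant under any fixed coordinate permutation \(\tau\). If \(Y^n\) is obtained from \(Z^n\) via the uniform \(\sigma\), then the permuted sequence is obtained from \(Z^n\) via the composition of \(\tau\) and \(\sigma\). This composition is again uniform on \(\cS_n\) and independent of \(Z^n\), so the permuted sequence has law \(P_{x^n}\).

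The symmetric group acts transitively on each type class. Invariance therefore forces \(P_{x^n}(y^n)\) to be constant on \(\{N=\nu\}\). Since the class probability is \(P_{x^n}^{\rm cnt}(\nu)\), we get
\[
P_{x^n}(y^n)=\frac{P^{\rm cnt}_{x^n}(\nu)}{\binom{n}{\nu}}.
\]

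\emph{Step 3: divide.} Dividing the expressions from Steps 1 and 2, the common factor \(\binom{n}{\nu}\) cancels, which gives \eqref{eq:count-space-likelihood-ratio}. Here \(P^{\rm cnt}_{x^n}(\nu)\) may be zero, in which case both sides equal \(-\infty\) consistently.

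There is no real obstacle. The only point needing care is the invariance argument in Step 2: one must check that composing the uniform \(\sigma\) with the fixed \(\tau\) keeps it uniform and independent of \(Z^n\). This is immediate from the group structure of \(\cS_n\).
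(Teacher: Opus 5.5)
Your proposal is correct and follows essentially the same route as the paper: both laws are exchangeable, so each is uniform on every type class $\{y^n : N(y^n)=\nu\}$, and the common class size cancels in the ratio. Your Step 2 simply makes explicit the exchangeability of the permutation kernel (invariance under the uniform composition $\tau\circ\sigma$, plus transitivity of $\mathcal{S}_n$ on type classes), which the paper asserts without proof.
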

	
	\begin{proof}
		Both \(P_{x^n}\) and \(r^{\otimes n}\) are exchangeable, and hence both are
		constant on each output type class
		\[
		\mathcal T_\nu:=\{y^n:N(y^n)=\nu\}.
		\]
		For \(y^n\in\mathcal T_\nu\),
		\[
		P_{x^n}(y^n)=\frac{P_{x^n}^{\rm cnt}(\nu)}{|\mathcal T_\nu|},
		\qquad
		r^{\otimes n}(y^n)
		=
		\frac{\operatorname{Mult}(n,r)(\nu)}{|\mathcal T_\nu|}.
		\]
		Taking the ratio proves the claim.
	\end{proof}

	\begin{lemma}
		\label{lem:uniform-count-local}
		Assume \(p_{\min}>0\). Let \(P_1,\ldots,P_n\) be distributions on
		\(\mathcal Y=\{1,\ldots,m\}\) with \(P_s(y)\ge p_{\min}\) for all \(s\) and
		all \(y\). Let \(b_y\) denote the \(y\)-th standard basis vector in
		\(\mathbb R^m\), and let \(B_s\in\{b_1,\ldots,b_m\}\) be independent random
		vectors with
		\[
		\mathbb P[B_s=b_y]=P_s(y).
		\]
		Set
		\[
		S_n:=\sum_{s=1}^n B_s,\qquad
		\bar P_n:=\frac1n\sum_{s=1}^n P_s .
		\]
		Then the following uniform estimates hold.
		
		\begin{enumerate}
			\item[(i)]
			For every \(\alpha\in(0,1)\), there exists
			\(K_\alpha<\infty\), depending only on
			\((p_{\min},m,\alpha)\), such that
			\[
			\mathbb P\left[\|S_n-n\bar P_n\|_2\le K_\alpha\sqrt n\right]
			\ge 1-\frac{\alpha}{2}.
			\]
			
			\item[(ii)]
			There exists \(c_2<\infty\), depending only on
			\((p_{\min},m)\), such that
			\[
			\sup_{\nu}\mathbb P[S_n=\nu]
			\le c_2 n^{-(m-1)/2},
			\]
			where the supremum is over all count vectors
			\(\nu\in\mathbb Z_{\ge0}^m\) satisfying
			\(\sum_{y=1}^m\nu_y=n\).
			
			\item[(iii)]
			For every \(C_0<\infty\), there exist constants
			\(c_1>0\) and \(n_0<\infty\), depending only on
			\((p_{\min},m,C_0)\), such that for all \(n\ge n_0\), all
			\(q\in\Delta_{m-1}\) with \(q(y)\ge p_{\min}\), and all count vectors
			\(\nu\in\mathbb Z_{\ge0}^m\) satisfying
			\[
			\sum_{y=1}^m\nu_y=n,
			\qquad
			\|\nu-nq\|_2\le C_0\sqrt n,
			\]
			we have
			\[
			\operatorname{Mult}(n,q)(\nu)
			\ge c_1 n^{-(m-1)/2}.
			\]
		\end{enumerate}
	\end{lemma}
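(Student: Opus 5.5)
We treat the three parts separately; each is a standard estimate for sums of independent, non-identically distributed categorical vectors, and the task is to keep every constant uniform in the individual laws \(P_s\).

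\textbf{Part (i).} We use Chebyshev's inequality. The centered summands \(B_s-P_s\) are independent with mean zero, and
\[
\mathbb E\|B_s-P_s\|_2^2=1-\|P_s\|_2^2\le 1 .
\]
Hence \(\mathbb E\|S_n-n\bar P_n\|_2^2\le n\), and Markov's inequality applied to the squared norm gives
\[
\mathbb P\bigl[\|S_n-n\bar P_n\|_2>K\sqrt n\bigr]\le K^{-2}.
\]
Choosing \(K_\alpha:=\sqrt{2/\alpha}\) proves (i); this constant does not even depend on \(p_{\min}\).

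\textbf{Part (ii).} This is the main obstacle: a uniform local limit bound for non-identical summands. We work in the lattice \(\{\nu\in\mathbb Z^m:\sum_y\nu_y=n\}\), which is a translate of a rank-\((m-1)\) lattice, and use Fourier inversion on the torus. Drop the last coordinate and write \(\phi_s(\theta)=\sum_y P_s(y)e^{i\theta_y}\) with \(\theta_m=0\), for \(\theta\in[-\pi,\pi]^{m-1}\). Then
\[
\mathbb P[S_n=\nu]\le(2\pi)^{-(m-1)}\int\prod_s|\phi_s(\theta)|\,d\theta .
\]
Since \(P_s(y)\ge p_{\min}\), the following bound holds uniformly in \(s\):
\[
|\phi_s(\theta)|^2=1-2\sum_{y<y'}P_s(y)P_s(y')\bigl(1-\cos(\theta_y-\theta_{y'})\bigr)\le 1-c\,\|\theta\|_{\rm circ}^2 ,
\]
where \(\|\cdot\|_{\rm circ}\) is the distance on the torus. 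Here we take the pair \(\{y,m\}\) to control each \(\theta_y\), using \(1-\cos t\ge 2t^2/\pi^2\) on \([-\pi,\pi]\). Therefore \(\prod_s|\phi_s|\le e^{-cn\|\theta\|^2/2}\), and the integral is at most \(C n^{-(m-1)/2}\) with \(C\) depending only on \((p_{\min},m)\).

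A simpler alternative is also available. Write each \(P_s=p_{\min}\mathbf 1+(1-mp_{\min})R_s\), so that each \(B_s\) is, with probability \(mp_{\min}\), drawn uniformly from the alphabet. Conditioning on which indices are uniform reduces (ii) to the i.i.d. uniform multinomial with a binomially distributed count, because convolving with an independent vector does not increase the maximal atom. The standard multinomial bound \(\max_\nu\operatorname{Mult}(J,\mathrm{unif})(\nu)\le CJ^{-(m-1)/2}\) then applies, and a Chernoff bound on \(J\ge mp_{\min}n/2\) finishes the argument. We would present this coupling version, since it avoids the Fourier computation.

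\textbf{Part (iii).} We apply Stirling's formula directly. Write
\[
\operatorname{Mult}(n,q)(\nu)=\frac{n!}{\prod\nu_y!}\prod q_y^{\nu_y}.
\]
Since \(\nu_y\ge nq_y-C_0\sqrt n\ge np_{\min}/2\) for \(n\ge n_0\), every \(\nu_y\) is large, and Stirling gives
\[
\log\operatorname{Mult}(n,q)(\nu)=-nD(\nu/n\|q)-\tfrac{m-1}{2}\log(2\pi n)-\tfrac12\sum_y\log(\nu_y/n)+O(1/n).
\]
Using \(D(p\|q)\le\chi^2(p\|q)\le\|p-q\|_2^2/p_{\min}\), we get \(nD\le C_0^2/p_{\min}\). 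The term \(-\tfrac12\sum_y\log(\nu_y/n)\) is nonnegative. Hence \(\operatorname{Mult}(n,q)(\nu)\ge c_1n^{-(m-1)/2}\) with
\[
c_1=e^{-C_0^2/p_{\min}}(2\pi)^{-(m-1)/2}/2 .
\]
All error terms are uniform in \(q\), because they depend only on the lower bound \(\nu_y\ge np_{\min}/2\).
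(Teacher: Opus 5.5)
Your proposal is correct. Parts (i) and (iii) match the paper's proof. Part (i) is Chebyshev with \(\operatorname{tr}\operatorname{Cov}(B_s)=1-\|P_s\|_2^2\le 1\). Part (iii) is Stirling combined with \(D\le\chi^2\le\|\hat q-q\|_2^2/p_{\min}\). Your explicit expansion, with the observation that \(-\tfrac12\sum_y\log(\nu_y/n)\ge 0\), just makes the paper's constant \(c_3\) concrete. Your Fourier sketch for (ii) is also the paper's argument: the identity \(|\varphi_P|^2=1-2A_P\), the pairs \(\{y,m\}\), and \(1-\cos t\ge 2t^2/\pi^2\).

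The genuinely different piece is the coupling you say you would present for (ii). You split \(P_s=mp_{\min}\,\mathrm{Unif}+(1-mp_{\min})R_s\) and condition on the Bernoulli selectors. Since convolving with an independent vector cannot increase the largest atom, everything reduces to an i.i.d. uniform multinomial with \(J\sim\mathrm{Bin}(n,mp_{\min})\) trials, plus a Chernoff bound on \(\{J<mp_{\min}n/2\}\). This is valid, and it makes the uniformity over non-identical \(P_s\) transparent, because all the heterogeneity is absorbed into an independent convolution. It does not, however, really avoid the analytic work. The bound \(\max_\nu\operatorname{Mult}(J,\mathrm{Unif})(\nu)\le CJ^{-(m-1)/2}\) still has to be proved, either by Stirling (handling small \(\nu_y\) through the exponential factor \(e^{-JD}\)) or by the same characteristic-function estimate in the i.i.d. case.

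The paper's direct Fourier route is shorter and self-contained, since the bound \(|\varphi_P(\theta)|\le e^{-c_0\|\theta\|^2}\) is already uniform over all \(P\) with \(P(y)\ge p_{\min}\). It is also the same machinery the paper reuses for its uniform lattice local CLT. If you do present the coupling, handle the degenerate case \(mp_{\min}=1\), where \(R_s\) is undefined but every \(P_s\) is uniform. Also note that the exponentially small Chernoff term is absorbed into \(c_2\).
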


	\begin{proof}
		See Appendix \ref{apx:proof_uniform-count-local}.
	\end{proof}

	The following lemma is the key local estimate underlying the bounded-remainder
	converse. Its main idea is that if \(\bar q\in\mathcal P_W\) satisfies
	\(
	D(q_{x^n}\|\bar q)=O\!\left(\frac1n\right),
	\)
	then Pinsker's inequality implies that the two mean count vectors
	\(nq_{x^n}\) and \(n\bar q\) are separated by only \(O(\sqrt n)\). Consequently,
	a local count region around \(nq_{x^n}\) is contained, after enlarging the
	constant, in a local count region around \(n\bar q\). This allows us to compare
	the noisy permutation law \(P_{x^n}\) with the product law
	\(\bar q^{\otimes n}\) on the same \(O(\sqrt n)\)-scale count region. On this
	region, the count probabilities under the two laws are of the same polynomial
	order, while the likelihood ratio remains uniformly bounded. This yields a
	constant lower bound on \(\beta_\alpha\), uniformly over \(x^n\), \(\bar q\),
	and \(n\).

	\begin{lemma}
		\label{lem:symbol-relaxed-intrinsic}
		Under Assumption~\ref{ass:positive-support}, fix \(\alpha\in(0,1)\) and \(a>0\). There exist constants \(C_{W,\alpha,a}<\infty\) and \(n_0<\infty\) such that, for all \(n\ge n_0\), every \(x^n\in\mathcal X^n\), and every \(\bar q\in\mathcal P_W\) satisfying
		\[
		D(q_{x^n}\|\bar q)\le \frac{a^2}{n},
		\]
		we have
		\[
		-\log \beta_\alpha
		\!\left(
		P_{x^n},
		\bar q^{\otimes n}
		\right)
		\le
		C_{W,\alpha,a}.
		\]
	\end{lemma}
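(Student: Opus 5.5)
The plan is to bound \(\beta_\alpha(P_{x^n},\bar q^{\otimes n})\) from below by a positive constant. Any test \(P_{T|Y^n}\) with \(P_{x^n}[T=1]\ge\alpha\) puts substantial \(P_{x^n}\)-mass on a local count region, and on that region \(\bar q^{\otimes n}\) is at least a constant multiple of \(P_{x^n}\). First reduce to count space. By Lemma~\ref{lem:count-space-likelihood-ratio}, both laws are constant on type classes. Symmetrizing the test over permutations changes neither \(P_{x^n}[T=1]\) nor \(\bar q^{\otimes n}[T=1]\), so we may assume \(T\) depends only on \(\nu=N(Y^n)\). Under \(P_{x^n}\), the count vector \(N(Y^n)\) has the law of \(S_n=\sum_t B_t\) in Lemma~\ref{lem:uniform-count-local}, with \(P_t=W(\cdot|x_t)\) and \(\bar P_n=q_{x^n}\); Assumption~\ref{ass:positive-support} supplies the bound \(P_t(y)\ge p_{\min}\). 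Under \(\bar q^{\otimes n}\), the count vector is \(\operatorname{Mult}(n,\bar q)\).

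Next, define the local region
\[
G_n:=\{\nu:\|\nu-nq_{x^n}\|_2\le K_\alpha\sqrt n\},
\]
with \(K_\alpha\) from Lemma~\ref{lem:uniform-count-local}(i), so that \(P^{\rm cnt}_{x^n}(G_n^c)\le\alpha/2\). For any count-space test \(\phi=P[T=1|\nu]\) with \(\sum_\nu P^{\rm cnt}_{x^n}(\nu)\phi(\nu)\ge\alpha\), this gives
\[
\sum_{\nu\in G_n}P^{\rm cnt}_{x^n}(\nu)\phi(\nu)\ge\alpha/2.
\]
Pinsker's inequality converts \(D(q_{x^n}\|\bar q)\le a^2/n\) into \(\|q_{x^n}-\bar q\|_1\le\sqrt{2\ln 2}\,a/\sqrt n\), so \(\|nq_{x^n}-n\bar q\|_2\le c_a\sqrt n\). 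Hence every \(\nu\in G_n\) satisfies \(\|\nu-n\bar q\|_2\le C_0\sqrt n\) with \(C_0=K_\alpha+c_a\). Since \(\bar q\in\cP_W\) gives \(\bar q(y)\ge p_{\min}\), Lemma~\ref{lem:uniform-count-local}(iii) yields \(\operatorname{Mult}(n,\bar q)(\nu)\ge c_1n^{-(m-1)/2}\) on \(G_n\) for \(n\ge n_0\). Lemma~\ref{lem:uniform-count-local}(ii) gives \(P^{\rm cnt}_{x^n}(\nu)\le c_2n^{-(m-1)/2}\). Therefore the likelihood ratio satisfies
\[
\frac{\operatorname{Mult}(n,\bar q)(\nu)}{P^{\rm cnt}_{x^n}(\nu)}\ge\frac{c_1}{c_2}\quad\text{on }G_n,
\]
and so
\[
\bar q^{\otimes n}[T=1]\ge\sum_{\nu\in G_n}\operatorname{Mult}(n,\bar q)(\nu)\phi(\nu)\ge\frac{c_1}{c_2}\cdot\frac{\alpha}{2}.
\]
Taking the infimum over tests gives \(-\log\beta_\alpha\le\log\frac{2c_2}{c_1\alpha}=:C_{W,\alpha,a}\). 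All constants depend only on \((p_{\min},m,\alpha,a)\), hence uniformly on \(x^n\), \(\bar q\) and \(n\).

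The substantive work lies in Lemma~\ref{lem:uniform-count-local}, which we take as given; here the main care is that its constants are uniform. The one subtle point is the lower-dimensional concern. Even when \(d<m-1\), both count laws live on the full \((m-1)\)-dimensional count lattice with matching \(n^{-(m-1)/2}\) scaling, because strict positivity makes each sum nondegenerate in all \(m-1\) directions. The ratio is therefore bounded, and no affine-dimension mismatch arises. I would check that (ii) indeed holds uniformly for non-identically distributed summands, which is where strict positivity is essential.
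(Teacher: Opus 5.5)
Your proposal is correct and follows essentially the same argument as the paper: Pinsker to show that the local count region around \(nq_{x^n}\) is also a local region around \(n\bar q\), parts (i)--(iii) of Lemma~\ref{lem:uniform-count-local} to bound the likelihood ratio by \(c_2/c_1\) on that region, and the \(\alpha/2\) mass argument to lower-bound \(\beta_\alpha\). The only difference is cosmetic: you first symmetrize the test to count space, whereas the paper keeps sequence-space tests and uses Lemma~\ref{lem:count-space-likelihood-ratio} to transfer the count-space ratio bound pointwise to sequences.
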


	\begin{proof}
		Set 
		\[ 
		q:=q_{x^n},\qquad P:=P_{x^n},\qquad Q:=\bar q^{\otimes n}. 
		\]
		Under \(P\), the count vector \(N(Y^n)\) has the same law as
		\(
		S_P:=\sum_{s=1}^n B_s,
		\)
		where \(B_1,\ldots,B_n\) are independent random vectors taking values in
		\(\{b_1,\ldots,b_m\}\) and satisfying
		\(
		\mathbb P[B_s=b_y]=W(y|x_s)\) for \(y\in\cY .
		\)
		Hence \(\mathbb E[S_P]=nq\).  Under \(Q\), the count vector \(N(Y^n)\) has the
		multinomial law with parameters \((n,\bar q)\); denote this count vector by
		\(\bar S\).

		By Lemma~\ref{lem:count-space-likelihood-ratio}, for every
		\(y^n\) with \(N(y^n)=t\),
		\begin{equation}
		\frac{P(y^n)}{Q(y^n)}
		=
		\frac{\mathbb P[S_P=t]}{\mathbb P[\bar S=t]} . \label{eq:count-likelihood-ratio}
		\end{equation}

		By Pinsker's inequality and \(D(q\|\bar q)\le a^2/n\),
		\[
		\|q-\bar q\|_2
		\le
		C_a n^{-1/2}
		\]
		for a constant \(C_a<\infty\) depending only on \(a\) and the logarithm
		convention.  Apply Lemma~\ref{lem:uniform-count-local} with
		\(P_s=W(\cdot|x_s)\), \(s=1,\ldots,n\). Uniformly over \(x^n\), there exist
		\(K_\alpha<\infty\) and \(c_2<\infty\) such that
		\[
		\begin{aligned}
			&P\!\left[\|N(Y^n)-nq\|_2\le K_\alpha\sqrt n\right]
			\nonumber \\
			& =
			\mathbb P\!\left[\|S_P-nq\|_2\le K_\alpha\sqrt n\right]\\
			&\ge 1-\frac{\alpha}{2}
		\end{aligned}
		\]
		and
		\[
		\sup_t \mathbb P[S_P=t]\le c_2 n^{-(m-1)/2}.
		\]
		
		Moreover, if \(\|t-nq\|_2\le K_\alpha\sqrt n\), then
		\[
		\|t-n\bar q\|_2
		\le
		(K_\alpha+C_a)\sqrt n .
		\]
		Since \(\bar q\in\mathcal P_W\) and all distributions in \(\mathcal P_W\) are
		bounded below by \(p_{\min}\), the multinomial lower bound in
		Lemma~\ref{lem:uniform-count-local} gives constants \(c_1>0\) and \(n_1\),
		depending only on \((W,\alpha,a)\), such that for all \(n\ge n_1\),
		\[
		\mathbb P[\bar S=t]\ge c_1 n^{-(m-1)/2}
		\]
		for every \(t\) satisfying \(\|t-nq\|_2\le K_\alpha\sqrt n\).
		
		Define
		\[
		\mathcal A
		:=
		\{y^n:\|N(y^n)-nq\|_2\le K_\alpha\sqrt n\}.
		\]
		Then \(P(\mathcal A)\ge1-\alpha/2\), and by
		\eqref{eq:count-likelihood-ratio},
		\[
		\frac{P(y^n)}{Q(y^n)}
		\le
		\frac{c_2}{c_1}
		=:L_{W,\alpha,a},
		\qquad y^n\in\mathcal A .
		\]
		Now let \(\varphi\) be any randomized test with \(P[\varphi=1]\ge\alpha\).
		Since \(P(\mathcal A^c)\le\alpha/2\),
		\[
		P[\varphi=1,\mathcal A]\ge \frac{\alpha}{2}.
		\]
		Using \(dP/dQ\le L_{W,\alpha,a}\) on \(\mathcal A\), we get
		\[
		Q[\varphi=1]
		\ge
		Q[\varphi=1,\mathcal A]
		\ge
		L_{W,\alpha,a}^{-1}P[\varphi=1,\mathcal A]
		\ge
		\frac{\alpha}{2L_{W,\alpha,a}} .
		\]
		Taking the infimum over all such tests yields
		\[
		\beta_\alpha(P,Q)
		\ge
		\frac{\alpha}{2L_{W,\alpha,a}},
		\]
		and therefore
		\[
		-\log\beta_\alpha
		\left(
		P_{Y^n|x^n}^{\rm perm},
		\bar q^{\otimes n}
		\right)
		\le
		\log\left(\frac{2L_{W,\alpha,a}}{\alpha}\right)
		=:C_{W,\alpha,a}.
		\]
	\end{proof}
	
	\begin{remark}

		The covering radius \(1/n\) in Lemma~\ref{lem:symbol-relaxed-intrinsic}
		is the natural local scale for the divergence-covering converse.  The
		lemma shows that, when the covering point satisfies
		\(D(q\|\bar q)=O(1/n)\), the testing penalty remains uniformly bounded
		in \(n\).  This boundedness conclusion cannot in general be sharpened
		to an \(o(1)\) testing penalty: under such \(O(1/n)\) perturbations, the
		testing term may differ by a non-vanishing constant from the reference
		value \(-\log\alpha\) attained when the two hypotheses coincide.

		To see this, consider a binary symmetric channel with crossover probability
		\(\delta\in(0,1/2)\), and let the input sequence be \(x^n=(0,\ldots,0)\).
		Then the noisy permutation output law is
		\[
		P_n=q^{\otimes n},
		\qquad
		q=(1-\delta,\delta),
		\]
		since the permutation block has no effect on an i.i.d. output sequence.  Fix
		\(s>0\) and define
		\[
		\bar q_n
		=
		\left(
		1-\delta-\frac{s}{\sqrt n},
		\delta+\frac{s}{\sqrt n}
		\right).
		\]
		For all sufficiently large \(n\), \(\bar q_n\) lies in the reachable output
		interval of the BSC.  Moreover, with base-two logarithms,
		\[
		D(q\|\bar q_n)
		=
		\frac{1}{2\ln 2}\,
		\frac{s^2}{\delta(1-\delta)}\,
		\frac1n
		+
		o\!\left(\frac1n\right).
		\]
		Thus, by choosing \(s\) sufficiently small, the condition
		\[
		D(q\|\bar q_n)\le \frac{a^2}{n}
		\]
		is satisfied for all sufficiently large \(n\).
		
		Let \(Q_n=\bar q_n^{\otimes n}\), and define the natural-log likelihood ratio
		\[
		\ell_n:=\ln\frac{dP_n}{dQ_n}.
		\]
		With
		\[
		J:=\frac{s^2}{\delta(1-\delta)},
		\]
		the standard local asymptotic expansion gives
		\[
		\ell_n \Rightarrow \mathcal N\left(\frac J2,J\right)
		\quad \text{under } P_n
		\]
		and
		\[
		\ell_n \Rightarrow \mathcal N\left(-\frac J2,J\right)
		\quad \text{under } Q_n .
		\]
		Therefore, by the Neyman-Pearson lemma and the likelihood-ratio
		test,
		\[
		\lim_{n\to\infty}
		\beta_\alpha(P_n,Q_n)
		=
		\Phi\bigl(\Phi^{-1}(\alpha)-\sqrt J\bigr).
		\]
		For every fixed \(s>0\), this limit is strictly smaller than \(\alpha\).
		Hence
		\[
		\begin{aligned}
			&\lim_{n\to\infty}
			\left[
			-\log \beta_\alpha(P_n,Q_n)-(-\log\alpha)
			\right] \\
			&\quad =
			-\log
			\frac{
				\Phi\bigl(\Phi^{-1}(\alpha)-\sqrt J\bigr)
			}{\alpha}
			>0 .
		\end{aligned}
		\]
		Thus, even when the single-letter KL distance is of order \(1/n\), the
		binary-testing term can contain a non-vanishing constant contribution.  This
		is the sense in which Lemma~\ref{lem:symbol-relaxed-intrinsic} is order-sharp
		for the bounded-remainder converse.
	\end{remark}
	
	We next state the meta-converse tailored to the noisy permutation channel
	\begin{lemma} 
		\label{lem:symbol-relaxed-meta-converse-perm}
		Let
		\[
		V_n(y^n|x^n):=P_{Y^n|x^n}^{\rm perm}(y^n)
		\]
		be the \(n\)-block transition kernel of the noisy permutation channel. Fix \(\epsilon\in(0,1)\) and \(\eta\in(\epsilon,1)\).  Then, for every
		auxiliary output distribution \(Q_{Y^n}\) on \(\mathcal Y^n\), every code
		\(\cC_n\) with message set \(\cM\) and average error probability at most
		\(\epsilon\) satisfies
		\[
		\begin{aligned}
			&\log |\cM| \nonumber \\
			&\le
			\sup_{x^n\in\mathcal X^n}
			\left\{
			-\log
			\beta_{1-\eta}
			\left(
			V_n(\cdot|x^n),
			Q_{Y^n}
			\right)
			\right\}
			-
			\log\left(1-\frac{\epsilon}{\eta}\right).
		\end{aligned}
		\]
	\end{lemma}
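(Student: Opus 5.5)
We follow the standard meta-converse argument of \cite{polyanskiy_channel_2010}, applied to the \(n\)-block kernel \(V_n\) of the noisy permutation channel. The key is that \(\beta\) is evaluated at level \(1-\eta\) rather than \(1-\epsilon\). This allows us to discard messages with large conditional error and to pass from an average over messages to a supremum over input sequences.

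\emph{Step 1 (reduction to one input sequence per message).} Fix a code with encoder \(f_n\), decoder \(g_n\), and average error at most \(\epsilon\). Write \(\epsilon_u:=\bbP[g_n(Y^n)\ne u\mid U=u]\). If the encoder were randomized, we would first derandomize it. Since the lemma takes a supremum over deterministic \(x^n\), we may assume each message \(u\) is mapped to a deterministic \(x^n(u)\). By Markov's inequality, the set \(\cM':=\{u:\epsilon_u\le\eta\}\) satisfies
\[
|\cM\setminus\cM'|\le \frac{\epsilon}{\eta}|\cM|,
\qquad\text{hence}\qquad |\cM'|\ge\Bigl(1-\frac{\epsilon}{\eta}\Bigr)|\cM|.
\]

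\emph{Step 2 (tests from decoding regions).} For each \(u\in\cM'\), define the deterministic test \(T_u(y^n):=\mathbf 1\{g_n(y^n)=u\}\). Under \(V_n(\cdot|x^n(u))\) it accepts with probability \(1-\epsilon_u\ge 1-\eta\). The definition of \(\beta_{1-\eta}\) then gives
\[
Q_{Y^n}[g_n(Y^n)=u]\ge \beta_{1-\eta}\bigl(V_n(\cdot|x^n(u)),Q_{Y^n}\bigr)\ge \inf_{x^n}\beta_{1-\eta}\bigl(V_n(\cdot|x^n),Q_{Y^n}\bigr)=:\beta^\ast.
\]

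\emph{Step 3 (disjointness).} The decoding regions \(\{g_n=u\}\) are disjoint, so summing over \(u\in\cM'\) yields \(1\ge|\cM'|\beta^\ast\). If \(\beta^\ast=0\), the bound is trivial with right-hand side \(+\infty\). Otherwise,
\[
\log|\cM|\le -\log\beta^\ast-\log(1-\epsilon/\eta),
\]
and \(-\log\beta^\ast=\sup_{x^n}\{-\log\beta_{1-\eta}(V_n(\cdot|x^n),Q_{Y^n})\}\).

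\emph{Main obstacle.} The only delicate point is the handling of stochastic encoders and the infimum over \(x^n\). Since \(\cX^n\) is finite, the infimum is attained, and any randomized encoder can be replaced by a deterministic one without increasing the average error. This holds because the average error is linear in the encoder's randomization, so some realization does at least as well. With this settled, the remaining steps are routine.
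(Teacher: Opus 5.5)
Your proposal is correct and follows essentially the same argument as the paper: Markov's inequality extracts the messages with conditional error at most \(\eta\), each decoding-region indicator serves as a test of power at least \(1-\eta\), and the disjointness of the decoding regions under \(Q_{Y^n}\) gives \(|\mathcal G|\,\beta^\ast\le 1\). Your extra remarks on derandomizing the encoder and on the case \(\beta^\ast=0\) are harmless but not needed, since the paper's codes already use a deterministic encoder \(f_n\) and the bound is trivially true when the right-hand side is infinite.
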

	
	\begin{proof}
		Let the messages be equiprobable, and let \(D_m\subseteq\mathcal Y^n\)
		be the decoding region of message \(m\).  Denote the conditional error
		probability of message \(m\) by
		\[
		e_m
		:=
		V_n(D_m^c|x_m^n).
		\]
		The average error assumption gives
		\(
		\frac{1}{|\cM|}\sum_{m=1}^{|\cM|} e_m\le \epsilon .
		\)
		Fix \(\eta\in(\epsilon,1)\), and define the set of good messages
		\[
		\mathcal G:=\{m:e_m\le \eta\}.
		\]
		By Markov's inequality,
		\[
		|\mathcal G|
		\ge
		|\cM|\left(1-\frac{\epsilon}{\eta}\right).
		\]
		
		For each \(m\in\mathcal G\), the test \(T_m=\mathbf 1\{Y^n\in D_m\}\)
		has power at least \(1-\eta\) under \(V_n(\cdot|x_m^n)\).  Hence
		\[
		\beta_{1-\eta}
		\left(
		V_n(\cdot|x_m^n),
		Q_{Y^n}
		\right)
		\le
		Q_{Y^n}(D_m).
		\]
		Let
		\[
		A:=
		\sup_{x^n\in\mathcal X^n}
		\left\{
		-\log
		\beta_{1-\eta}
		\left(
		V_n(\cdot|x^n),
		Q_{Y^n}
		\right)
		\right\}.
		\]
		Then every \(m\in\mathcal G\) satisfies
		\[
		\beta_{1-\eta}
		\left(
		V_n(\cdot|x_m^n),
		Q_{Y^n}
		\right)
		\ge 2^{-A}.
		\]
		Since the decoding regions are disjoint,
		\[
		|\mathcal G|2^{-A}
		\le
		\sum_{m\in\mathcal G}\beta_{1-\eta}
		\left(
		V_n(\cdot|x_m^n),
		Q_{Y^n}
		\right)
		\le
		\sum_{m\in\mathcal G}Q_{Y^n}(D_m)
		\le 1.
		\]
		Thus \(\log|\mathcal G|\le A\).  Combining this with the lower bound
		on \(|\mathcal G|\) gives
		\[
		\log |\cM|
		\le
		A
		-
		\log\left(1-\frac{\epsilon}{\eta}\right),
		\]
		which proves the claim.
	\end{proof}
	
	\subsection{Covering Converse}
	In this subsection, we present a converse bound via divergence covering.
	We keep the covering radius \(a^2/n\) explicit because it is the natural
	local scale at which the covering number contributes
	\(d\log\sqrt n\) and the local testing term remains bounded.
	\begin{theorem}
		\label{thm:low-rank-converse-message-level}
		Under Assumption~\ref{ass:positive-support}, let
		\(d=\dim\mathcal P_W\ge1\).  For every
		\(\epsilon\in(0,1)\), there exists \(C_{W,\epsilon}<\infty\) such
		that, for all sufficiently large \(n\),
		\[
		\log M^\star(n,\epsilon)
		\le
		\frac d2\log n
		+
		C_{W,\epsilon}.
		\]
	\end{theorem}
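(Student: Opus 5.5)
We combine the meta-converse of Lemma~\ref{lem:symbol-relaxed-meta-converse-perm} with an auxiliary output distribution built from the KL covering of Lemma~\ref{lem:subspace-div-covering}. The local testing term is then controlled by Lemma~\ref{lem:symbol-relaxed-intrinsic}.

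\textbf{Construction of the auxiliary distribution.} Fix \(\epsilon\in(0,1)\) and choose \(\eta:=(1+\epsilon)/2\in(\epsilon,1)\), so that \(\alpha:=1-\eta\in(0,1)\). Fix \(a:=1\), and set the covering radius to \(r=a^2/n\), which is below \(r_0\) for large \(n\). Lemma~\ref{lem:subspace-div-covering} provides a set \(\mathcal G_n\subseteq\cP_W\) with \(\log|\mathcal G_n|\le \frac d2\log n+C_W\) and \(\sup_{p\in\cP_W}\min_{\bar q\in\mathcal G_n}D(p\|\bar q)\le a^2/n\). We take the uniform mixture
\[
Q_{Y^n}:=\frac{1}{|\mathcal G_n|}\sum_{\bar q\in\mathcal G_n}\bar q^{\otimes n}.
\]

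\textbf{Bounding the testing term.} Take any \(x^n\in\cX^n\). Since \(q_{x^n}=\pi_{x^n}W\in\cP_W\), there is some \(\bar q\in\mathcal G_n\) with \(D(q_{x^n}\|\bar q)\le a^2/n\). Because \(Q_{Y^n}\ge |\mathcal G_n|^{-1}\bar q^{\otimes n}\) pointwise, every test satisfies \(Q_{Y^n}[T=1]\ge |\mathcal G_n|^{-1}\bar q^{\otimes n}[T=1]\). Taking the infimum over tests with power at least \(\alpha\) gives
\[
-\log\beta_\alpha(P_{x^n},Q_{Y^n})\le \log|\mathcal G_n|-\log\beta_\alpha(P_{x^n},\bar q^{\otimes n})\le \tfrac d2\log n+C_W+C_{W,\alpha,a}
\]
for \(n\ge n_0\), where the last inequality uses Lemma~\ref{lem:symbol-relaxed-intrinsic}. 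This bound holds uniformly over \(x^n\), so it also holds for the supremum over \(x^n\).

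\textbf{Conclusion.} Substituting into Lemma~\ref{lem:symbol-relaxed-meta-converse-perm} gives
\[
\log M^\star(n,\epsilon)\le \tfrac d2\log n+C_W+C_{W,1-\eta,a}-\log\bigl(1-\epsilon/\eta\bigr),
\]
and we take \(C_{W,\epsilon}\) to be the sum of these constants.

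\textbf{Main obstacle.} The only delicate point is uniformity: the constants from Lemma~\ref{lem:symbol-relaxed-intrinsic} must not depend on \(x^n\) or on the chosen covering point \(\bar q\). That lemma already guarantees this, since its constants depend only on \(p_{\min}\), \(m\), \(\alpha\) and \(a\), and the covering points lie in \(\cP_W\), where Assumption~\ref{ass:positive-support} applies. The mixture step is elementary once this uniformity is available.
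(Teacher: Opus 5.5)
Your proposal is correct and follows the paper's proof essentially step for step. The paper makes the same choices: $\eta=(1+\epsilon)/2$, covering radius $1/n$ (i.e.\ $a=1$), and the uniform KL-covering mixture as auxiliary output. It then uses the same domination bound $\beta_\alpha(P_{x^n},Q_{Y^n})\ge|\mathcal G_n|^{-1}\beta_\alpha(P_{x^n},\bar q^{\otimes n})$, the uniform local testing bound of Lemma~\ref{lem:symbol-relaxed-intrinsic}, and finally the meta-converse of Lemma~\ref{lem:symbol-relaxed-meta-converse-perm}.
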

	
	\begin{proof}
		Set \(\eta=(1+\epsilon)/2\) and
		\(\alpha=1-\eta=(1-\epsilon)/2\).  Choose the covering radius
		\(\rho_n=1/n\).  By Lemma~\ref{lem:subspace-div-covering}, there exist
		constants \(B_W<\infty\) and \(n_1<\infty\) such that, for all
		\(n\ge n_1\), there exists a KL covering
		\(\mathcal G_n\subseteq\mathcal P_W\) with radius \(\rho_n\) satisfying
		\[
		\log|\mathcal G_n|
		\le
		\frac d2\log n+B_W .
		\]
		Define
		\[
		Q_{Y^n}^{(\mathcal G_n)}
		:=
		\frac1{|\mathcal G_n|}
		\sum_{\bar q\in\mathcal G_n}\bar q^{\otimes n}.
		\]
		
		For any deterministic input sequence \(x^n\), let
		\(q_{x^n}:=\pi_{x^n}W\in\mathcal P_W\), and choose
		\(\bar q(x^n)\in\mathcal G_n\) satisfying
		\[
		D(q_{x^n}\|\bar q(x^n))\le \frac1n .
		\]
		Since
		\[
		Q_{Y^n}^{(\mathcal G_n)}
		\ge
		|\mathcal G_n|^{-1}\bar q(x^n)^{\otimes n},
		\]
		we have
		\[
		\begin{aligned}
			&-\log\beta_\alpha
			\left(
			P_{Y^n|x^n}^{\rm perm},
			Q_{Y^n}^{(\mathcal G_n)}
			\right)  
			\le
			\log|\mathcal G_n|
			-
			\log\beta_\alpha
			\left(
			P_{Y^n|x^n}^{\rm perm},
			\bar q(x^n)^{\otimes n}
			\right).
		\end{aligned}
		\]
		By Lemma~\ref{lem:symbol-relaxed-intrinsic} with \(a=1\), the second term is
		bounded by a constant \(C_{W,\alpha,1}\), uniformly in \(x^n\).  Therefore
		\[
		\sup_{x^n}
		-\log\beta_\alpha
		\left(
		P_{Y^n|x^n}^{\rm perm},
		Q_{Y^n}^{(\mathcal G_n)}
		\right)
		\le
		\frac d2\log n+B_W+C_{W,\alpha,1}.
		\]
		Applying Lemma~\ref{lem:symbol-relaxed-meta-converse-perm} gives
		\[
		\log M^\star(n,\epsilon)
		\le
		\frac d2\log n
		+
		B_W+C_{W,\alpha,1}
		-
		\log\left(1-\frac{\epsilon}{\eta}\right).
		\]
		The remaining terms depend only on \(W\) and \(\epsilon\), so they are absorbed
		into \(C_{W,\epsilon}\).
	\end{proof}
	
	Finally, we have the following strong converse result for logarithmic \(\epsilon\)-capacity.
	\begin{corollary} 
		Under Assumption~\ref{ass:positive-support}, let
		\(d=\dim\mathcal P_W\ge1\). For every
		\(\epsilon\in(0,1)\),
		\[
		C_\epsilon=\frac d2 .
		\]
	\end{corollary}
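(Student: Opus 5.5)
The corollary follows by matching the converse of Theorem~\ref{thm:low-rank-converse-message-level} with a lower bound extracted from the achievability construction of Theorem~\ref{thm:finite-achievability}. For the upper bound, Theorem~\ref{thm:low-rank-converse-message-level} gives \(\log M^\star(n,\epsilon)\le \frac d2\log n+C_{W,\epsilon}\) for all large \(n\). Dividing by \(\log n\) and taking \(\limsup\) yields
\[
\limsup_{n\to\infty}\frac{\log M^\star(n,\epsilon)}{\log n}\le \frac d2 .
\]
This holds for every \(\epsilon\in(0,1)\), so in particular \(C_\epsilon\le d/2\). It is a strong converse, since no \(\epsilon\to0\) limit is needed.

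For the lower bound, I choose the lattice resolution \(N=N_n:=\lfloor c\sqrt n\rfloor\) with a small constant \(c>0\) depending on \(\epsilon\) and \(W\). The plan is to show that the right-hand side of \eqref{eq:finite-achievability-union} is at most \(\epsilon\) for all large \(n\). Under \(\bbP_u\), the outputs \(Y_1,\ldots,Y_n\) are i.i.d.\ \(q_u\). Since \(H\) is affine with linear part \(L\) and \(H(q_u)=u\), we have
\[
\langle\widehat u_n-u,e_i-e_j\rangle=\langle L(\widehat q_n-q_u),e_i-e_j\rangle .
\]
This is an empirical average of i.i.d.\ bounded scalars with mean zero, and its variance is at most \(\|L\|^2\cdot 2/n\), uniformly in \(u\). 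By Chebyshev's inequality,
\[
\bbP_u\Bigl[\langle\widehat u_n-u,e_i-e_j\rangle\ge \tfrac1N\Bigr]\le \frac{2\|L\|^2N^2}{n}\le 2\|L\|^2c^2 .
\]
Summing over the \(R_d=d(d+1)\) transfer directions from Lemma~\ref{lem:voronoi-transfer} bounds the error probability by \(2d(d+1)\|L\|^2c^2\). Choosing \(c\) so that this is at most \(\epsilon\) gives a code with \(P_e\le\epsilon\) and \(|\cU_N|\) messages, where \(\cU_N\neq\varnothing\) for large \(N\) by Lemma~\ref{lem:lattice-counting}. Hence \(\log M^\star(n,\epsilon)\ge \log|\cU_{N_n}|\).

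By Lemma~\ref{lem:lattice-counting},
\[
\log|\cU_{N_n}|=d\log N_n+\log\lambda_W^\star-\log d!+o(1)=\frac d2\log n+O_{W,\epsilon}(1),
\]
using \(\lambda_W^\star>0\), which holds because \(\cP_W\) is \(d\)-dimensional. Dividing by \(\log n\) and taking \(\liminf\) gives \(C_\epsilon\ge d/2\). Combined with the upper bound, \(C_\epsilon=d/2\).

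The main point of care is uniformity of the Chebyshev bound over \(u\in\cU_N\); this follows because \(L\) is fixed once \(S_W^\star\) is fixed. The only other thing to check is that \(N_n\to\infty\), so that the lattice count asymptotics apply. No refined Gaussian analysis is needed, because only the order \(\frac d2\log n\) matters.
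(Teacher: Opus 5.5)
Your argument is correct. The converse half is the same as the paper's. The achievability half takes a different route.

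For achievability, the paper does not use its own construction. It invokes the vanishing-error achievability result of \cite{makur_coding_2020}, whose codes reach logarithmic rates approaching $(\operatorname{rank}(W)-1)/2=d/2$ with error probability tending to zero, so they are admissible for any fixed $\epsilon$ once $n$ is large.

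You instead stay inside the paper. You combine Theorem~\ref{thm:finite-achievability}, Lemma~\ref{lem:voronoi-transfer}, and Lemma~\ref{lem:lattice-counting} with a Chebyshev bound at resolution $N_n=\lfloor c\sqrt n\rfloor$. This is essentially a crude version of Proposition~\ref{prop:refined-average-achievability}, with Chebyshev in place of Berry--Esseen. As a result you need only the trivial variance upper bound $2\|L\|^2$. You do not need the uniform variance lower bound of Lemma~\ref{lem:uniform-moments} or the Riemann-sum passage to $A_W(c)$.

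Your route gives a self-contained proof and a stronger conclusion: $\log M^\star(n,\epsilon)\ge \frac d2\log n-O_{W,\epsilon}(1)$. Together with Theorem~\ref{thm:low-rank-converse-message-level}, this pins down $\log M^\star(n,\epsilon)=\frac d2\log n+O_{W,\epsilon}(1)$, not merely the first-order ratio defining $C_\epsilon$. The paper's route is shorter, but it rests on an external result that only delivers first-order rates.
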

	\begin{proof}
		The converse follows from
		Theorem~\ref{thm:low-rank-converse-message-level}.  For achievability,
		the achievability result of \cite{makur_coding_2020} gives a sequence of
		noisy permutation channel codes with error probability tending to zero and
		logarithmic rate approaching
		\[
		\frac{\operatorname{rank}(W)-1}{2}
		=
		\frac{\dim\mathcal P_W}{2}.
		\]
		Hence, for every fixed \(\epsilon\in(0,1)\), these codes are admissible for
		\(M^\star(n,\epsilon)\) for all sufficiently large \(n\).  This gives
		\(C_\epsilon\ge d/2\).  Combining this with the converse bound gives
		\(C_\epsilon=d/2\).
	\end{proof}

	\section{Gaussian Approximation of Achievability}
	\label{sec:refined-average-gaussian}

	In this section, we present the achievability analysis by keeping the local Gaussian variance of each elementary transfer direction. This yields a Gaussian approximation for the simplex lattice construction.
	We then compare it with the bounded-remainder converse, whose leading
	\(n\)-dependent term is governed by the same affine dimension \(d\).
	
	Let
	\[
	\mathcal I_d
	:=
	\{(i,j):0\le i,j\le d,\ i\ne j\}
	\]
	be the set of ordered elementary transfer directions.  For
	\(u\in K_W\), let
	\[
	q_u:=T(u)\in \mathcal P_W
	\]
	be the corresponding output distribution. Recall that \(b_y\) is
	the standard basis vector associated with output symbol
	\(y\in\mathcal Y\). Let \(H\) be the affine map defined in
	\eqref{eq:H-projection}.
	Thus, if
	\(Y\sim q_u\), then
	\[
	\mathbb E_{q_u}[H(b_Y)]=u.
	\]
	For each ordered pair \((i,j)\in\mathcal I_d\), define the
	one-dimensional transfer fluctuation
	\[
	Z_{ij,u}(Y)
	:=
	\bigl\langle H(b_Y)-u,e_i-e_j\bigr\rangle ,
	\]
	and its variance
	\begin{equation}
		V_{ij}(u)
		:=
		\operatorname{Var}_{Y\sim q_u}
		\bigl[
		Z_{ij,u}(Y)
		\bigr].
		\label{eq:Vij-definition}
	\end{equation}
	We have the following useful lemma.
	\begin{lemma} 
		\label{lem:uniform-moments}
		Under Assumption~\ref{ass:positive-support}, let \(d=\dim\mathcal P_W\ge1\). There exist constants
		\(0<V_{\min}\le V_{\max}<\infty\) and \(\tau_{\max}<\infty\), depending
		only on \(W\) and on the chosen affine coordinates, such that, for all
		\(u\in K_W\) and all \((i,j)\in\mathcal I_d\),
		\begin{equation}
			V_{\min}
			\le
			V_{ij}(u)
			\le
			V_{\max},
			\qquad
			\mathbb E_{q_u}|Z_{ij,u}(Y)|^3\le \tau_{\max}.
			\label{eq:uniform-moment-bounds}
		\end{equation}
	\end{lemma}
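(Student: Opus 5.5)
The plan is to write the transfer fluctuation as a fixed linear functional of the one-hot vector \(b_Y\), and then use compactness of \(K_W\) for the upper bounds and strict positivity for the lower bound. Since \(H\) is affine with linear part \(L\), and \(H(q_u)=u\), we have
\[
Z_{ij,u}(Y)=\langle H(b_Y)-H(q_u),e_i-e_j\rangle=\langle L(b_Y-q_u),e_i-e_j\rangle=\langle b_Y-q_u,w_{ij}\rangle,
\]
where \(w_{ij}:=L^{\top}(e_i-e_j)\in\mathbb R^m\) depends only on \(W\) and on the chosen simplex \(S_W^\star\). Set \(f_{ij}(y):=(w_{ij})_y\). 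Then \(Z_{ij,u}(Y)=f_{ij}(Y)-\mathbb E_{q_u}f_{ij}(Y)\), and
\[
V_{ij}(u)=\operatorname{Var}_{Y\sim q_u}[f_{ij}(Y)].
\]

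For the upper bounds, \(|Z_{ij,u}(Y)|\le 2\|w_{ij}\|_\infty\le 2\|L\|_{\rm op}\sqrt2\) holds for every \(y\) and every \(u\). This bounds both \(V_{ij}(u)\) and the third absolute moment by constants independent of \(u\). We can therefore take \(V_{\max}\) and \(\tau_{\max}\) as the maxima over the finitely many pairs \((i,j)\in\mathcal I_d\).

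For the lower bound, Assumption~\ref{ass:positive-support} gives \(q_u(y)\ge p_{\min}\) for all \(y\). Hence
\[
\operatorname{Var}_{q_u}[f]=\tfrac12\sum_{y,y'}q_u(y)q_u(y')\bigl(f(y)-f(y')\bigr)^2\ge \tfrac{p_{\min}^2}{2}\sum_{y,y'}\bigl(f(y)-f(y')\bigr)^2 .
\]
So it suffices to show that \(f_{ij}\) is not constant on \(\mathcal Y\). This is the only non-routine point. Suppose \(w_{ij}=c\mathbf 1\). Take any two reachable laws \(q,q'\in\mathcal P_W\); then \(q-q'\) is zero-sum, so \(\langle q-q',w_{ij}\rangle=0\). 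But
\[
\langle q-q',w_{ij}\rangle=\langle L(q-q'),e_i-e_j\rangle=\langle T^{-1}(q)-T^{-1}(q'),e_i-e_j\rangle .
\]
Because \(T^{-1}\) maps \(\aff(\mathcal P_W)\) bijectively onto the hyperplane \(\{\sum_i u_i=1\}\), and \(K_W\) is full-dimensional there, the differences \(T^{-1}(q)-T^{-1}(q')\) span the whole zero-sum subspace. That subspace contains \(e_i-e_j\) itself, which would force \(\|e_i-e_j\|^2=0\), a contradiction. Hence each \(f_{ij}\) is nonconstant. Setting \(\delta_{ij}:=\sum_{y,y'}(f_{ij}(y)-f_{ij}(y'))^2>0\), we may take \(V_{\min}:=\tfrac{p_{\min}^2}{2}\min_{(i,j)}\delta_{ij}>0\).

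All constants depend only on \(W\), through \(p_{\min}\) and through \(L\) (determined by \(\Pi_A\) and \(T\)), and on the fixed affine coordinates. This matches the statement, including the required order \(V_{\min}\le V_{\max}\). The main obstacle is the non-degeneracy step. The key point there is that the projection \(\Pi_A\) does not annihilate the functional \(e_i-e_j\) along reachable directions; full dimensionality of \(K_W\) in \(\Delta_d\) is exactly what rules this out.
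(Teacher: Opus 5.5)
Your proof is correct and follows the paper's route. The upper bounds come from boundedness of a fixed linear functional of \(b_Y\). The lower bound comes from showing that \(y\mapsto\ell_{ij}(b_y)\) cannot be constant, because on the reachable affine hull this functional equals \(v_i-v_j\) plus a constant. You establish this using that differences in \(K_W\) span the zero-sum subspace, while the paper uses that \(v_i-v_j\) varies over \(\Delta_d\) through \(S_W^\star\).

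The only substantive difference is how uniformity is obtained. Your identity \(\operatorname{Var}_q f=\tfrac12\sum_{y,y'}q(y)q(y')(f(y)-f(y'))^2\), together with \(q_u\ge p_{\min}\), yields an explicit \(V_{\min}\) directly. The paper instead proves \(V_{ij}(u)>0\) pointwise and then invokes continuity of \(V_{ij}\) and compactness of \(K_W\).
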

	
	\begin{proof}
		See Appendix~\ref{apx_proof_uniform_moment}.
	\end{proof}
	
	For \(c>0\), define the average Gaussian union function
	\begin{equation}
		A_W(c)
		:=
		\frac{1}{\operatorname{vol}_d(K_W)}
		\int_{K_W}
		\sum_{(i,j)\in\mathcal I_d}
		\Phi\left(
		-\frac{1}{c\sqrt{V_{ij}(u)}}
		\right)
		\,du .
		\label{eq:average-control-functional}
	\end{equation}
	The function \(A_W(c)\) is continuous and nondecreasing in
	\(c\).  For a target average error probability \(\epsilon\), define the average Gaussian coefficient by
	\begin{equation}
		c_\epsilon
		:=
		\sup\{c>0:A_W(c)<\epsilon\}.
		\label{eq:c-epsilon-average-control}
	\end{equation}
	
	Since \(V_{ij}(u)\) is uniformly bounded above and below by
	Lemma~\ref{lem:uniform-moments}, dominated convergence gives
	\[
	\lim_{c\to 0^+}A_W(c)=0,
	\qquad
	\lim_{c\to\infty}A_W(c)=\frac{R_d}{2}.
	\]
	Because \(R_d=d(d+1)\ge2\) for \(d\ge1\), we have
	\(R_d/2\ge1>\epsilon\).  Hence, for every
	\(\epsilon\in(0,1)\), the coefficient \(c_\epsilon\) defined in
	\eqref{eq:c-epsilon-average-control} satisfies
	\(0<c_\epsilon<\infty\).

	We then have the following result.
	
	\begin{proposition}
		\label{prop:refined-average-achievability}
		Under Assumption~\ref{ass:positive-support}, let
		\(d=\dim\mathcal P_W\ge1\).  Fix \(\epsilon\in(0,1)\), and let
		\(c_\epsilon\) be defined in \eqref{eq:c-epsilon-average-control}.
		For every \(c\in(0,c_\epsilon)\),
		\[
		\log M^\star(n,\epsilon)
		\ge
		\frac d2\log n+d\log c+\log\lambda_W^\star-\log d!+o(1).
		\]
	\end{proposition}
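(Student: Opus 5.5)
The plan is to apply Theorem~\ref{thm:finite-achievability} with lattice resolution $N=N_n:=\lfloor c\sqrt n\rfloor$ for a fixed $c\in(0,c_\epsilon)$. We then show that the right-hand side of \eqref{eq:finite-achievability-union} is eventually at most $\epsilon$, and finish by counting with Lemma~\ref{lem:lattice-counting}. Since $\cU_{N_n}\ne\varnothing$ for large $n$, we get $\log M^\star(n,\epsilon)\ge\log|\cU_{N_n}|=d\log N_n+\log\lambda_W^\star-\log d!+o(1)$. Because $\log N_n=\log(c\sqrt n)+o(1)$, this is exactly the claimed bound. Everything therefore reduces to the error analysis.

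For a fixed $u\in\cU_N$ and $(i,j)\in\mathcal I_d$, the averaged law $\bbP_u$ makes $Y_1,\ldots,Y_n$ i.i.d.\ with law $q_u$. Since $H$ is affine, $\widehat u_n=\frac1n\sum_t H(b_{Y_t})$, so
\[
\langle\widehat u_n-u,e_i-e_j\rangle=\frac1n\sum_{t=1}^n Z_{ij,u}(Y_t),
\]
a sum of i.i.d.\ centered variables with variance $V_{ij}(u)$. By Berry--Esseen and the uniform moment bounds of Lemma~\ref{lem:uniform-moments},
\[
\bbP_u\Bigl[\langle\widehat u_n-u,e_i-e_j\rangle\ge\tfrac1N\Bigr]\le \Phi\Bigl(-\frac{\sqrt n}{N\sqrt{V_{ij}(u)}}\Bigr)+\frac{C\tau_{\max}}{V_{\min}^{3/2}\sqrt n},
\]
uniformly in $u$. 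With $N=N_n$ we have $\sqrt n/N_n\ge 1/c$, so the main term is at most $\Phi(-1/(c\sqrt{V_{ij}(u)}))$. Hence the error bound is at most
\[
\frac1{|\cU_N|}\sum_{u\in\cU_N} F_c(u)+O(n^{-1/2}),\qquad F_c(u):=\sum_{(i,j)\in\mathcal I_d}\Phi\Bigl(-\frac1{c\sqrt{V_{ij}(u)}}\Bigr).
\]

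The main obstacle is replacing the lattice average of $F_c$ over $K_W\cap\cL_N$ by the integral average $A_W(c)$. I will argue that $V_{ij}(u)$ is a polynomial, hence Lipschitz, in $u$. Indeed, $q_u=T(u)$ is affine in $u$, and $Z_{ij,u}$ has a coefficient vector that is affine in $u$. Since $V_{ij}\ge V_{\min}$, $F_c$ is then Lipschitz and bounded on $K_W$. A Riemann-sum argument, partitioning the span of $\Delta_d$ into cells of the scaled lattice $N^{-1}\Lambda$ and applying the Lipschitz-principle count used in Lemma~\ref{lem:lattice-counting} to control boundary cells at $O(N^{d-1})$, then gives
\[
\frac1{|\cU_N|}\sum_{u\in\cU_N}F_c(u)=A_W(c)+O(1/N).
\]
A possible subtlety is that Lemma~\ref{lem:uniform-moments} and the variance definitions are only asserted on $K_W$; since lattice points in $\cU_N$ lie in $K_W$, this suffices.

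Finally, since $c<c_\epsilon$ and $A_W$ is nondecreasing, the definition \eqref{eq:c-epsilon-average-control} gives some $c'\in(c,c_\epsilon]$ with $A_W(c')<\epsilon$, so $A_W(c)\le A_W(c')<\epsilon$. The strict gap $\epsilon-A_W(c)>0$ absorbs the $O(n^{-1/2})+O(1/N_n)$ terms for all sufficiently large $n$. Hence there exists a code with $P_e\le\epsilon$ and $|\cU_{N_n}|$ messages, which yields the stated inequality.
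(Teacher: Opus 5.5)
Your proposal is correct and follows essentially the same route as the paper: Theorem~\ref{thm:finite-achievability} with $N_n=\lfloor c\sqrt n\rfloor$, a uniform Berry--Esseen bound via Lemma~\ref{lem:uniform-moments}, a Riemann-sum replacement of the lattice average by $A_W(c)<\epsilon$, and the count from Lemma~\ref{lem:lattice-counting}. You depart from the paper in two small ways, both valid and slightly tighter than its sketch: you use $\sqrt n/N_n\ge 1/c$ together with monotonicity of $\Phi$ instead of the paper's uniform-convergence step, and you make the Riemann-sum error an explicit $O(1/N)$ through the Lipschitz (polynomial) dependence of $V_{ij}$ on $u$.
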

	
	\begin{proof}
		Fix \(c<c_\epsilon\) and set \(N_n=\lfloor c\sqrt n\rfloor\).
		Since \(N_n\to\infty\), Lemma~\ref{lem:lattice-counting} implies that
		\(\mathcal U_{N_n}\ne\varnothing\) for all sufficiently large \(n\).  By
		Theorem~\ref{thm:finite-achievability}, there exists a code with message set
		\(\mathcal U_{N_n}\) whose average error probability satisfies
		\[
		P_e
		\le
		\frac1{|\mathcal U_{N_n}|}
		\sum_{u\in\mathcal U_{N_n}}
		\sum_{(i,j)\in\mathcal I_d}
		\bbP_u\left[
		\bigl\langle \widehat u_n-u,e_i-e_j\bigr\rangle
		\ge \frac{1}{N_n}
		\right].
		\]
		For a message \(u\in\mathcal U_{N_n}\), under \(\bbP_u\), the output symbols
		\(Y_1,\ldots,Y_n\) are i.i.d. according to \(q_u=T(u)\).  Hence the empirical
		coordinate is
		\[
		\widehat u_n
		=
		H(\widehat P_{Y^n})
		=
		\frac1n\sum_{t=1}^n H(b_{Y_t}).
		\]
		For each \((i,j)\in\mathcal I_d\),
		\[
		\bigl\langle \widehat u_n-u,e_i-e_j\bigr\rangle
		=
		\frac1n\sum_{t=1}^n Z_{ij,u}(Y_t),
		\]
		where
		\(
		\mathbb E_{q_u}[Z_{ij,u}(Y)]=0\) and \(
		\operatorname{Var}_{q_u}[Z_{ij,u}(Y)]=V_{ij}(u)
		\).
		
		Lemma~\ref{lem:uniform-moments} gives a uniform lower bound on the variances and
		a uniform upper bound on the third absolute moments.  Hence the
		Berry-Esseen theorem \cite[Ch.~XVI.5 Theorem 2]{Feller_book}  applies uniformly over \(u\in K_W\) and
		\((i,j)\in\mathcal I_d\), giving
		\begin{align}
			&\bbP_u \left[
			\bigl\langle \widehat u_n-u,e_i-e_j\bigr\rangle
			\ge \frac{1}{N_n}
			\right]  \notag\\
			&\qquad\le
			\Phi\left(
			-\frac{\sqrt n}{{N_n}\sqrt{V_{ij}(u)}}
			\right)
			+
			O\left(\frac1{\sqrt n}\right),
			\label{eq:BE-transfer-refined}
		\end{align}
		where the \(O(n^{-1/2})\) term is uniform in \(u\) and \((i,j)\).
		
		Substituting \eqref{eq:BE-transfer-refined} into the finite-blocklength
		bound from Theorem~\ref{thm:finite-achievability} gives
		\begin{align}
			P_e
			&\le
			\frac1{|\mathcal U_{N_n}|}
			\sum_{u\in\mathcal U_{N_n}}
			\sum_{(i,j)\in\mathcal I_d}
			\Phi\left(
			-\frac{\sqrt n}{{N_n}\sqrt{V_{ij}(u)}}
			\right)  + O\left(\frac1{\sqrt n}\right).
			\label{eq:avg-error-rsum-before}
		\end{align}
		Because \({N_n}=\lfloor c\sqrt n\rfloor\),
		\[
		\frac{\sqrt n}{N_n}\to \frac1c .
		\]
		The summand in \eqref{eq:avg-error-rsum-before} is continuous on \(K_W\),
		uniformly bounded, and converges uniformly to
		\[
		\sum_{(i,j)\in\mathcal I_d}
		\Phi\left(
		-\frac{1}{c\sqrt{V_{ij}(u)}}
		\right).
		\]
		Since \(K_W\) is a fixed polytope with boundary of zero
		\(d\)-dimensional volume, the standard Riemann-sum convergence for the uniform grid \(\cL_{N_n}\) on the affine hyperplane \(\sum_i u_i=1\) gives
		\[
		\frac1{|\mathcal U_{N_n}|}
		\sum_{u\in\mathcal U_{N_n}}
		\sum_{(i,j)\in\mathcal I_d}
		\Phi\left(
		-\frac{\sqrt n}{{N_n}\sqrt{V_{ij}(u)}}
		\right)
		=
		A_W(c)+o(1).
		\]
		Thus
		\[
		P_e
		\le
		A_W(c)+o(1).
		\]
		By the definition of \(c_\epsilon\) and the monotonicity of \(A_W\),
		we have \(A_W(c)<\epsilon\). Hence
		\(P_e\le\epsilon\) for all sufficiently large \(n\). Therefore,
		\(M^\star(n,\epsilon)\ge |\mathcal U_{N_n}|\).
		It remains to count the messages.  By Lemma~\ref{lem:lattice-counting},
		\[
		|\mathcal U_{N_n}|
		=
		\lambda_W^\star\frac{{N_n}^d}{d!}
		+
		O_W({N_n}^{d-1}).
		\]
		Since \({N_n}=c\sqrt n(1+o(1))\),
		\[
		\log M^\star(n,\epsilon) 
		\ge
		\frac d2\log n+d\log c+\log\lambda_W^\star-\log d!+o(1).
		\]
		This completes the proof.
	\end{proof}
	
	\begin{remark}[Finite-lattice Gaussian approximation]
		For finite-blocklength numerical evaluation, it is useful to keep the
		lattice average before passing to the continuous limit.  For
		\(n,N\ge1\) with \(\mathcal U_N\ne\varnothing\), define
		\begin{equation}
			A_{n,N}^{\rm lat}
			:=
			\frac1{|\mathcal U_N|}
			\sum_{u\in\mathcal U_N}
			\sum_{(i,j)\in\mathcal I_d}
			\Phi\left(
			-\frac{\sqrt n}{N\sqrt{V_{ij}(u)}}
			\right).
			\label{eq:finite-lattice-gaussian-functional}
		\end{equation}
		By the same argument as in Proposition~\ref{prop:refined-average-achievability},
		any sequence \(N_n\to\infty\) satisfying
		\( 
		\limsup_{n\to\infty} A_{n,N_n}^{\rm lat}<\epsilon
		\)
		yields
		\[
		\log M^\star(n,\epsilon)
		\ge
		d\log N_n+\log\lambda_W^\star-\log d!+o(1).
		\]
		If \(N_n/\sqrt n\to c\in(0,\infty)\), then
		\(A_{n,N_n}^{\rm lat}\to A_W(c)\).
	\end{remark}
	\begin{remark} \label{remark:worst-case}
		The conservative inverse-normal Gaussian approximation is recovered from Proposition \ref{prop:refined-average-achievability} by replacing the local variances by their
		worst-case value.  Indeed, since \(V_{ij}(u)\le V_{\max}\),
		\[
		A_W(c)
		\le
		R_d\Phi\left(-\frac{1}{c\sqrt{V_{\max}}}\right).
		\]
		Thus any \(c\) strictly smaller than
		\( 
		1/{\sqrt{V_{\max}}(-\Phi^{-1}(\epsilon/R_d))}
		\)
		is admissible in Proposition~\ref{prop:refined-average-achievability}.
		Letting \(c\) approach this value gives the worst-case Gaussian
		achievability approximation. Up to bounded constants, this recovers the
		inverse-normal Gaussian approximation obtained in the full-dimensional analysis of
		\cite{feng_lower_isit_2025}. 
	\end{remark}
	
	\section{A Jeffreys-Mixture Refined Converse}
	\label{sec:jeffreys-mixture-converse}
	
	The converse in Section~\ref{sec:converse-covering} uses a finite KL covering
	of \(\cP_W\).  This is sufficient to identify the
	\(d\log\sqrt n\) growth term with a bounded remainder,
	but it does not identify the constant-order geometric structure of the
	auxiliary output distribution.  In this section we refine the auxiliary
	distribution in the meta-converse by replacing the finite covering mixture with a continuous mixture over the reachable output polytope \(\mathcal P_W\), equipped with its Fisher volume element on the affine hull.  This is
	analogous to Bayes-mixture asymptotics and minimax redundancy/regret arguments
	in smooth parametric families, where Jeffreys-type priors identify the
	Fisher-volume term \cite{clarke_barron_bayes_1990,takeuchi_barron_minimax_2024}.
	This continuous-mixture viewpoint complements the divergence-covering approach
	for noisy permutation channels in \cite{tang_capacity_2023}.

	\subsection{Jeffreys mixture and Fisher volume}
	Let
	\[
	\mathcal H:=\left\{z\in\bbR^m:\sum_{y\in\cY}z_y=0\right\}.
	\]
	Let
	\[
	\mathsf L_W:=\operatorname{span}(\mathcal P_W-\mathcal P_W)
	\subseteq \mathcal H,
	\qquad d=\dim\mathcal P_W .
	\]
	Choose an \(m\times d\) matrix \(A_W\) whose columns form an orthonormal basis
	of \(\mathsf L_W\).  For \(q\in\mathcal P_W\), define the Fisher information
	matrix on the tangent space of \(\mathcal P_W\) by
	\begin{equation}
		I_W(q)
		:=
		A_W^\top
		\diag\!\left(\frac1{q(y)}\right)
		A_W .
		\label{eq:jeffreys-fisher}
	\end{equation}
	The determinant is independent of the choice of orthonormal basis \(A_W\).
	Equivalently, \(\sqrt{\det I_W(q)}\,d\operatorname{vol}_d(q)\) is the
	Jeffreys volume element induced on the affine hull of \(\mathcal P_W\).
	
	By Assumption~\ref{ass:positive-support}, every \(q\in\mathcal P_W\) satisfies
	\(q(y)\ge p_{\min}\) for all \(y\).  Hence \(I_W(q)\) is continuous and
	uniformly nonsingular on \(\mathcal P_W\).  Define the Fisher volume
	\begin{equation}
		\mathcal J_W
		:=
		\int_{\mathcal P_W}
		\sqrt{\det I_W(q)}\,d\operatorname{vol}_d(q)
		<\infty .
		\label{eq:fisher-volume}
	\end{equation}
	We call \(\mathcal J_W\) the Fisher volume of the reachable output polytope.
	It is the Fisher-metric analogue of the Euclidean volume term
	\(\lambda_W^\star\) appearing in the simplex-lattice achievability and in the
	covering converse.
	
	Define the Jeffreys-mixture auxiliary output distribution over the
	\(d\)-dimensional reachable polytope by
	\begin{equation}
		Q_{\mathcal P_W,n}^{J}(y^n)
		:=
		\frac{1}{\mathcal J_W}
		\int_{\mathcal P_W}
		q^{\otimes n}(y^n)
		\sqrt{\det I_W(q)}\,d\operatorname{vol}_d(q).
		\label{eq:jeffreys-mixture-output}
	\end{equation}
	We also write \(Q_{Y^n}^{J}:=Q_{\mathcal P_W,n}^{J}\) if there is no confusion.
	
	\subsection{Local asymptotics}
	We use the notation \(p_n=\pi_{x^n}\), \(q_n=p_nW\), and
	\(P_{x^n}=P_{Y^n|x^n}^{\rm perm}\) from
	Section~\ref{sec:constant-remainder-converse}.  We first collect the definition used throughout this section.
	
	For \(p\in\Delta_{|\cX|-1}\), write \(q=pW\), and define
	\begin{align}
		\Sigma(p)
		&:=
		\sum_{x\in\cX}p(x)
		\left[
		\diag(W_x)-W_xW_x^\top
		\right],
		\label{eq:perm-count-covariance}
		\\
		\Sigma_{\rm iid}(q)
		&:=
		\diag(q)-qq^\top .
		\label{eq:iid-count-covariance}
	\end{align}
	Here \(W_x=W(\cdot|x)\).  The inverses and determinants of
	\(\Sigma(p)\) and \(\Sigma_{\rm iid}(q)\) are understood on \(\mathcal H\).
	
	For \(q\in\mathcal P_W\) and \(z\in\mathcal H\), define
	\begin{align}
		s_q(z)
		&:=
		A_W^\top
		\diag\!\left(\frac1{q(y)}\right)z,
		\label{eq:score-linear-term}
		\\
		\xi_q(z)
		&:=
		s_q(z)^\top I_W(q)^{-1}s_q(z).
		\label{eq:xi-definition}
	\end{align}
	The matrices \(I_W(q)\) are ordinary \(d\times d\) matrices in the
	orthonormal coordinates of \(\mathsf L_W\).
	
	For \(p\in\Delta_{|\cX|-1}\), with \(q=pW\), define
	\begin{align}
		R_p(z)
		&:=
		\frac12
		\log
		\frac{\det_{\mathcal H}\Sigma_{\rm iid}(q)}
		{\det_{\mathcal H}\Sigma(p)} 
		-
		\frac{\log e}{2}
		z^\top
		\left[
		\Sigma(p)^{-1}
		-
		\Sigma_{\rm iid}(q)^{-1}
		\right]z ,
		\label{eq:Rp-definition}
	\end{align}
	and
	\begin{equation}
		H_p(z)
		:=
		R_p(z)-\frac{\log e}{2}\xi_{pW}(z).
		\label{eq:Hp-definition}
	\end{equation}
	When \(Z_p\) appears below, it denotes a Gaussian random vector on
	\(\mathcal H\) with law
	\[
	Z_p\sim\mathcal N(0,\Sigma(p)).
	\]
	
	We next sharpen Lemma~\ref{lem:uniform-count-local} by applying a local central limit (CLT) theorem.
	
	\begin{lemma} 
		\label{lem:jeffreys-local-count-lr}
		With \(R_p\) defined in \eqref{eq:Rp-definition}, the following expansion
		holds uniformly over deterministic input sequences \(x^n\), over \(z\) in
		compact subsets of \(\mathcal H\), and over count vectors
		\(\nu=nq_n+\sqrt n z+O(1)\), where \(p_n=\pi_{x^n}\) and \(q_n=p_nW\):
		\begin{equation}
			\log
			\frac{dP_{x^n}}{dq_n^{\otimes n}}(y^n)
			=
			R_{p_n}(z)+o(1),
			\qquad N(y^n)=\nu.
			\label{eq:local-count-lr-expansion-uniform}
		\end{equation}
		Consequently, if \(p_n\to p\) and \(q=pW\), then under \(P_{x^n}\),
		\[
		Z_n:=\frac{N(Y^n)-nq_n}{\sqrt n}
		\Rightarrow
		Z_p\sim\mathcal N(0,\Sigma(p)),
		\]
		and
		\begin{equation}
			\log
			\frac{dP_{x^n}}{dq_n^{\otimes n}}(y^n)
			=
			R_p(z)+o(1),
			\qquad N(y^n)=nq_n+\sqrt n z+O(1),
			\label{eq:local-count-lr-expansion}
		\end{equation}
		uniformly for \(z\) in compact subsets of \(\mathcal H\).
	\end{lemma}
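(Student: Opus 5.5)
By Lemma~\ref{lem:count-space-likelihood-ratio} with \(r=q_n\), the log-likelihood ratio at \(y^n\) with \(N(y^n)=\nu\) equals
\[
\log P_{x^n}^{\rm cnt}(\nu)-\log\operatorname{Mult}(n,q_n)(\nu).
\]
So the statement reduces to local CLT expansions of two lattice probabilities, taken on the same count vector and the same \(\sqrt n\) scale. Both laws live on the lattice \(\{\nu\in\bbZ^m:\sum_y\nu_y=n\}\), which is a translate of \(\bbZ^m\cap\mathcal H\). Each law is a sum of \(n\) independent vectors \(B_s\) with mean sum \(nq_n\):
\begin{enumerate}
\item[\(\bullet\)] under \(P_{x^n}\), the \(B_s\) are nonidentically distributed, and their total covariance is exactly \(n\Sigma(p_n)\);
\item[\(\bullet\)] under the multinomial, the \(B_s\) are i.i.d. with covariance \(n\Sigma_{\rm iid}(q_n)\).
\end{enumerate}
The plan is to apply a uniform local CLT to each law and take the difference of the resulting logarithms.

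The first step is a local CLT for independent, nonidentically distributed lattice vectors, uniform over all input sequences. Assumption~\ref{ass:positive-support} gives \(W(y|x)\ge p_{\min}\) for every \(x,y\). This yields a uniform bound on the characteristic function of every summand away from the dual lattice: for each \(\delta>0\),
\[
|\varphi_s(\theta)|\le 1-c_\delta \quad\text{for all }s \text{ and all }\theta\in\mathcal H \text{ with } \operatorname{dist}(\theta,2\pi\bbZ^m)\ge\delta .
\]
It also gives a uniform quadratic decay \(|\varphi_s(\theta)|\le e^{-c|\theta|^2}\) near the origin. The eigenvalues of \(\Sigma(p)\) on \(\mathcal H\) are bounded above and below uniformly over \(p\in\Delta_{k-1}\), since each \(\diag(W_x)-W_xW_x^\top\) is positive definite on \(\mathcal H\) when \(W_x>0\). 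Third moments are trivially bounded. With these inputs, the standard Fourier-inversion proof, as in Bhattacharya--Rao, gives
\[
P_{x^n}^{\rm cnt}(\nu)=\frac{(2\pi n)^{-(m-1)/2}}{\sqrt{\det_{\mathcal H}\Sigma(p_n)}}\exp\!\Bigl(-\tfrac12 z_\nu^\top\Sigma(p_n)^{-1}z_\nu\Bigr)\bigl(1+o(1)\bigr),
\]
with the error uniform in \(x^n\) and in \(z_\nu=(\nu-nq_n)/\sqrt n\) ranging over compact sets. The determinant is taken in a fixed unimodular lattice coordinate system, and the same normalization appears in both expansions, so it cancels in the ratio. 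The \(O(1)\) shift in \(\nu\) moves \(z\) only by \(O(n^{-1/2})\), which the continuity of the Gaussian density absorbs. For the multinomial I would invoke the same local CLT, or Stirling's formula directly, uniformly over \(q_n\in\mathcal P_W\). This gives the analogous expression with \(\Sigma_{\rm iid}(q_n)\).

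Taking the log-ratio and converting natural logarithms via \(\log e\) yields \(R_{p_n}(z)+o(1)\), which is \eqref{eq:local-count-lr-expansion-uniform}. Because the densities are bounded below on compact sets, the relative \(o(1)\) errors become additive \(o(1)\) errors after taking logarithms.

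For the consequences, suppose \(p_n\to p\). Then \(\Sigma(p_n)\to\Sigma(p)\) and \(\Sigma_{\rm iid}(q_n)\to\Sigma_{\rm iid}(q)\), and \(R_{p_n}\to R_p\) uniformly on compacts, which gives \eqref{eq:local-count-lr-expansion}. The weak convergence \(Z_n\Rightarrow\mathcal N(0,\Sigma(p))\) follows from the Lindeberg CLT, or directly by summing the local CLT over lattice points in a box and applying Riemann-sum convergence.

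The main obstacle is uniformity of the nonidentically distributed local CLT over all \(x^n\). The key is that every summand belongs to the finite family \(\{W_x\}\), so the characteristic-function bounds are uniform over that family. The Fourier argument then splits \(\theta\) into three regions:
\begin{enumerate}
\item[\(\bullet\)] \(|\theta|\le A/\sqrt n\), where a uniform cumulant expansion controls the error;
\item[\(\bullet\)] \(A/\sqrt n\le|\theta|\le\delta\), where the bound \(e^{-cn|\theta|^2}\) controls the error;
\item[\(\bullet\)] the rest of the fundamental domain, where the bound \((1-c_\delta)^n\) controls the error.
\end{enumerate}
Each of these bounds holds uniformly over the mixture weights \(p_n\).
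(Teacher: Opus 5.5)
Your proposal follows the paper's proof essentially step for step: you reduce to the count-space ratio via Lemma~\ref{lem:count-space-likelihood-ratio} with \(r=q_n\), apply a uniform triangular-array lattice local CLT to \(P_{x^n}^{\rm cnt}\) (covariance \(\Sigma(p_n)\)) and to \(\operatorname{Mult}(n,q_n)\) (covariance \(\Sigma_{\rm iid}(q_n)\)), cancel the common normalization, pass to \(R_p\) by continuity, and obtain \(Z_n\Rightarrow Z_p\) from Lindeberg--Feller, exactly as in the paper and its Appendix~\ref{app:uniform-lattice-local-clt}.

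One slip needs fixing. For \(\theta\in\mathcal H\), \(|\varphi_s(\theta)|=1\) holds on the whole projection of \(2\pi\bbZ^m\) onto \(\mathcal H\), not only near \(2\pi\bbZ^m\); for example, with \(m=2\), \(\theta=(\pi,-\pi)\) gives \(|\varphi_s(\theta)|=1\) even though \(\operatorname{dist}(\theta,2\pi\bbZ^2)=\pi\sqrt2\). So your aperiodicity gap must be stated relative to that dual lattice of \(\bbZ^m\cap\mathcal H\), with the inversion taken over its fundamental domain, or else in deleted-coordinate form on \([-\pi,\pi]^{m-1}\) as the paper does.
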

	\begin{proof}
		Under \(P_{x^n}=P_{Y^n|x^n}^{\rm perm}\), the output count vector is not
		affected by the final permutation.  Hence its law is the same as the count
		vector of independent outputs with single-letter laws
		\(W(\cdot|x_1),\ldots,W(\cdot|x_n)\).  Since \(W(y|x)\ge p_{\min}\), the
		hypotheses of Lemma~\ref{lem:appendix-uniform-count-local-clt} in Appendix~\ref{app:uniform-lattice-local-clt} hold uniformly
		over all deterministic input sequences.
		
		The average output law is \(q_n=p_nW\), and the average count covariance is
		\[
		\Sigma(p_n)
		=
		\sum_x p_n(x)
		\left[
		\diag(W_x)-W_xW_x^\top
		\right].
		\]
		Therefore, uniformly for \(z\) in compact subsets of \(\mathcal H\) and
		\(\nu=nq_n+\sqrt n z+O(1)\),
		\[
		P_{x^n}^{\rm cnt}(\nu)
		=
		\frac{\Delta_{\mathcal H}(1+o(1))}
		{(2\pi n)^{(m-1)/2}\sqrt{\det_{\mathcal H}\Sigma(p_n)}}
		\exp\left\{
		-\frac12 z^\top\Sigma(p_n)^{-1}z
		\right\}.
		\]
		
		The same lemma applies to the multinomial law
		\(\operatorname{Mult}(n,q_n)\), since \(q_n(y)\ge p_{\min}\).  Its per-symbol
		covariance is \(\Sigma_{\rm iid}(q_n)=\diag(q_n)-q_nq_n^\top\), so
		\[
		\begin{aligned}
			\operatorname{Mult}(n,q_n)(\nu)
		=
		\frac{\Delta_{\mathcal H}(1+o(1))}
		{(2\pi n)^{(m-1)/2}
			\sqrt{\det_{\mathcal H}\Sigma_{\rm iid}(q_n)}}\exp\left\{
		-\frac12 z^\top\Sigma_{\rm iid}(q_n)^{-1}z
		\right\}.
		\end{aligned}
		\]
		
		By Lemma~\ref{lem:count-space-likelihood-ratio}, applied with \(r=q_n\),
		for every \(y^n\) with \(N(y^n)=\nu\),
		\[
		\log\frac{dP_{x^n}}{dq_n^{\otimes n}}(y^n)
		=
		\log
		\frac{P_{x^n}^{\rm cnt}(\nu)}
		{\operatorname{Mult}(n,q_n)(\nu)} .
		\]
		The common lattice factor \(\Delta_{\mathcal H}\) and the common
		\((2\pi n)^{-(m-1)/2}\) factor cancel.  Taking the ratio and converting the
		exponential term to base-two logarithms gives
		\[
		\log
		\frac{dP_{x^n}}{dq_n^{\otimes n}}(y^n)
		=
		R_{p_n}(z)+o(1),
		\]
		uniformly over deterministic \(x^n\), compact \(z\)-sets, and count vectors
		\(\nu=nq_n+\sqrt n z+O(1)\).  This proves
		\eqref{eq:local-count-lr-expansion-uniform}.
		
		If \(p_n\to p\), then \(q_n\to q=pW\).  The covariance matrices are uniformly
		nonsingular on \(\mathcal H\); hence continuity of determinant and inverse
		gives
		\[
		R_{p_n}(z)\to R_p(z)
		\]
		uniformly for \(z\) in compact subsets of \(\mathcal H\).  Finally, the centered summands form a bounded triangular array, and their
		average covariance \(\Sigma(p_n)\) converges to \(\Sigma(p)\).  Hence the
		Lindeberg--Feller CLT gives
		\[
		Z_n=\frac{N(Y^n)-nq_n}{\sqrt n}
		\Rightarrow
		Z_p\sim\mathcal N(0,\Sigma(p)).
		\]
		This proves the consequent statement and completes the proof.
	\end{proof}
	
	We first prove a version that remains uniform when the center approaches the
	boundary, provided that the boundary is still far away on the \(1/\sqrt n\)
	Laplace scale.
	
	\begin{lemma} 
		\label{lem:jeffreys-laplace-interior-zone}
		With \(Q_{\mathcal P_W,n}^{J}\) defined in \eqref{eq:jeffreys-mixture-output}.  Fix \(L<\infty\), and let \(R_n\to\infty\).  Uniformly over
		all \(q\in\operatorname{relint}(\mathcal P_W)\) satisfying
		\[
		\sqrt n\,\operatorname{dist}(q,\relbd\mathcal P_W)\ge R_n ,
		\]
		and all count vectors satisfying
		\[
		N(y^n)=nq+\sqrt n z+O(1),
		\qquad \|z\|\le L ,
		\]
		we have
		\begin{align}
			\log\frac{q^{\otimes n}(y^n)}
			{Q_{\mathcal P_W,n}^{J}(y^n)}
			&=
			\frac d2\log n
			+
			\log\mathcal J_W
			-
			\frac d2\log(2\pi) 
			-
			\frac{\log e}{2}\xi_q(z)
			+
			o(1).
			\label{eq:jeffreys-laplace-interior-zone}
		\end{align}
	\end{lemma}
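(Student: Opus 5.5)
The plan is a direct Laplace approximation of the Jeffreys mixture, carried out in the orthonormal coordinates of \(\mathsf L_W\).

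\emph{Reduce to a ratio of integrals.} For \(q'\in\mathcal P_W\), write \(q'=q+A_W h\) with \(h\in\bbR^d\). For the given \(y^n\) with count vector \(\nu=N(y^n)\), we have
\[
\frac{Q^{J}_{\mathcal P_W,n}(y^n)}{q^{\otimes n}(y^n)}
=\frac1{\mathcal J_W}\int \exp\Bigl\{\sum_y \nu_y\ln\frac{q'(y)}{q(y)}\Bigr\}\sqrt{\det I_W(q')}\,dh .
\]
The integration domain is \(\{h:q+A_Wh\in\mathcal P_W\}\). Since \(q(y),q'(y)\ge p_{\min}\), we can Taylor expand \(\ln(1+A_Wh/q)\) to third order with uniform constants. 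Using \(\sum_y(A_Wh)_y=0\) and \(\nu=nq+\sqrt n z+O(1)\), the exponent equals
\[
\sqrt n\, s_q(z)^\top h-\tfrac n2 h^\top I_W(q)h+O(n\|h\|^3+\|h\|+n^{1/2}\|h\|^2).
\]
The error terms come from the \(z\)-correction to the quadratic term, the \(O(1)\) count error, and the cubic remainder.

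\emph{Local Gaussian part.} Substitute \(h=g/\sqrt n\), which produces a factor \(n^{-d/2}\). The exponent becomes
\[
s_q(z)^\top g-\tfrac12 g^\top I_W(q)g+O(\|g\|^3/\sqrt n+\|g\|^2/\sqrt n+\|g\|/\sqrt n).
\]
Restrict to \(\|g\|\le G_n\), where \(G_n\to\infty\) is chosen with \(G_n\le R_n\) and \(G_n^3/\sqrt n\to0\). Then:
\begin{enumerate}
\item by the distance hypothesis, the ball \(\|g\|\le G_n\) lies inside the domain, so the boundary of \(\mathcal P_W\) plays no role on this region;
\item by continuity of \(q'\mapsto\sqrt{\det I_W(q')}\), the Jeffreys density equals \(\sqrt{\det I_W(q)}(1+o(1))\) uniformly there.
\end{enumerate}
Completing the square gives the Gaussian integral
\[
(2\pi)^{d/2}\det I_W(q)^{-1/2}\exp\{\tfrac12\xi_q(z)\}.
\]
The \(\det I_W(q)^{\pm1/2}\) factors cancel. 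The Gaussian tail outside \(\|g\|\le G_n\) is negligible because \(I_W(q)\succeq \lambda_{\min}I\) uniformly on \(\mathcal P_W\), and \(\|s_q(z)\|\) is bounded for \(\|z\|\le L\).

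\emph{Outer region; expected main obstacle.} We must show that the region \(\|h\|\ge G_n/\sqrt n\) contributes \(o(n^{-d/2})\) uniformly. I would use the exact identity
\[
\sum_y\nu_y\ln\frac{q'(y)}{q(y)}=-n D(\hat q\|q')+nD(\hat q\|q),\qquad \hat q=\nu/n .
\]
The following facts are available uniformly:
\begin{enumerate}
\item \(nD(\hat q\|q)=O(1)\), since \(\|z\|\le L\);
\item the KL divergence dominates squared Euclidean distance, since all measures are bounded below by \(p_{\min}\) in the relevant neighbourhood; more generally \(D\ge c\|\cdot\|^2\) by Pinsker;
\item \(\|\hat q-q\|=O(n^{-1/2})\).
\end{enumerate}
Combining these, the integrand is at most \(\exp\{C-c\,n\|h\|^2\}\) once \(\|h\|\ge C'/\sqrt n\). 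Integrating over \(\|g\|\ge G_n\) gives \(n^{-d/2}e^{-cG_n^2}\), which is \(o(n^{-d/2})\).

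Collecting terms and taking base-two logarithms (the exponent \(\tfrac12\xi_q(z)\) becomes \(\frac{\log e}{2}\xi_q(z)\)) yields \eqref{eq:jeffreys-laplace-interior-zone}. All the constants used depend only on \(W\), \(L\) and the sequence \(R_n\), which gives the claimed uniformity.
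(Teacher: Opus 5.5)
Your proposal is correct and follows essentially the same route as the paper: a Laplace approximation in orthonormal coordinates of \(\mathsf L_W\), with the quadratic expansion \(s_q(z)^\top g-\tfrac12 g^\top I_W(q)g\) on a ball that the distance hypothesis places inside \(\mathcal P_W\), continuity of the Jeffreys density, and a Pinsker-based Gaussian tail bound. The differences are minor: you use a growing cutoff \(G_n\) (with \(G_n\le R_n\) and \(G_n^3/\sqrt n\to0\)) and an explicit cubic remainder instead of a fixed ball \(B_d(0,M)\) with \(M\to\infty\), and your identity \(\sum_y\nu_y\ln\frac{q'(y)}{q(y)}=nD(\hat q\|q)-nD(\hat q\|q')\) gives a single global tail estimate, so you do not need the paper's split into a near region and a far region.
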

	
	\begin{proof}
		Let \(\mathsf L_W=\operatorname{span}(\mathcal P_W-\mathcal P_W)\), and let
		\(A_W\) be an orthonormal \(m\times d\) basis matrix for \(\mathsf L_W\).  We use the previously defined \(I_W(q)\), \(s_q(z)\), and \(\xi_q(z)\).  Since \(W\) is strictly
		positive, all \(q\in\mathcal P_W\) satisfy \(q(y)\ge p_{\min}\).  Hence
		\(I_W(q)\) is uniformly nonsingular and the Jeffreys density
		\(\sqrt{\det I_W(q)}\) is bounded above and below uniformly on \(\mathcal P_W\).
		
		Set
		\[
		v=q+\frac{A_Wh}{\sqrt n},
		\]
		and define the scaled feasible region
		\[
		D_n(q):=
		\left\{
		h\in\bbR^d:
		q+\frac{A_Wh}{\sqrt n}\in\mathcal P_W
		\right\}.
		\]
		If \(\delta(q):=\operatorname{dist}(q,\relbd\mathcal P_W)\), then
		\[
		B_d(0,\sqrt n\,\delta(q))\subseteq D_n(q).
		\]
		Therefore, under the assumption
		\(\sqrt n\,\delta(q)\ge R_n\to\infty\), every fixed ball
		\(B_d(0,M)\) is contained in \(D_n(q)\) for all sufficiently large \(n\),
		uniformly over the allowed \(q\)'s.
		
		Write \(N_y=nq(y)+\sqrt n z_y+b_y\), where \(\|z\|\le L\) and
		\(\|b\|=O(1)\).  For bounded \(h\), Taylor expansion of the natural logarithm
		gives, uniformly in \(q,z,h\),
		\[
		\sum_y N_y
		\ln\frac{q(y)+(A_Wh)_y/\sqrt n}{q(y)}
		=
		s_q(z)^\top h
		-
		\frac12 h^\top I_W(q)h
		+
		o(1).
		\]
		The term coming from \(nq\) has no first-order contribution because
		\(\sum_y(A_Wh)_y=0\).  The fluctuation term \(\sqrt n z\) gives
		\(s_q(z)^\top h\), and the second-order term from \(nq\) gives
		\(-\frac12 h^\top I_W(q)h\).  The \(O(1)\) rounding term contributes \(o(1)\)
		for bounded \(h\).
		
		Thus, on every fixed ball \(B_d(0,M)\),
		\[
		\frac{v^{\otimes n}(y^n)}{q^{\otimes n}(y^n)}
		=
		\exp\left\{
		s_q(z)^\top h-\frac12 h^\top I_W(q)h
		\right\}(1+o(1)),
		\]
		uniformly.  Also
		\[
		\sqrt{\det I_W(v)}
		=
		\sqrt{\det I_W(q)}+o(1)
		\]
		uniformly on bounded \(h\)-sets.
		
		It remains to justify that the contribution of \(\|h\|>M\) is negligible
		uniformly as \(M\to\infty\).  For \(v\in\mathcal P_W\),
		\[
		\sum_y nq(y)\ln\frac{v(y)}{q(y)}
		=
		-nD_{\rm e}(q\|v),
		\]
		where \(D_{\rm e}\) denotes natural-log KL divergence.  By Pinsker's inequality
		and strict positivity, there is \(c>0\) such that
		\[
		D_{\rm e}(q\|v)\ge c\|v-q\|^2
		\]
		uniformly for \(q,v\in\mathcal P_W\).  In the local region
		\(\|v-q\|\le\rho\), the fluctuation term is bounded by \(C_L\|h\|\), and hence
		the integrand is bounded by
		\[
		\exp\{-c\|h\|^2+C_L\|h\|+C\}.
		\]
		This gives a Gaussian tail bound, uniform in \(q\) and \(z\).  In the far region
		\(\|v-q\|\ge\rho\), compactness of \(\mathcal P_W\) and positivity of \(D_{\rm e}(q\|v)\) away from the diagonal give
		\(D_{\rm e}(q\|v)\ge c_\rho>0\), while the fluctuation term is only
		\(O(\sqrt n)\).  Hence that contribution is exponentially small in \(n\).
		Therefore the Laplace integral may be evaluated over all of \(\bbR^d\), with
		an \(o(1)\) relative error uniformly over the interior zone.
		
		Consequently,
		\[
		\begin{aligned}
			Q_{\mathcal P_W,n}^{J}(y^n)=
			\frac{q^{\otimes n}(y^n)}{\mathcal J_W}
			n^{-d/2}\sqrt{\det I_W(q)}
			\int_{\bbR^d}
			\exp\left\{
			s_q(z)^\top h-\frac12h^\top I_W(q)h
			\right\}dh
			(1+o(1)).
		\end{aligned}
		\]
		The Gaussian integral equals
		\[
		(2\pi)^{d/2}(\det I_W(q))^{-1/2}
		\exp\left\{\frac12\xi_q(z)\right\}.
		\]
		The Jeffreys factor \(\sqrt{\det I_W(q)}\) cancels the determinant factor, so
		\[
		Q_{\mathcal P_W,n}^{J}(y^n)
		=
		q^{\otimes n}(y^n)
		\frac{(2\pi)^{d/2}}{\mathcal J_W}
		n^{-d/2}
		\exp\left\{\frac12\xi_q(z)\right\}
		(1+o(1)).
		\]
		Taking base-two logarithms gives
		\eqref{eq:jeffreys-laplace-interior-zone}.
	\end{proof}
	
	The usual compact-interior Laplace approximation follows immediately.
	
	\begin{lemma}
		\label{lem:jeffreys-laplace}
		Let \(q\in\operatorname{relint}(\mathcal P_W)\).  For empirical fluctuations
		satisfying
		\[
		N(Y^n)=nq+\sqrt n z+O(1),
		\]
		with \(z\) in a compact subset of \(\mathcal H\),
		\begin{align}
			\log \frac{q^{\otimes n}(Y^n)}
			{Q_{\mathcal P_W,n}^{J}(Y^n)}
			=
			\frac d2\log n
			+
			\log\mathcal J_W
			-
			\frac d2\log(2\pi)
			-
			\frac{\log e}{2}\xi_q(z)
			+
			o(1).
			\label{eq:jeffreys-laplace-expansion}
		\end{align}
		The \(o(1)\) term is uniform for \(q\) in compact subsets of
		\(\operatorname{relint}(\mathcal P_W)\) and \(z\) in compact subsets of
		\(\mathcal H\).
	\end{lemma}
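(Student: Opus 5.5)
This statement is a direct specialization of Lemma~\ref{lem:jeffreys-laplace-interior-zone}, so the plan is to check that its hypotheses hold uniformly for \(q\) in a compact set and then read off the expansion. Fix a compact set \(\mathcal K\subseteq\operatorname{relint}(\mathcal P_W)\) and a compact set \(\mathcal Z\subseteq\mathcal H\). Choose \(L<\infty\) with \(\|z\|\le L\) for all \(z\in\mathcal Z\).

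The first step is to show that \(\mathcal K\) stays a positive distance from the boundary. The function \(q\mapsto\operatorname{dist}(q,\relbd\mathcal P_W)\) is continuous, since it is \(1\)-Lipschitz. It is also strictly positive on \(\operatorname{relint}(\mathcal P_W)\), because \(\relbd\mathcal P_W\) is closed and disjoint from the relative interior. A continuous positive function attains a positive minimum on a compact set, so
\[
\delta_{\mathcal K}:=\min_{q\in\mathcal K}\operatorname{dist}(q,\relbd\mathcal P_W)>0 .
\]

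Next, set \(R_n:=\sqrt n\,\delta_{\mathcal K}\). Then \(R_n\to\infty\), and every \(q\in\mathcal K\) satisfies
\[
\sqrt n\,\operatorname{dist}(q,\relbd\mathcal P_W)\ge R_n .
\]
Hence all of \(\mathcal K\) lies in the interior zone of Lemma~\ref{lem:jeffreys-laplace-interior-zone} for this choice of \(R_n\) and the fixed \(L\). Likewise, every count vector \(N(Y^n)=nq+\sqrt n z+O(1)\) with \(z\in\mathcal Z\) meets the count condition of that lemma. Applying the lemma to each realization \(y^n=Y^n\) gives \eqref{eq:jeffreys-laplace-expansion}. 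The \(o(1)\) term is uniform over \(q\in\mathcal K\) and \(z\in\mathcal Z\), because the uniformity in Lemma~\ref{lem:jeffreys-laplace-interior-zone} is stated over the whole zone. That zone contains \(\mathcal K\times\mathcal Z\).

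I see no real obstacle here. The only points to check are two. First, the \(O(1)\) rounding constant must be the same one allowed in the earlier lemma; it is, since both statements are uniform over a fixed bounded \(O(1)\). Second, \(R_n\) must be permitted to depend on \(\mathcal K\); this holds because the earlier lemma applies to any sequence \(R_n\to\infty\). All the substantive analysis, namely the Laplace expansion and the tail control, is inherited from Lemma~\ref{lem:jeffreys-laplace-interior-zone}.
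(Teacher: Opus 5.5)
Your proposal is correct and follows the paper's proof essentially verbatim: fix the compact sets, use \(\delta_K=\inf_{q\in K}\operatorname{dist}(q,\relbd\mathcal P_W)>0\) and \(L=\sup_{z}\|z\|\), then invoke Lemma~\ref{lem:jeffreys-laplace-interior-zone} with a sequence \(R_n\to\infty\) satisfying \(R_n\le\sqrt n\,\delta_K\). The only difference is cosmetic: the paper takes \(R_n=\log n\) and you take \(R_n=\sqrt n\,\delta_{\mathcal K}\), and both choices are admissible.
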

	
	\begin{proof}
		It suffices to prove the asserted uniformity with \(q\) restricted to an
		arbitrary compact set \(K\subset\operatorname{relint}(\mathcal P_W)\) and
		\(z\) restricted to an arbitrary compact set \(B\subset\mathcal H\).  Since
		\(K\) is compact and lies in the relative interior of \(\mathcal P_W\),
		\[
		\delta_K
		:=
		\inf_{q\in K}\operatorname{dist}(q,\relbd\mathcal P_W)
		>0 .
		\]
		Let \(L:=\sup_{z\in B}\|z\|<\infty\), and choose any sequence
		\(R_n\to\infty\) satisfying \(R_n\le \sqrt n\,\delta_K\) for all sufficiently
		large \(n\), for instance \(R_n=\log n\).  Then, uniformly over \(q\in K\),
		\[
		\sqrt n\,\operatorname{dist}(q,\relbd\mathcal P_W)
		\ge R_n
		\]
		for all sufficiently large \(n\).  Applying
		Lemma~\ref{lem:jeffreys-laplace-interior-zone} with this \(L\) and \(R_n\)
		gives
		\[
		\log \frac{q^{\otimes n}(Y^n)}
		{Q_{\mathcal P_W,n}^{J}(Y^n)}
		=
		\frac d2\log n
		+
		\log\mathcal J_W
		-
		\frac d2\log(2\pi)
		-
		\frac{\log e}{2}\xi_q(z)
		+
		o(1),
		\]
		uniformly for \(q\in K\), \(z\in B\), and count vectors satisfying
		\(N(Y^n)=nq+\sqrt n z+O(1)\).  This is exactly the claimed compact-interior
		statement.
	\end{proof}
	
	Combining the likelihood expansion with the Jeffreys-mixture
	Laplace expansion gives the following information-density limit.
	
	\begin{theorem}
		\label{thm:jeffreys-information-density}
		Let \(x^n\in\cX^n\) be deterministic input sequences with
		\(p_n:=\pi_{x^n}\to p\).  Set \(q=pW\), and suppose
		\(q\in\operatorname{relint}(\mathcal P_W)\).  Set \(q_n:=p_nW\).
		Then
		\begin{equation}
			\log
			\frac{dP_{x^n}}{dQ_{\mathcal P_W,n}^{J}}(Y^n)
			-
			\frac d2\log n
			-
			\log\mathcal J_W
			+
			\frac d2\log(2\pi)
			\Rightarrow
			H_p(Z_p),
			\label{eq:jeffreys-info-density-weak-limit}
		\end{equation}
		where \(Z_p\sim\mathcal N(0,\Sigma(p))\) on \(\mathcal H\), and \(H_p\) is
		defined in \eqref{eq:Hp-definition}.
	\end{theorem}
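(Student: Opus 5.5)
The plan is to split the log-likelihood ratio through the intermediate product law \(q_n^{\otimes n}\):
\[
\log\frac{dP_{x^n}}{dQ^{J}_{\mathcal P_W,n}}(Y^n)
=
\log\frac{dP_{x^n}}{dq_n^{\otimes n}}(Y^n)
+
\log\frac{q_n^{\otimes n}(Y^n)}{Q^{J}_{\mathcal P_W,n}(Y^n)} .
\]
Both factors are exchangeable, so each term is a function of the count vector \(N(Y^n)\) only. We therefore write \(N(Y^n)=nq_n+\sqrt n Z_n\) with \(Z_n\in\mathcal H\). By Lemma~\ref{lem:jeffreys-local-count-lr}, \(Z_n\Rightarrow Z_p\sim\mathcal N(0,\Sigma(p))\). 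This gives tightness: for every \(\gamma>0\) there is \(L\) with \(\bbP[\|Z_n\|\le L]\ge1-\gamma\) for all large \(n\). It is enough to show that, on the event \(\{\|Z_n\|\le L\}\), the left side of \eqref{eq:jeffreys-info-density-weak-limit} equals \(H_p(Z_n)+o(1)\) uniformly. Since \(H_p\) is continuous, the continuous mapping theorem then gives \(H_p(Z_n)\Rightarrow H_p(Z_p)\). A standard approximation argument (letting \(L\to\infty\) after \(n\to\infty\)) then yields the weak limit.

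For the first term, the uniform expansion \eqref{eq:local-count-lr-expansion} with \(p_n\to p\) gives \(\log\frac{dP_{x^n}}{dq_n^{\otimes n}}(Y^n)=R_p(Z_n)+o(1)\) uniformly on \(\|Z_n\|\le L\). The count vector is exactly \(nq_n+\sqrt nZ_n\), so the \(O(1)\) slack is zero.

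For the second term we apply the interior-zone Laplace lemma with center \(q_n\), not \(q\). Lemma~\ref{lem:jeffreys-laplace-interior-zone} is stated uniformly over centers with \(\sqrt n\,\operatorname{dist}(q_n,\relbd\mathcal P_W)\ge R_n\), and it requires the center to lie in \(\operatorname{relint}(\mathcal P_W)\). Since \(q_n\to q\in\operatorname{relint}(\mathcal P_W)\), there is \(\delta>0\) with \(\operatorname{dist}(q_n,\relbd\mathcal P_W)\ge\delta\) for large \(n\). Taking \(R_n=\log n\) satisfies the hypothesis, and \(q_n\in\operatorname{relint}(\mathcal P_W)\) holds eventually. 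Equivalently, we may invoke Lemma~\ref{lem:jeffreys-laplace} on the compact set \(\{q_n\}_{n\ge n_0}\cup\{q\}\subset\operatorname{relint}(\mathcal P_W)\). Either way,
\[
\log\frac{q_n^{\otimes n}(Y^n)}{Q^{J}(Y^n)}
=
\frac d2\log n+\log\mathcal J_W-\frac d2\log(2\pi)-\frac{\log e}{2}\xi_{q_n}(Z_n)+o(1),
\]
uniformly on \(\|Z_n\|\le L\).

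The remaining step is to replace \(\xi_{q_n}\) by \(\xi_q\), which is the quadratic form inside \(H_p\) through \(\xi_{pW}\). Both \(I_W(\cdot)^{-1}\) and \(\diag(1/q(y))\) are continuous and uniformly bounded on \(\mathcal P_W\) under Assumption~\ref{ass:positive-support}. Hence \(\sup_{\|z\|\le L}|\xi_{q_n}(z)-\xi_q(z)|\to0\). Summing the two expansions gives \(R_p(Z_n)-\frac{\log e}{2}\xi_q(Z_n)+o(1)=H_p(Z_n)+o(1)\), as required.

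The main obstacle is bookkeeping rather than analysis. The Laplace expansion must be applied at the moving center \(q_n\), which is where the uniform-in-center statement of Lemma~\ref{lem:jeffreys-laplace-interior-zone} is needed. In addition, the two local expansions, which hold on compact \(z\)-sets, must be combined with the tightness of \(Z_n\) to obtain a genuine weak-convergence statement, without ever needing control outside \(\{\|Z_n\|\le L\}\).
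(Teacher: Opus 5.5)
Your proposal is correct and follows essentially the same route as the paper's proof. Both arguments split the log-likelihood ratio through \(q_n^{\otimes n}\), restrict to the events \(\{\|Z_n\|\le L\}\) using tightness, apply Lemma~\ref{lem:jeffreys-local-count-lr} to the first term and the compact-interior Laplace expansion at the moving center \(q_n\) to the second, replace \(\xi_{q_n}\) by \(\xi_q\) by continuity, and conclude with the continuous mapping theorem. The only difference is cosmetic: you use the \(R_p\) form \eqref{eq:local-count-lr-expansion} directly, whereas the paper writes \(R_{p_n}(Z_n)\) first and then passes to \(R_p(Z_n)\) by continuity.
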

	\begin{proof}
		Since \(q_n\to q\) and \(q\in\operatorname{relint}(\mathcal P_W)\), we have
		\(q_n\in\operatorname{relint}(\mathcal P_W)\) for all sufficiently large \(n\).
		Under \(P_{x^n}\), define
		\[
		Z_n:=\frac{N(Y^n)-nq_n}{\sqrt n}.
		\]
		Decompose
		\[
		\log\frac{dP_{x^n}}{dQ_{\mathcal P_W,n}^{J}}(Y^n)
		=
		\log\frac{dP_{x^n}}{dq_n^{\otimes n}}(Y^n)
		+
		\log\frac{q_n^{\otimes n}(Y^n)}
		{Q_{\mathcal P_W,n}^{J}(Y^n)} .
		\]
		
		Fix \(L<\infty\), and define
		\[
		E_{n,L}:=\{\|Z_n\|\le L\}.
		\]
		Since \(Z_n\Rightarrow Z_p\), the sequence \(\{Z_n\}\) is tight.  Hence, for
		every \(\delta>0\), \(L\) can be chosen large enough so that
		\[
		\limsup_{n\to\infty} P_{x^n}(E_{n,L}^c)\le \delta .
		\]
		Moreover, since \(q_n\to q\in\operatorname{relint}(\mathcal P_W)\), the points
		\(q_n\) eventually belong to a compact subset of
		\(\operatorname{relint}(\mathcal P_W)\).
		
		On the event \(E_{n,L}\), the variable \(Z_n\) lies in the compact set
		\(\{z\in\mathcal H:\|z\|\le L\}\).  Therefore the compact-uniform expansions in
		Lemma~\ref{lem:jeffreys-local-count-lr} and
		Lemma~\ref{lem:jeffreys-laplace} apply uniformly with \(z=Z_n\) and \(q=q_n\).
		Thus, on \(E_{n,L}\),
		\[
		\log\frac{dP_{x^n}}{dq_n^{\otimes n}}(Y^n)
		=
		R_{p_n}(Z_n)+o(1),
		\]
		and
		\[
		\begin{aligned}
			\log\frac{q_n^{\otimes n}(Y^n)}
			{Q_{\mathcal P_W,n}^{J}(Y^n)}
			=
			\frac d2\log n
			+
			\log\mathcal J_W
			-
			\frac d2\log(2\pi)
			-
			\frac{\log e}{2}\xi_{q_n}(Z_n)
			+
			o(1),
		\end{aligned}
		\]
		where the two \(o(1)\) terms are uniform on \(E_{n,L}\).
		
		Since \(L\) can be chosen so that \(P_{x^n}(E_{n,L}^c)\) is arbitrarily small,
		the preceding two displays imply the corresponding \(o_{P_{x^n}}(1)\)
		expansions:
		\[
		\log\frac{dP_{x^n}}{dq_n^{\otimes n}}(Y^n)
		=
		R_{p_n}(Z_n)+o_{P_{x^n}}(1),
		\]
		and
		\[
		\begin{aligned}
			\log\frac{q_n^{\otimes n}(Y^n)}
			{Q_{\mathcal P_W,n}^{J}(Y^n)}
			=
			\frac d2\log n
			+
			\log\mathcal J_W
			-
			\frac d2\log(2\pi)
			-
			\frac{\log e}{2}\xi_{q_n}(Z_n)
			+
			o_{P_{x^n}}(1).
		\end{aligned}
		\]
		Finally, \(p_n\to p\), \(q_n\to q=pW\), and the tightness of \(Z_n\) give
		\[
		R_{p_n}(Z_n)=R_p(Z_n)+o_{P_{x^n}}(1),
		\qquad
		\xi_{q_n}(Z_n)=\xi_q(Z_n)+o_{P_{x^n}}(1).
		\]
		
		Adding the two expansions yields
		\[
		\begin{aligned}
			\log
			\frac{dP_{x^n}}{dQ_{\mathcal P_W,n}^{J}}(Y^n)
			&=
			\frac d2\log n
			+
			\log\mathcal J_W
			-
			\frac d2\log(2\pi) \\
			&\quad
			+
			H_p(Z_n)
			+
			o_{P_{x^n}}(1),
		\end{aligned}
		\]
		where
		\[
		H_p(z)=R_p(z)-\frac{\log e}{2}\xi_{pW}(z).
		\]
		Since \(Z_n\Rightarrow Z_p\) and \(H_p\) is continuous, the continuous mapping
		theorem gives \eqref{eq:jeffreys-info-density-weak-limit}.
	\end{proof}
	
	\subsection{Testing bound and converse}
	The fixed-dimensional testing term is defined through the limiting Gaussian
	experiment.  In this limiting problem, a randomized test is represented by a
	measurable function \(\varphi:\mathcal H\to[0,1]\).
	
	For \(\alpha\in(0,1)\) and \(p\in\Delta_{|\cX|-1}\), define
	\begin{equation}
		B_\alpha(p)
		:=
		\inf_{\substack{\varphi:\mathcal H\to[0,1]\\
				\mathbb E[\varphi(Z_p)]\ge\alpha}}
		\mathbb E\!\left[
		2^{-H_p(Z_p)}\varphi(Z_p)
		\right],
		\label{eq:Balpha-definition}
	\end{equation}
	where \(Z_p\sim\mathcal N(0,\Sigma(p))\) on \(\mathcal H\), and \(H_p\) is
	defined in \eqref{eq:Hp-definition}.
	
	The quantity \(B_\alpha(p)\) comes from the interior information-density limit
	in Theorem~\ref{thm:jeffreys-information-density}.  However, the meta-converse
	requires a uniform bound over all codewords, including those whose induced
	output types approach the relative boundary of \(\mathcal P_W\).  To prevent
	such boundary types from determining the worst-codeword supremum, we use a
	stratified auxiliary distribution that assigns a small amount of mass to
	Jeffreys mixtures on all proper faces of \(\mathcal P_W\).
	
	Let \(\mathfrak F_\partial\) be the finite collection of nonempty proper faces
	of \(\mathcal P_W\).  For each \(F\in\mathfrak F_\partial\), let
	\[
	\mathsf L_F:=\operatorname{span}(F-F),
	\qquad
	r_F:=\dim F .
	\]
	If \(r_F\ge1\), choose an orthonormal basis matrix \(A_F\) for
	\(\mathsf L_F\), define
	\[
	I_F(q):=
	A_F^\top \diag\!\left(\frac1{q(y)}\right) A_F,
	\qquad q\in F ,
	\]
	set
	\[
	\mathcal J_F:=
	\int_F \sqrt{\det I_F(q)}\,d\operatorname{vol}_{r_F}(q),
	\]
	where volume is taken in \(\aff(F)\), and define
	\[
	Q_{F,n}^{J}(y^n)
	:=
	\frac1{\mathcal J_F}
	\int_F q^{\otimes n}(y^n)
	\sqrt{\det I_F(q)}\,d\operatorname{vol}_{r_F}(q).
	\]
	If \(r_F=0\), say \(F=\{q_F\}\), set
	\[
	\mathcal J_F:=1,
	\qquad
	Q_{F,n}^{J}(y^n):=q_F^{\otimes n}(y^n).
	\]
	
	Fix weights \(\omega_F>0\), \(F\in\mathfrak F_\partial\), with
	\(\sum_{F\in\mathfrak F_\partial}\omega_F=1\).  Choose
	\(\gamma\in(0,1/2)\) and set \(\delta_n:=n^{-\gamma}\).  The stratified
	Jeffreys auxiliary output distribution is
	\begin{equation}
	Q_{Y^n}^{\rm str}
	:=
	(1-\delta_n)Q_{\mathcal P_W,n}^{J}
	+
	\delta_n
	\sum_{F\in\mathfrak F_\partial}\omega_F Q_{F,n}^{J}, \label{eq:jeffreys_aux}
	\end{equation}
	where \(Q_{\mathcal P_W,n}^{J}\) is the Jeffreys-mixture auxiliary output
	distribution
	defined in \eqref{eq:jeffreys-mixture-output}.
	
	As in auxiliary-output mixture constructions for finite-blocklength converses
	\cite{tomamichel_tight_2013}, the mixture is designed to control different
	classes of codewords at the correct polynomial scale.  The component
	\(Q_{\mathcal P_W,n}^{J}\) in the first term of \eqref{eq:jeffreys_aux}
	handles codewords whose induced output types remain in the interior of
	\(\mathcal P_W\), while the lower-dimensional face components reserve enough
	mass for boundary types.  Since \(\delta_n=n^{-\gamma}\) is vanishing but only
	polynomially small, it does not affect the interior constant-order term and
	still controls the boundary contribution.
	
	For \(\alpha\in(0,1)\), define
	\[
	C_{W,\alpha}^{J,{\rm int}}
	:=
	-\frac d2\log(2\pi)
	+
	\sup_{p\in\Delta_{|\cX|-1}}
	[-\log B_\alpha(p)] .
	\]
	This constant is finite.  Indeed, strict positivity of \(W\) gives uniform
	upper and lower spectral bounds for \(\Sigma(p)\), \(\Sigma_{\rm iid}(pW)\),
	and \(I_W(pW)\).  Hence \(H_p(z)\) has at most quadratic growth uniformly in
	\(p\), while the Gaussian laws \(Z_p\) are uniformly tight.  Therefore
	\(\inf_p B_\alpha(p)>0\).
	
	The next result formalizes this split between interior and boundary types.
	\begin{proposition}[Uniform testing bound for the stratified mixture]
		\label{prop:stratified-uniform-testing-bound}
		Under Assumption~\ref{ass:positive-support}, let
		\(d=\dim\mathcal P_W\ge1\). For every \(\alpha\in(0,1)\),
		\[
		\begin{aligned}
		&\limsup_{n\to\infty}
		\left[
		\sup_{x^n\in\cX^n}
		-\log\beta_\alpha
		\left(
		P_{Y^n|x^n}^{\rm perm},
		Q_{Y^n}^{\rm str}
		\right) -
		\frac d2\log n
		\right] \nonumber \\
		&\le
		C_{W,\alpha}^{J,{\rm int}} 
		+
		\log\mathcal J_W .
		\end{aligned}
		\]
	\end{proposition}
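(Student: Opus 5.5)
We argue by contradiction along a subsequence, which turns the worst-codeword supremum into a statement about convergent input types. Suppose the claim fails. Then there exist \(\eta>0\), a subsequence \(n_\ell\to\infty\), and input sequences \(x^{n_\ell}\) such that
\[
-\log\beta_\alpha\bigl(P_{x^{n_\ell}},Q^{\rm str}_{Y^{n_\ell}}\bigr)-\tfrac d2\log n_\ell\ \ge\ C^{J,{\rm int}}_{W,\alpha}+\log\mathcal J_W+\eta .
\]
By compactness of \(\Delta_{|\cX|-1}\) we pass to a further subsequence with \(p_n:=\pi_{x^n}\to p\), and we set \(q=pW\). Throughout we use the elementary bound \(\beta_\alpha(P,Q_1+Q_2)\ge\beta_\alpha(P,Q_1)\), so that any sub-component of \(Q^{\rm str}\) may be used at the cost of its weight. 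The argument then splits according to where \(q_n=p_nW\) sits relative to \(\relbd\mathcal P_W\) on the \(1/\sqrt n\) scale. Fix \(R_n\to\infty\) slowly, say \(R_n=\log n\).

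\emph{Interior zone}, \(\sqrt n\,\operatorname{dist}(q_n,\relbd\mathcal P_W)\ge R_n\). We keep only the component \((1-\delta_n)Q^{J}_{\mathcal P_W,n}\); since \(\delta_n\to0\), it costs \(o(1)\). On \(E_{n,L}=\{\|Z_n\|\le L\}\) we combine Lemma~\ref{lem:jeffreys-local-count-lr} with Lemma~\ref{lem:jeffreys-laplace-interior-zone}. This is where the interior-zone lemma matters: it allows \(q\) to lie on the boundary, provided \(q_n\) stays \(R_n/\sqrt n\) away from it. The two lemmas give
\[
\log\frac{dP_{x^n}}{dQ^J_{\mathcal P_W,n}}=\tfrac d2\log n+\log\mathcal J_W-\tfrac d2\log(2\pi)+H_{p_n}(Z_n)+o(1)
\]
uniformly on \(E_{n,L}\). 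Given any test \(\varphi\) with \(P_{x^n}[\varphi=1]\ge\alpha\), we lower bound \(Q[\varphi=1]\ge E_P[\varphi\,2^{-\log dP/dQ}\mathbf 1_{E_{n,L}}]\). The power restricted to \(E_{n,L}\) is at least \(\alpha-\delta'\), where \(\delta'\) is small by uniform tightness of the \(Z_n\) (Lemma~\ref{lem:uniform-count-local}(i)). Since \(Z_n\Rightarrow Z_p\) by the Lindeberg CLT and \(H_{p_n}\to H_p\) uniformly on compacts, we obtain
\[
\liminf\, 2^{\frac d2\log n}\mathcal J_W(2\pi)^{-d/2}\beta_\alpha\ \ge\ B_{\alpha-\delta'}(p)-o_L(1).
\]
The passage from the discrete test to the Gaussian limit is a Portmanteau-type argument on functions of the form \(\varphi\cdot 2^{-H}\). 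It is cleanest done via the likelihood-ratio form of the optimal test, using continuity of \(H_p\) and the fact that \(Z_p\) has a density on \(\mathcal H\). We then let \(\delta'\downarrow0\) and use continuity of \(\alpha\mapsto B_\alpha(p)\), which holds because the Gaussian law has no atoms. This contradicts the choice of \(\eta\), since \(-\log B_\alpha(p)\le\sup_p[-\log B_\alpha(p)]\).

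\emph{Boundary zone}, \(\sqrt n\,\operatorname{dist}(q_n,\relbd\mathcal P_W)<R_n\). Let \(F\) be a proper face with \(\operatorname{dist}(q_n,F)\le C R_n/\sqrt n\); such a face exists because \(\mathcal P_W\) is a polytope, and along the subsequence we may take the same \(F\) throughout. We keep only \(\delta_n\omega_F Q^J_{F,n}\), at a cost of \(\gamma\log n+O(1)\). We then carry out the same Laplace analysis in the face, of dimension \(r_F\le d-1\). The center \(q_n\) lies at distance \(O(R_n/\sqrt n)\) from \(F\), and projecting it gives a point \(\bar q_n\in F\) with \(nD(q_n\|\bar q_n)=O(R_n^2)\). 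Choosing the face minimally, so that \(\bar q_n\) lies deep in \(\operatorname{relint}F\) on its own scale, we argue as in Lemma~\ref{lem:symbol-relaxed-intrinsic} and Lemma~\ref{lem:jeffreys-laplace-interior-zone}. Bounding the integral from below over an \(h\)-ball of radius \(O(1)\) in \(\mathsf L_F\), we expect
\[
-\log\beta_\alpha(P_{x^n},Q^J_{F,n})\le \tfrac{r_F}{2}\log n+O(R_n^2).
\]
Hence the boundary zone contributes at most \(\tfrac{d-1}{2}\log n+\gamma\log n+O(\log^2 n)\). Because \(\gamma<1/2\), this is \(\tfrac d2\log n-\Omega(\log n)\), which is far below the claimed bound, so this case cannot host the violating subsequence either.

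The main obstacle is the boundary zone. The \(O(R_n^2)\) loss from the distance to the face must stay \(o(\log n)\), which forces \(R_n\) to grow slower than \(\sqrt{\log n}\). At the same time, the interior-zone lemma needs only \(R_n\to\infty\), so this is consistent. The delicate step is choosing the face hierarchically (recursing over faces of \(F\) when \(\bar q_n\) is itself near \(\relbd F\)), so that a uniform local lower bound on the face mixture holds. I would first try an induction on face dimension, using the crude covering-type bound of Lemma~\ref{lem:symbol-relaxed-intrinsic} in place of exact Laplace asymptotics, since only the polynomial order matters there.
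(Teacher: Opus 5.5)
Your overall architecture matches the paper's. You use compactness along a violating subsequence. The interior zone is handled by Lemma~\ref{lem:jeffreys-laplace-interior-zone} together with Lemma~\ref{lem:jeffreys-local-count-lr}; your truncation to $E_{n,L}$ plus continuity of $\alpha\mapsto B_\alpha(p)$ is a workable substitute for the paper's lower-semicontinuity step via the CVaR variational formula. In the boundary zone, the face component $\delta_n\omega_F Q^J_{F,n}$ costs $\gamma\log n$ but gains $\frac{d-r_F}{2}\log n$.

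The boundary zone, however, is not closed as written. First, with your stated choice $R_n=\log n$, the loss $O(R_n^2)=O(\log^2 n)$ swamps the $(\frac12-\gamma)\log n$ margin. So the claim that $\frac{d-1}{2}\log n+\gamma\log n+O(\log^2 n)$ is $\frac d2\log n-\Omega(\log n)$ is false. You notice this at the end, but the split itself must be fixed. Either take $R_n\to\infty$ with $R_n^2=o(\log n)$, or, as the paper does, take $R_n=a\sqrt{\log n}$ with $a$ small enough that $\gamma+Ka^2<\frac12$. Here $K$ is the constant in the quadratic Taylor loss $Kn\|v-q_n\|^2$.

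Second, and more substantively, the uniform lower bound on $Q^J_{F,n}$ is only ``expected''. The hierarchical face selection and induction you propose for it are not carried out. The missing idea is that neither a hierarchy nor face-level Laplace asymptotics is needed.

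\begin{itemize}
\item Each face $F$ is a convex body in $\aff(F)$, so it contains a relative ball $B(x_F,\varrho_F)$.
\item For every $\bar q\in F$, including $\bar q\in\relbd F$, the cone $\conv(\{\bar q\}\cup B(x_F,\varrho_F))\subseteq F$ gives $\vol_{r_F}(F\cap B(\bar q,\rho))\ge c_F\rho^{r_F}$ for $\rho\le\rho_0$, uniformly in $\bar q$.
\item So take any face $F$ containing the point $\bar q_n\in\relbd\mathcal P_W$ nearest to $q_n$. Integrate over $F\cap B(\bar q_n,\rho_0/\sqrt n)$ instead of over a full $h$-ball in $\mathsf L_F$.
\end{itemize}

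On $\mathcal A_n=\{\|N(Y^n)-nq_n\|\le L\sqrt n\}$, three bounds then combine:
\begin{itemize}
\item every such $v$ satisfies the crude pointwise bound $v^{\otimes n}(y^n)\ge 2^{-O(R_n^2)}q_n^{\otimes n}(y^n)$;
\item the normalized Jeffreys density on $F$ is bounded below, and vertex faces are trivial;
\item Lemmas~\ref{lem:count-space-likelihood-ratio} and~\ref{lem:uniform-count-local} give $dP_{x^n}/dq_n^{\otimes n}\le C_1$.
\end{itemize}
Hence $dP_{x^n}/dQ^{\rm str}_{Y^n}\le C n^{r_F/2+\gamma}2^{O(R_n^2)}$ on $\mathcal A_n$. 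The restriction argument of Lemma~\ref{lem:symbol-relaxed-intrinsic} then yields $-\log\beta_\alpha\le(\frac d2-\kappa)\log n$ for some $\kappa>0$. This is how the paper closes the boundary case.
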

	\begin{proof}
		See Appendix \ref{apx:proof_stratified-uniform-testing-bound}
	\end{proof}
	
	Then, the main result in this section is the following
	
	\begin{theorem}
		\label{thm:stratified-jeffreys-refined-converse}
		Under Assumption~\ref{ass:positive-support}, let \(d=\dim\mathcal P_W\ge1\).
		For every \(\epsilon\in(0,1)\),
		\[
		\log M^\star(n,\epsilon)
		\le
		\frac d2\log n
		+
		\log\mathcal J_W
		+
		C_{W,\epsilon}^{J,{\rm str}}
		+
		o(1),
		\]
		where
		\[
		C_{W,\epsilon}^{J,{\rm str}}
		:=
		\inf_{\eta\in(\epsilon,1)}
		\left\{
		C_{W,1-\eta}^{J,{\rm int}}
		-
		\log\left(1-\frac{\epsilon}{\eta}\right)
		\right\}
		\]
		is finite.
	\end{theorem}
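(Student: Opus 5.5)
The plan is to combine the meta-converse of Lemma~\ref{lem:symbol-relaxed-meta-converse-perm} with the stratified auxiliary distribution \(Q_{Y^n}^{\rm str}\) of \eqref{eq:jeffreys_aux}, and to control the resulting testing term by Proposition~\ref{prop:stratified-uniform-testing-bound}.

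\textbf{Step 1: one fixed \(\eta\).} Fix any \(\eta\in(\epsilon,1)\) and set \(\alpha:=1-\eta\in(0,1)\). Applying Lemma~\ref{lem:symbol-relaxed-meta-converse-perm} with \(Q_{Y^n}=Q_{Y^n}^{\rm str}\) to an optimal code gives, for every \(n\),
\[
\log M^\star(n,\epsilon)-\frac d2\log n
\le
\sup_{x^n}\Bigl[-\log\beta_{\alpha}\bigl(P^{\rm perm}_{Y^n|x^n},Q^{\rm str}_{Y^n}\bigr)\Bigr]-\frac d2\log n-\log\Bigl(1-\frac{\epsilon}{\eta}\Bigr).
\]
Taking \(\limsup_{n\to\infty}\) and invoking Proposition~\ref{prop:stratified-uniform-testing-bound} with this \(\alpha\) yields
\[
\limsup_{n\to\infty}\Bigl[\log M^\star(n,\epsilon)-\frac d2\log n-\log\mathcal J_W\Bigr]\le C^{J,{\rm int}}_{W,1-\eta}-\log\Bigl(1-\frac{\epsilon}{\eta}\Bigr).
\]

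\textbf{Step 2: optimize over \(\eta\).} The left-hand side does not depend on \(\eta\), so we may take the infimum over \(\eta\in(\epsilon,1)\) on the right. This gives \(\limsup_n[\cdots]\le C^{J,{\rm str}}_{W,\epsilon}\). If \(c_n\) denotes the bracketed sequence, then \(\limsup_n c_n\le C\) is equivalent to \(c_n\le C+o(1)\) with \(o(1):=\max\{c_n-C,0\}\to0\). This is exactly the form stated in Theorem~\ref{thm:stratified-jeffreys-refined-converse}. No uniformity in \(\eta\) is required, since the infimum is taken only after passing to the limit.

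\textbf{Step 3: finiteness of \(C^{J,{\rm str}}_{W,\epsilon}\).} For the upper bound, the term at any single \(\eta\) is finite, because \(C^{J,{\rm int}}_{W,\alpha}<\infty\) as argued after its definition (uniformly, \(\inf_pB_\alpha(p)>0\)). For the lower bound, I would use a trivial test. The test \(\varphi\equiv\alpha\) is admissible in \eqref{eq:Balpha-definition}, so \(B_\alpha(p)\le\alpha\,\mathbb E[2^{-H_p(Z_p)}]\). A change-of-measure identity should show that \(2^{-H_p}\) integrates to at most a constant, since it is the limit of a likelihood ratio of probability measures; by Fatou, \(\mathbb E[2^{-H_p(Z_p)}]\) is dominated by the limiting total \(Q\)-mass, which is at most \(1\) up to the Gaussian normalization absorbed into \(-\frac d2\log(2\pi)\). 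Hence \(-\log B_\alpha(p)\ge\log(1/\alpha)-O(1)\), and the bracketed term is bounded below. Alternatively, finiteness from below follows from the achievability bound, Proposition~\ref{prop:refined-average-achievability}: since the converse must dominate it, \(C^{J,{\rm str}}_{W,\epsilon}\ge -\log\mathcal J_W+d\log c+\log\lambda^\star_W-\log d!>-\infty\). I would use this second argument because it avoids the Gaussian computation.

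The main obstacle is entirely contained in Proposition~\ref{prop:stratified-uniform-testing-bound}, which is assumed here: the uniform control of boundary codewords by the face components. Given that proposition, the remaining difficulty is to pass the \(\limsup\) and the infimum over \(\eta\) in the correct order, which Step 2 does.
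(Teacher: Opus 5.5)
Your Steps 1--2 are the paper's proof: the meta-converse with $Q_{Y^n}^{\rm str}$, then Proposition~\ref{prop:stratified-uniform-testing-bound} at each fixed $\eta$, then the infimum over $\eta$ taken after the $\limsup$. Your lower bound on $C_{W,\epsilon}^{J,{\rm str}}$ via Proposition~\ref{prop:refined-average-achievability} is a valid way to supply the finiteness that the paper only asserts; the first route you discarded would not work, since $\mathbb E[2^{-H_p(Z_p)}]=\infty$ because $z^\top\Sigma_{\rm iid}(q)^{-1}z-\xi_q(z)$ vanishes on $\mathsf L_W$, so $\varphi\equiv\alpha$ gives a vacuous bound (Fatou cannot help, as the normalization carries the divergent $\frac d2\log n$).
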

	
	\begin{proof}
		Apply Lemma~\ref{lem:symbol-relaxed-meta-converse-perm} with
		\(Q_{Y^n}=Q_{Y^n}^{\rm str}\).  For every \(\eta\in(\epsilon,1)\), set
		\(\alpha=1-\eta\).  Then
		\[
		\log M^\star(n,\epsilon)
		\le
		\sup_{x^n\in\cX^n}
		-\log\beta_\alpha
		\left(
		P_{Y^n|x^n}^{\rm perm},
		Q_{Y^n}^{\rm str}
		\right)
		-
		\log\left(1-\frac{\epsilon}{\eta}\right).
		\]
		Using Proposition~\ref{prop:stratified-uniform-testing-bound} gives
		\[
		\log M^\star(n,\epsilon)
		\le
		\frac d2\log n
		+
		\log\mathcal J_W
		+
		C_{W,1-\eta}^{J,{\rm int}}
		-
		\log\left(1-\frac{\epsilon}{\eta}\right)
		+
		o(1).
		\]
		Since this holds for every fixed \(\eta\in(\epsilon,1)\), optimizing over
		\(\eta\) gives the result.
	\end{proof}
	
	\begin{remark}
		\label{remark:gaussian-jeffreys-comparison}
			Proposition~\ref{prop:refined-average-achievability} gives, for every
		\(c<c_\epsilon\),
		\[
		\log M^\star(n,\epsilon)
		\ge
		d\log(c\sqrt n)+\log\lambda_W^\star-\log d!+o(1).
		\]
		Equivalently, by letting \(c\uparrow c_\epsilon\) after taking the
		liminf,
		\[
		\liminf_{n\to\infty}
		\left[
		\log M^\star(n,\epsilon)
		-
		d\log(c_\epsilon\sqrt n)
		-
		\log\lambda_W^\star
		\right]
		\ge
		-\log d! .
		\]
		On the converse side, Theorem~\ref{thm:stratified-jeffreys-refined-converse}
		gives
		\[
		\log M^\star(n,\epsilon)
		\le
		d\log\sqrt n
		+
		\log\mathcal J_W
		+
		C_{W,\epsilon}^{J,{\rm str}}
		+
		o(1).
		\]
		Thus the refined achievability and converse approximations share the same
		affine-dimensional blocklength term \(d\log\sqrt n\).  Their constant-order
		geometric terms arise from different natural constructions: the Euclidean
		reference-simplex volume ratio \(\lambda_W^\star\) on the achievability side,
		and the Fisher volume \(\mathcal J_W\) on the Jeffreys-mixture converse side.
	\end{remark}

	\section{Numerical Results} \label{sec:numerical_results}
	\begin{figure*}[t]
		\normalsize	
		\centering
		\begin{minipage}[t]{0.4 \linewidth}
			\centering
			\includegraphics[width = 1\textwidth]{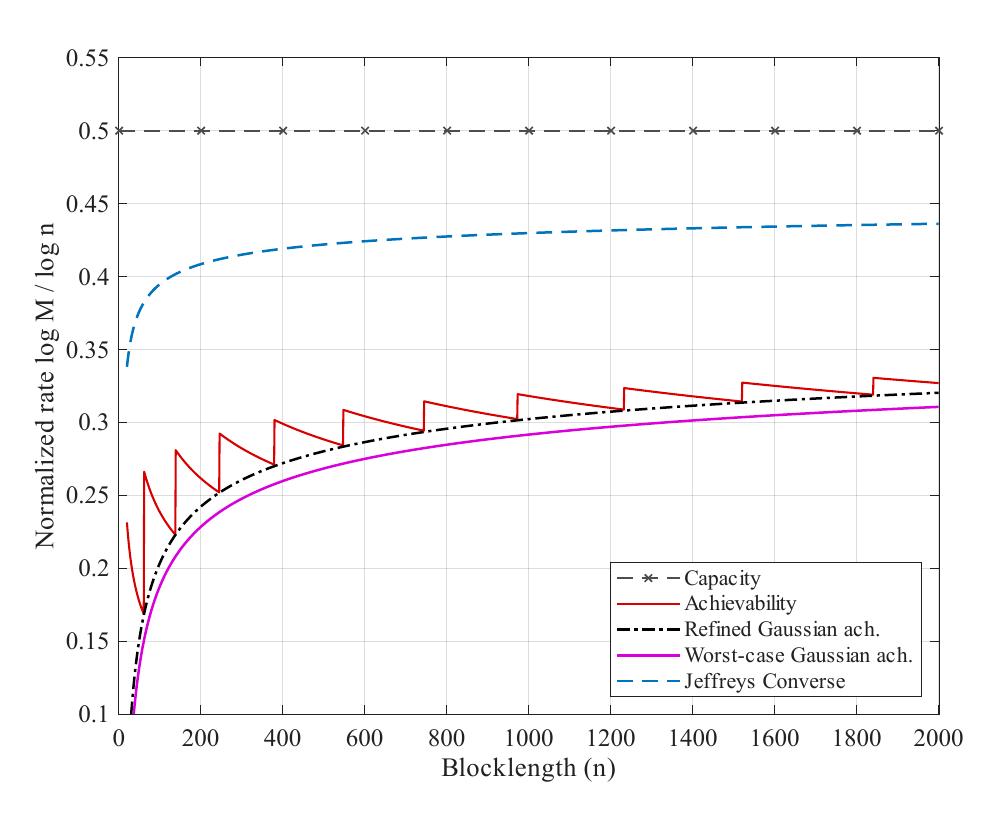}
			\captionsetup{font=footnotesize}
			\caption{\  Normalized rate \( \log M/ \log n\) versus blocklength for the BSC with crossover probability $\delta=0.11$ and average error probability $\epsilon = 10^{-3}$.}
			\label{fig:BSC_11_e-3_approximation}
		\end{minipage} 
		\hspace{40pt}
		\begin{minipage}[t]{0.4 \linewidth}
			\centering
			\includegraphics[width = 1\textwidth]{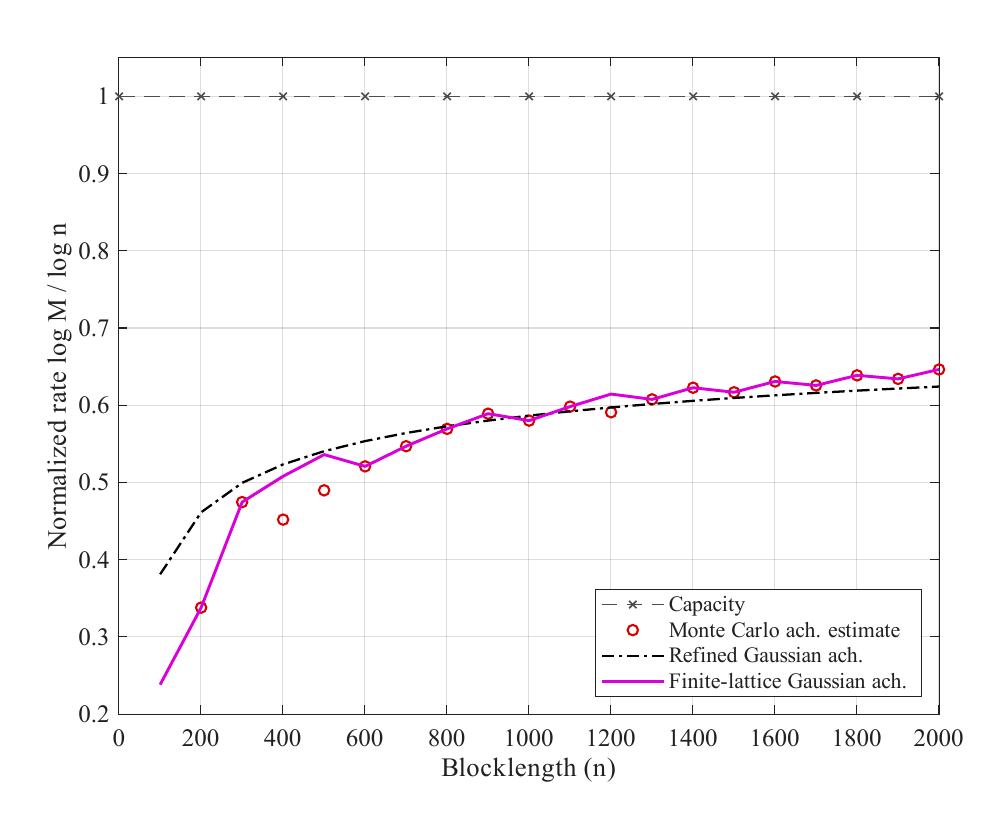}
			\captionsetup{font=footnotesize}
			\caption{\  Normalized rate \( \log M/ \log n\) versus blocklength for the given \(3 \times 4\) channel with average error probability $\epsilon = 10^{-2}$: comparison of the Gaussian approximations and Monte Carlo estimate.}
			\label{fig:34_DMC}
		\end{minipage}
	\end{figure*}

	\subsection{Binary Case}

	We first revisit the binary symmetric channel (BSC) with crossover probability
	\(\delta\). This one-dimensional example provides a simple reference case for
	comparing the Gaussian approximations. Since the BSC is one-dimensional and admits
	explicit maximum-likelihood comparisons between adjacent grid points, we
	use the non-asymptotic achievability bound from
	\cite{feng_lower_isit_2025} as the finite-blocklength bound.  Our focus in this example is therefore on the comparison among the
	finite-blocklength achievability bound, the two Gaussian achievability
	approximations, and the Jeffreys-mixture converse approximation.

	In this case \(d=1\), the two ordered transfer directions correspond to the two
	nearest-neighbor errors, and the reachable output set is
	\(
	[\delta,1-\delta]\subseteq\Delta_1 .
	\)
	Hence the relative length is \(\lambda_W^\star=1-2\delta\).
	
	Write \(c_{\delta,\epsilon}:=(1-2\delta)c_\epsilon\).  Specializing the
	average Gaussian union equation \(A_W(c_\epsilon)=\epsilon\) to the BSC gives
	\begin{equation}
		\frac{2}{1-2\delta}
		\int_{\delta}^{1-\delta}
		\Phi\left(
		-\frac{1-2\delta}
		{2c_{\delta,\epsilon}\sqrt{q(1-q)}}
		\right)dq
		=
		\epsilon .\nonumber
	\end{equation}
	Since \(d=1\), \(\log d!=0\), and \(\lambda_W^\star=1-2\delta\),
	Proposition~\ref{prop:refined-average-achievability} gives, after letting
	\(c\uparrow c_\epsilon\),
	\begin{equation}
		\log M^\star(n,\epsilon)
		\ge
		\frac12\log n+\log c_{\delta,\epsilon}+o(1).
		\label{eq:BSC_approx_refined}
	\end{equation}
	We plot the right-hand side without the \(o(1)\) term, normalized by
	\(\log n\),  as the refined Gaussian
	achievability approximation.
	
	For comparison, the worst-case Gaussian achievability approximation obtained
	from the argument in Remark~\ref{remark:worst-case} recovers the expression in
	\cite[Corollary~1]{feng_lower_isit_2025}:
	\begin{equation}
		\log M^\star(n,\epsilon)
		\gtrsim
		\log \frac{(1-2\delta)\sqrt n}{-\Phi^{-1}(\epsilon/2)} .
		\label{eq:BSC_approx_worst}
	\end{equation}
	In the figure, we plot the right-hand side of
	\eqref{eq:BSC_approx_worst} normalized by \(\log n\).
	
	For the Jeffreys-mixture converse approximation, the Fisher volume of the BSC
	reachable interval is
	\[
	\mathcal J_W
	=
	\int_\delta^{1-\delta}\frac{dq}{\sqrt{q(1-q)}}
	=
	\pi-4\arcsin\sqrt{\delta}.
	\]
	The finite-dimensional testing constant can also be evaluated explicitly.  Let
	\(v_\delta:=\delta(1-\delta)\).  A direct specialization of the Gaussian
	testing functional gives, for \(\alpha\in(0,1)\),
	\[
	C_{W,\alpha}^{J,{\rm int}}
	=
	-\log\!\left[
	4\sqrt{v_\delta}\,
	\Phi^{-1}\!\left(\frac{1+\alpha}{2}\right)
	\right].
	\]
	Therefore,
	\[
	C_{W,\epsilon}^{J,{\rm str}}
	=
	\inf_{\eta\in(\epsilon,1)}
	\left\{
	-\log\!\left[
	4\sqrt{v_\delta}\,
	\Phi^{-1}\!\left(1-\frac{\eta}{2}\right)
	\right]
	-
	\log\left(1-\frac{\epsilon}{\eta}\right)
	\right\}.
	\]
	Theorem~\ref{thm:stratified-jeffreys-refined-converse}
	then gives
	\begin{equation}
		\log M^\star(n,\epsilon)
		\le
		\frac12\log n+\log\mathcal J_W+C_{W,\epsilon}^{J,{\rm str}}+o(1).
		\label{eq:BSC_jef}
	\end{equation}
	We plot the right-hand side without the \(o(1)\) term, normalized by
	\(\log n\),  as the Jeffreys-mixture converse approximation.
	
	Fig.~\ref{fig:BSC_11_e-3_approximation} compares the non-asymptotic
	achievability bound, the two Gaussian achievability approximations, and the
	Jeffreys-mixture converse approximation.  The refined Gaussian achievability
	approximation in \eqref{eq:BSC_approx_refined} improves the worst-case
	Gaussian achievability approximation in \eqref{eq:BSC_approx_worst}: by
	averaging the local variance over the reachable output interval, it reduces
	the conservativeness of the worst-case variance bound and better follows the
	smooth trend of the non-asymptotic achievability curve.  The remaining
	oscillations in the latter curve are caused by finite lattice-resolution
	effects and are not modeled by the continuous Gaussian approximation.  The
	Jeffreys-mixture converse approximation lies above the achievability curves
	and approaches the same normalized logarithmic capacity \(d/2=1/2\).

	\subsection{Lower-Dimensional Case}
	We next consider a lower-dimensional example, where the reachable output
	polytope occupies a proper affine slice of the output simplex.  The purpose of
	this example is to illustrate how the simplex-lattice
	construction and the refined Gaussian achievability approximation behave when
	the reachable polytope is genuinely lower-dimensional.
	
	For the numerical evaluation, we consider the strictly positive \(3\times4\)
	channel
	\[
	W=
	\begin{pmatrix}
		0.80 & 0.05 & 0.05 & 0.10\\
		0.05 & 0.80 & 0.05 & 0.10\\
		0.05 & 0.05 & 0.80 & 0.10
	\end{pmatrix},
	\]
	whose reachable output polytope is two-dimensional.  For each lattice
	resolution \(N\), the message set is
	\[
	\cU_N=K_W\cap\cL_N .
	\]
	For this channel,
	\(
	\cR_W
	=
	\left\{
	q\in\Delta_3:\ q_4=0.1
	\right\},
	\)
	which is itself a \(2\)-simplex.  Hence \(S_W^\star=\cR_W\).  We use the
	affine parametrization
	\(
	T(u_1,u_2,u_3)
	=
	(0.9u_1,0.9u_2,0.9u_3,0.1),
	\qquad u\in\Delta_2 .
	\)
	Here $d=2$,
	\[
	K_W=
	\left\{
	u\in\Delta_2:\ u_i\ge \frac1{18},\ i=1,2,3
	\right\},
	\]
	and
	\[
	\lambda_W^\star
	=
	\frac{\operatorname{vol}_2(K_W)}
	{\operatorname{vol}_2(\Delta_2)}
	=
	\frac{25}{36}.
	\]
	For this example, we use the coordinate map
	\[
	\widetilde H(p)
	=
	\left(
	\frac{p_1}{0.9},
	\frac{p_2}{0.9},
	\frac{p_3}{0.9}
	\right).
	\]
	This map gives the same nearest-neighbor decisions and the same transfer
	fluctuations as the projection-based coordinate map in \eqref{eq:H-projection}.
	Therefore,
	\(
	V_{ij}(u)
	=
	\frac{u_i+u_j}{0.9}
	-
	(u_i-u_j)^2\) for every \( i\ne j\), where  \(u=(u_1,u_2,u_3)\in K_W\). The coefficient \(c_\epsilon\)
	is determined by the average Gaussian union function
	\begin{equation}
		\frac{1}{\operatorname{vol}_2(K_W)}
		\int_{K_W}
		\sum_{i\ne j}
		\Phi\left(
		-\frac{1}{c_\epsilon\sqrt{
				\frac{u_i+u_j}{0.9}-(u_i-u_j)^2}}
		\right)
		du
		=
		\epsilon .\nonumber
	\end{equation}
	Following Proposition~\ref{prop:refined-average-achievability}, the refined
	Gaussian achievability bound gives
	\begin{equation}
		\log M^\star(n,\epsilon)
		\ge
		2\log(c_\epsilon\sqrt n)
		+
		\log\frac{25}{36}
		-
		\log2
		+
		o(1).
		\label{eq:3x4-continuous-refined}
	\end{equation}
	We plot the right-hand side without the \(o(1)\) term, normalized by
	\(\log n\), as the refined Gaussian achievability approximation.
	
	For the finite-lattice Gaussian approximation, define
	\begin{equation}
		A_{n,N}^{\rm lat}
		:=
		\frac1{|\cU_N|}
		\sum_{u\in\cU_N}
		\sum_{i\ne j}
		\Phi\left(
		-\frac{\sqrt n}
		{N\sqrt{\frac{u_i+u_j}{0.9}-(u_i-u_j)^2}}
		\right).
		\nonumber
	\end{equation}
	For each blocklength \(n\), let
	\[
	N_G(n,\epsilon)
	:=
	\max\left\{
	N:\ A_{n,N}^{\rm lat}\le\epsilon
	\right\}.
	\]
	The Gaussian surrogate gives the finite-lattice achievability approximation
	\begin{equation}
		\log|\cU_{N_G(n,\epsilon)}|.
		\label{eq:finite-lat}
	\end{equation}
	This quantity is plotted as a Gaussian achievability approximation to
	\(\log M^\star(n,\epsilon)\), normalized by \(\log n\).

	For the Monte Carlo achievability estimate, we simulate the construction in
	Section~\ref{sec:achievability}. Each message
	\(u\in\cU_N\) induces
	\[
	q_u=T(u)=(0.9u_1,0.9u_2,0.9u_3,0.1).
	\]
	Given an output sequence \(Y^n\), we form the empirical output distribution
	\(\widehat q\) and compute
	\[
	\widehat u(Y^n)
	=
	\left(
	\frac{\widehat q_1}{0.9},
	\frac{\widehat q_2}{0.9},
	\frac{\widehat q_3}{0.9}
	\right).
	\]
	The decoder applies \(g_N\) to \(\widehat u(Y^n)\), with ties counted as
	errors.
	
	For fixed \(n\) and \(N\), the average error probability is
	\[
	P_{\rm e}(n,N)
	=
	\frac1{|\cU_N|}
	\sum_{u\in\cU_N}
	\mathbb P_{Y^n\sim q_u^{\otimes n}}
	\left[
	g_N(\widehat u(Y^n))\ne u
	\right].
	\]
	We estimate \(P_{\rm e}(n,N)\) by Monte Carlo simulation.  For each tested
	pair \((n,N)\) and each message \(u\in\cU_N\), we generate
	\(T_{\rm MC}=300\) independent count vectors
	\[
	C_{u,\ell}\sim \operatorname{Mult}(n,q_u),
	\qquad \ell=1,\ldots,T_{\rm MC}.
	\]
	This is equivalent to sampling \(Y^n\sim q_u^{\otimes n}\) and retaining only
	its empirical distribution.  Let
	\[
	\widehat q_{u,\ell}:=\frac{C_{u,\ell}}{n},
	\qquad
	\widehat u_{u,\ell}:=\widetilde H(\widehat q_{u,\ell}).
	\]
	The Monte Carlo estimate of the average error probability is
	\[
	\widehat P_{\rm e}(n,N)
	=
	\frac1{|\cU_N|}
	\sum_{u\in\cU_N}
	\frac1{T_{\rm MC}}
	\sum_{\ell=1}^{T_{\rm MC}}
	\mathbf 1\!\left\{
	g_N(\widehat u_{u,\ell})\ne u
	\right\}.
	\]
	For each \(n\), among the tested lattice resolutions, let
	\[
	N_{\rm MC}(n,\epsilon)
	:=
	\max\left\{
	N:\ \widehat P_{\rm e}(n,N)\le\epsilon
	\right\}.
	\]
	The Monte Carlo simulation then gives the empirical achievability estimate
	\begin{equation}
		\log|\cU_{N_{\rm MC}(n,\epsilon)}|.
		\label{eq:non_asyp_b}
	\end{equation}
	Equivalently, this is plotted as an achievability estimate for
	\(\log M^\star(n,\epsilon)\), normalized by \(\log n\).
	
	Fig.~\ref{fig:34_DMC} compares the Monte Carlo achievability estimate, the
	finite-lattice Gaussian approximation in \eqref{eq:finite-lat}, and the refined
	Gaussian achievability approximation in \eqref{eq:3x4-continuous-refined}.  The
	finite-lattice Gaussian approximation closely tracks the Monte Carlo estimate
	over most displayed blocklengths, suggesting that the local Gaussian surrogate
	captures the finite-resolution behavior of the simplex-lattice construction.
	The refined Gaussian achievability approximation provides a smoother asymptotic
	trend and does not model the integer effects of the lattice resolution.

	\section{Conclusion}
	
	We studied volume-refined fixed-error achievability and converse bounds for
	strictly positive noisy permutation channels through the affine geometry of the
	reachable output polytope.  By
	working on the affine hull of \(\cP_W\), the code construction, decoder, and
	error analysis depend on the intrinsic dimension \(d=\dim\cP_W\), rather than
	on the ambient output-simplex dimension \(|\cY|-1\).
	
	On the achievability side, the affine-coordinate simplex-lattice construction
	and the error-reduction lemma reduce nearest-neighbor decoding errors to
	\(d(d+1)\) one-dimensional transfer events.  This yields a Gaussian
	achievability approximation whose coefficient is determined by averaged local
	coordinate variances over the reachable output polytope.  On the converse side,
	the meta-converse combined with an affine-dimensional divergence covering and a
	local testing estimate gives a bounded-remainder fixed-error converse.  Together
	with the achievability bound, this establishes the fixed-error logarithmic
	capacity.
	
	We further refined the converse approximation by using a stratified
	Jeffreys-mixture auxiliary output distribution.  The component over
	\(\cP_W\) identifies the Fisher-volume term \(\log\mathcal J_W\) through a
	local Laplace approximation, while the lower-dimensional face components make
	the bound uniform over boundary output types.  This gives a constant-order
	converse approximation with an explicit Gaussian testing constant and an
	\(o(1)\) remainder, complementing the Gaussian achievability approximation.

	\appendices

	\section{Proof of Lemma \ref{lem:uniform-count-local}}
	\label{apx:proof_uniform-count-local}
	
	The concentration bound follows from Chebyshev's inequality.  Since \(S_n=\sum_{s=1}^n B_s\) and \(B_1,\ldots,B_n\) are independent,
	\[
	\operatorname{tr}\operatorname{Cov}(S_n)
	=
	\sum_{s=1}^n \operatorname{tr}\operatorname{Cov}(B_s).
	\]
	For each \(s\), \(B_s\) takes values in the standard basis vectors, and hence
	\[
	\operatorname{tr}\operatorname{Cov}(B_s)
	=
	\mathbb E\|B_s\|_2^2-\|\mathbb E B_s\|_2^2
	=
	1-\|P_s\|_2^2
	\le 1 .
	\]
	Therefore,
	\[
	\begin{aligned}
		& \mathbb P\!\left[\|S_n-n\bar P_n\|_2>K_\alpha\sqrt n\right]\\
		& \le
		\frac{\mathbb E\|S_n-n\bar P_n\|_2^2}{K_\alpha^2 n} \\
		& =
		\frac{\operatorname{tr}\operatorname{Cov}(S_n)}{K_\alpha^2 n}\\
		& \le
		\frac1{K_\alpha^2}.
	\end{aligned}
	\]
	Choosing \(K_\alpha\ge \sqrt{2/\alpha}\) gives the first claim.
	
	For the pointwise upper bound, project the count vector onto its first
	\(m-1\) coordinates and use Fourier inversion on \([-\pi,\pi]^{m-1}\).
	For one summand with law \(P\), set \(\theta_m=0\) and write
	\[
	\varphi_P(\theta)
	=
	\sum_{y=1}^{m-1}P(y)e^{i\theta_y}+P(m).
	\]
	Then
	\[
	|\varphi_P(\theta)|^2
	=
	1-2A_P(\theta),
	\]
	where
	\[
	A_P(\theta):=
	\sum_{1\le i<j\le m}
	P(i)P(j)\bigl(1-\cos(\theta_i-\theta_j)\bigr).
	\]
	Since \(0\le 2A_P(\theta)\le1\), the inequality \(1-x\le e^{-x}\) gives
	\[
	|\varphi_P(\theta)|\le e^{-A_P(\theta)}.
	\]
	Moreover, if
	\[
	\Psi(\theta):=
	\sum_{1\le i<j\le m}
	\bigl(1-\cos(\theta_i-\theta_j)\bigr),
	\qquad \theta_m=0,
	\]
	then \(A_P(\theta)\ge p_{\min}^2\Psi(\theta)\).  Since \(\Psi\) contains the
	pairs \((i,m)\), \(1\le i\le m-1\), and
	\(1-\cos x\ge 2x^2/\pi^2\) for \(|x|\le\pi\),
	\[
	\Psi(\theta)
	\ge
	\sum_{i=1}^{m-1}(1-\cos\theta_i)
	\ge
	\frac{2}{\pi^2}\|\theta\|_2^2 .
	\]
	Hence, uniformly over all \(P\) with \(P(y)\ge p_{\min}\),
	\[
	|\varphi_P(\theta)|
	\le
	\exp\left\{
	-\frac{2p_{\min}^2}{\pi^2}\|\theta\|_2^2
	\right\}.
	\]
	For the independent sum \(S_n\), the characteristic function of the projected
	count vector is the product of the characteristic functions of the projected
	summands.  Thus, with \(c_0=2p_{\min}^2/\pi^2\), Fourier inversion gives
	\[
	\begin{aligned}
		\sup_{\nu}\mathbb P[S_n=\nu]
		&\le
		\frac{1}{(2\pi)^{m-1}}
		\int_{[-\pi,\pi]^{m-1}}
		e^{-c_0n\|\theta\|_2^2}\,d\theta  \\
		&\le
		c_2 n^{-(m-1)/2},
	\end{aligned}
	\]
	where the supremum is over all count vectors
	\(\nu\in\mathbb Z_{\ge0}^m\) with \(\sum_y\nu_y=n\).
	
	Finally, the multinomial lower bound follows from Stirling's formula.  Let
	\(\hat q=\nu/n\).  If \(\|\nu-nq\|_2\le C_0\sqrt n\), then
	\(\|\hat q-q\|_2\le C_0/\sqrt n\).  Since \(q(y)\ge p_{\min}\), for all
	sufficiently large \(n\) we have \(\hat q(y)\ge p_{\min}/2\) for every \(y\).
	Stirling's formula, uniformly over all such \(q\) and \(\nu\), gives constants
	\(c_3>0\) and \(C<\infty\), depending only on \((p_{\min},m)\), such that
	\[
	\operatorname{Mult}(n,q)(\nu)
	\ge
	c_3 n^{-(m-1)/2}\,2^{-nD(\hat q\|q)} .
	\]
	Moreover, by upper bounding the KL divergence by the \(\chi^2\)-divergence,
	\[
	D(\hat q\|q)
	\le
	C\|\hat q-q\|_2^2
	\le
	\frac{C C_0^2}{n}.
	\]
	Therefore,
	\[
	\operatorname{Mult}(n,q)(\nu)
	\ge
	c_3 2^{-C C_0^2} n^{-(m-1)/2}.
	\]
	Setting \(c_1:=c_3 2^{-C C_0^2}\) proves the claimed multinomial lower bound.

	\section{Proof of Lemma \ref{lem:uniform-moments}} \label{apx_proof_uniform_moment}
	The upper bounds are immediate because \(\cY\) is finite, \(K_W\) is
	compact, and \(L\) is fixed.  It remains to prove that the variances are
	uniformly bounded away from zero.
	
	Fix an ordered pair \((i,j)\in\mathcal I_d\), and define the linear
	functional
	\[
	\ell_{ij}(z):=\langle Lz,e_i-e_j\rangle,
	\qquad z\in\bbR^m .
	\]
	Suppose, for contradiction, that \(V_{ij}(u)=0\) for some \(u\in K_W\).
	Then \(Z_{ij,u}(Y)\) is constant \(q_u\)-almost surely. Since
	\(\mathbb E_{q_u}[Z_{ij,u}(Y)]=0\), this constant must be zero. By
	Assumption~\ref{ass:positive-support}, \(q_u(y)>0\) for every \(y\in\cY\),
	and hence
	\[
	Z_{ij,u}(y)
	=
	\ell_{ij}(b_y-q_u)
	=
	0,
	\qquad \forall y\in\cY .
	\]
	Equivalently,
	\[
	\ell_{ij}(b_y)=\ell_{ij}(q_u),
	\qquad \forall y\in\cY .
	\]
	Thus there exists a constant \(c_0=\ell_{ij}(q_u)\) such that
	\[
	\ell_{ij}(b_y)=c_0,
	\qquad \forall y\in\cY .
	\]
	Therefore, for every \(z\in\bbR^m\),
	\[
	\ell_{ij}(z)
	=
	\sum_{y\in\cY} z(y)\ell_{ij}(b_y)
	=
	c_0\sum_{y\in\cY}z(y).
	\]
	In particular, \(\ell_{ij}\) is constant on the affine hyperplane
	\[
	\mathcal A_1
	:=
	\left\{z\in\bbR^m:\sum_{y\in\cY}z(y)=1\right\}.
	\]
	Since \(S_W^\star\subseteq\aff(\cP_W)\subseteq\mathcal A_1\),
	\(\ell_{ij}\) is constant on \(S_W^\star\).
	
	On the other hand, by the definition \(H=T^{-1}\circ\Pi_A\), the map \(H\)
	coincides with \(T^{-1}\) on \(A=\aff(\cP_W)\). Since
	\(S_W^\star\subseteq A\), for \(q=T(v)\in S_W^\star\) we have \(H(q)=v\).
	Writing \(H(q)=Lq+b\), we obtain
	\[
	\ell_{ij}(q)
	=
	\langle Lq,e_i-e_j\rangle
	=
	v_i-v_j-\langle b,e_i-e_j\rangle .
	\]
	As \(v\) ranges over \(\Delta_d\), the quantity \(v_i-v_j\) is not
	constant. Therefore \(\ell_{ij}\) cannot be constant on \(S_W^\star\),
	which contradicts the previous conclusion.
	
	Thus \(V_{ij}(u)>0\) for every \(u\in K_W\) and every
	\((i,j)\in\mathcal I_d\). Since \(V_{ij}(u)\) is continuous in \(u\),
	\(K_W\) is compact, and \(\mathcal I_d\) is finite, we obtain
	\[
	\min_{u\in K_W}\min_{(i,j)\in\mathcal I_d} V_{ij}(u)>0.
	\]
	This gives the desired uniform lower bound \(V_{\min}>0\).

	\section{A Uniform Lattice Local CLT for Count Vectors}
	\label{app:uniform-lattice-local-clt}
	
	We record a finite-alphabet specialization of the triangular-array lattice
	local CLT needed in Lemma~\ref{lem:jeffreys-local-count-lr}.  The result is a
	standard consequence of the Fourier-inversion proof of lattice local
	expansions; see \cite[Ch.~5, Sec.~22]{bhattacharya_rao_normal_1976} and the
	non-i.i.d. extension described in \cite[pp.~240--241]{bhattacharya_rao_normal_1976}. We include the proof here for completeness.
	
	Let \(r\ge1\), let \(e_0:=0\in\bbR^r\), and let
	\[
	E_r:=\{e_0,e_1,\ldots,e_r\}\subset\bbZ^r.
	\]
	
	\begin{lemma}
		\label{lem:appendix-uniform-categorical-local-clt}
		Fix \(\rho>0\).  For each \(n\), let
		\(X_{n,1},\ldots,X_{n,n}\) be independent random vectors taking values in
		\(E_r\).  Assume
		\[
		\mathbb P[X_{n,t}=e_j]\ge \rho,
		\qquad
		0\le j\le r,\ 1\le t\le n .
		\]
		Let
		\[
		\mu_{n,t}:=\mathbb E X_{n,t},
		\qquad
		V_n:=\frac1n\sum_{t=1}^n \operatorname{Cov}(X_{n,t}),
		\qquad
		S_n:=\sum_{t=1}^n X_{n,t}.
		\]
		Then, for every compact set \(K\subset\bbR^r\),
		\[
		\sup
		\left|
		\frac{
			(2\pi n)^{r/2}\sqrt{\det V_n}\,
			\mathbb P[S_n=a]
		}{
			\exp\{-\frac12 w_{n,a}^{\top}V_n^{-1}w_{n,a}\}
		}
		-1
		\right|
		\to0,
		\]
		where the supremum is over all such triangular arrays and all
		\(a\in\bbZ^r\) satisfying
		\[
		w_{n,a}:=
		\frac{a-\sum_{t=1}^n\mu_{n,t}}{\sqrt n}
		\in K .
		\]
	\end{lemma}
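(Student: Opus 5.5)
The plan is the standard Fourier-inversion proof of a lattice local limit theorem, carried out with every constant controlled by \(\rho\) and \(r\) alone, so that the convergence holds uniformly over all triangular arrays.

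\textbf{Step 1: uniform nondegeneracy.} Since each atom has probability at least \(\rho\), every \(\operatorname{Cov}(X_{n,t})\) satisfies
\[
v^\top\operatorname{Cov}(X_{n,t})v \ge c_\rho\|v\|^2 ,
\]
with \(c_\rho>0\) depending only on \((\rho,r)\). This holds because a categorical covariance is degenerate in direction \(v\) only if \(\langle v,e_j\rangle\) is constant over \(j\), which forces \(v=0\) since \(e_0=0\); a compactness argument over \(\|v\|=1\) and over admissible probability vectors then gives the uniform constant. Boundedness gives the matching upper bound. Hence the eigenvalues of \(V_n\) lie in a fixed interval \([c_\rho,C]\), and the Gaussian density factor is uniformly bounded above and below on \(K\).

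\textbf{Step 2: Fourier inversion.} For \(a\in\bbZ^r\),
\[
\mathbb P[S_n=a]=(2\pi)^{-r}\int_{[-\pi,\pi]^r}\prod_t\varphi_{n,t}(\theta)\,e^{-i\langle\theta,a\rangle}\,d\theta .
\]
Substitute \(\theta=s/\sqrt n\) and center by the means, writing \(\psi_{n,t}(\theta)=e^{-i\langle\theta,\mu_{n,t}\rangle}\varphi_{n,t}(\theta)\). Then \((2\pi n)^{r/2}\mathbb P[S_n=a]\) becomes \((2\pi)^{-r/2}\int_{\sqrt n[-\pi,\pi]^r}\prod_t\psi_{n,t}(s/\sqrt n)\,e^{-i\langle s,w_{n,a}\rangle}\,ds\). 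The target is \((2\pi)^{-r/2}\int e^{-s^\top V_ns/2-i\langle s,w\rangle}ds=\det V_n^{-1/2}e^{-w^\top V_n^{-1}w/2}\). Split the region into three pieces.

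(a) On \(\|s\|\le A\): a third-order Taylor expansion of \(\log\psi_{n,t}\), whose third moments are bounded uniformly because the \(X_{n,t}\) are bounded, gives
\[
\sum_t\log\psi_{n,t}(s/\sqrt n)=-\tfrac12 s^\top V_ns+O(\|s\|^3/\sqrt n),
\]
uniformly over arrays.

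(b) On \(A\le\|s\|\le\delta\sqrt n\): the same expansion gives \(|\prod_t\psi_{n,t}|\le e^{-c\|s\|^2}\), so this contribution is at most \(\int_{\|s\|>A}e^{-c\|s\|^2}ds\), which can be made small by taking \(A\) large. The Gaussian tail beyond \(A\) is handled the same way.

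(c) On \(\|\theta\|\ge\delta\) in \([-\pi,\pi]^r\): we need \(|\varphi_{n,t}(\theta)|\le 1-\kappa\) uniformly. This is the main obstacle, since it requires aperiodicity uniform in the array. The approach is the bound from the proof of Lemma~\ref{lem:uniform-count-local}: \(|\varphi|^2=1-2\sum_{j<k}p_jp_k(1-\cos(\theta_j-\theta_k))\) with \(\theta_0=0\), which is at most \(1-2\rho^2\sum_j(1-\cos\theta_j)\le 1-c\rho^2\|\theta\|^2\). The terms involving \(e_0\) give exactly the \(\theta_j\) differences, so the lattice \(\bbZ^r\) is minimal. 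The product is then at most \(e^{-c n\delta^2}\), and after the factor \(n^{r/2}\) this contribution is still exponentially small.

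\textbf{Step 3: conclusion.} Steps (a)–(c) show that the integral differs from the Gaussian integral by \(o(1)\) uniformly in \(a\) and over all arrays. Dividing by the Gaussian factor \(\det V_n^{-1/2}e^{-w^\top V_n^{-1}w/2}\), which is bounded below uniformly for \(w\in K\) by Step 1, converts this absolute error into the claimed relative error \(\to0\).
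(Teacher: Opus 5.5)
Your proposal is correct and follows essentially the same route as the paper: uniform nondegeneracy of $V_n$, lattice Fourier inversion with the rescaling $\theta=s/\sqrt n$, a Taylor expansion on a fixed ball, a Gaussian-type bound on the intermediate region, a uniform aperiodicity gap on the far region, and division by the Gaussian factor, which is bounded below for $w_{n,a}\in K$. The differences are minor. You prove nondegeneracy by compactness, where the paper uses an explicit range and variance identity. You get the aperiodicity gap from the explicit bound $|\varphi(\theta)|^2\le 1-(4\rho^2/\pi^2)\|\theta\|^2$ on $[-\pi,\pi]^r$, taken from the proof of Lemma~\ref{lem:uniform-count-local}, where the paper uses a qualitative compactness argument. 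Your bound is slightly cleaner and in fact controls your regions (b) and (c) at once.
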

	
	\begin{proof}
		All constants below depend only on \(\rho\) and \(r\).  First note that the
		covariances are uniformly nondegenerate.  Indeed, for any unit vector
		\(\theta\in\bbR^r\), the random variable \(\theta^\top X_{n,t}\) takes the
		values \(0,\theta_1,\ldots,\theta_r\), each with probability at least \(\rho\).
		Since \(\max_j|\theta_j|\ge r^{-1/2}\), the range of these values is at least
		\(r^{-1/2}\).  Using
		\[
		\operatorname{Var}(U)
		=
		\frac12\sum_{i,j}p_i p_j(u_i-u_j)^2,
		\]
		we obtain
		\[
		\theta^\top \operatorname{Cov}(X_{n,t})\theta\ge c_-
		\]
		for some \(c_->0\).  The upper bound is immediate from boundedness of
		\(E_r\).  Hence
		\[
		c_- I_r\preceq V_n\preceq c_+ I_r
		\]
		uniformly in the array.
		
		Let
		\[
		\psi_{n,t}(\theta)
		:=
		\mathbb E\exp\{i\theta^\top(X_{n,t}-\mu_{n,t})\}.
		\]
		Fourier inversion on the lattice \(\bbZ^r\) gives
		\[
		\mathbb P[S_n=a]
		=
		(2\pi)^{-r}
		\int_{[-\pi,\pi]^r}
		\exp\!\left\{
		-i\theta^\top\left(a-\sum_{t=1}^n\mu_{n,t}\right)
		\right\}
		\prod_{t=1}^n\psi_{n,t}(\theta)
		d\theta .
		\]
		With the change of variables \(\theta=h/\sqrt n\),
		\[
		n^{r/2}\mathbb P[S_n=a]
		=
		(2\pi)^{-r}
		\int_{\sqrt n[-\pi,\pi]^r}
		e^{-ih^\top w_{n,a}}
		\prod_{t=1}^n\psi_{n,t}(h/\sqrt n)
		dh .
		\]
		
		On every fixed ball \(\|h\|\le M\), Taylor expansion gives, uniformly over
		the array,
		\[
		\log \psi_{n,t}(h/\sqrt n)
		=
		-\frac1{2n}h^\top\operatorname{Cov}(X_{n,t})h
		+
		O\!\left(\frac{\|h\|^3}{n^{3/2}}\right).
		\]
		Therefore
		\[
		\prod_{t=1}^n\psi_{n,t}(h/\sqrt n)
		=
		\exp\left\{-\frac12 h^\top V_nh\right\}(1+o(1))
		\]
		uniformly for \(\|h\|\le M\).
		
		It remains to control the tails uniformly.  For sufficiently small
		\(\delta>0\), the same Taylor expansion and the uniform lower bound on the
		covariance imply
		\[
		|\psi_{n,t}(\theta)|
		\le
		\exp\{-c\|\theta\|^2\},
		\qquad
		\|\theta\|\le\delta .
		\]
		Hence, for \(M\le\|h\|\le\delta\sqrt n\),
		\[
		\prod_{t=1}^n|\psi_{n,t}(h/\sqrt n)|
		\le
		\exp\{-c\|h\|^2\}.
		\]
		For the remaining region
		\(\delta\le\|\theta\|\le \pi\sqrt r\), the common minimal lattice property
		gives a uniform aperiodicity gap.  Indeed, the probability vector
		\((\mathbb P[X=e_j])_{j=0}^r\) ranges over a compact subset of the simplex. The only points in \([-\pi,\pi]^r\) at which
		\(|\mathbb E e^{i\theta^\top X}|=1\) for all distributions supported on
		\(E_r\) are the points of \(2\pi\mathbb Z^r\), hence only \(0\) in the chosen
		fundamental domain.  Since the probabilities are restricted to the compact set
		\(\{p_j\ge\rho,\sum_j p_j=1\}\), the supremum over
		\(\{\delta\le\|\theta\|\le\pi\sqrt r\}\) is strictly smaller than one.  Thus
		\[
		\sup_{\substack{\delta\le\|\theta\|\le \pi\sqrt r\\
				\mathbb P[X=e_j]\ge\rho}}
		\left|
		\mathbb E e^{i\theta^\top X}
		\right|
		<1 .
		\]
		Consequently the contribution of this region is exponentially small in \(n\).
		
		Combining the local approximation and the two tail estimates gives, uniformly
		for \(w_{n,a}\in K\),
		\[
		n^{r/2}\mathbb P[S_n=a]
		=
		(2\pi)^{-r}
		\int_{\bbR^r}
		e^{-ih^\top w_{n,a}}
		e^{-\frac12h^\top V_nh}
		dh
		+
		o(1).
		\]
		The integral equals
		\[
		(2\pi)^{-r/2}(\det V_n)^{-1/2}
		\exp\left\{
		-\frac12 w_{n,a}^{\top}V_n^{-1}w_{n,a}
		\right\}.
		\]
		Since \(w_{n,a}\in K\) and the eigenvalues of \(V_n\) are uniformly bounded
		above and below, this Gaussian factor is uniformly bounded away from zero.
		The asserted relative estimate follows.
	\end{proof}
	
	We next translate the coordinate version back to count vectors in
	\(\mathcal H=\{z\in\bbR^m:\sum_y z_y=0\}\).
	
	\begin{lemma}[Uniform local CLT for count vectors]
		\label{lem:appendix-uniform-count-local-clt}
		Fix \(\rho>0\).  Let \(Y_{n,1},\ldots,Y_{n,n}\) be independent
		\(\mathcal Y\)-valued random variables satisfying
		\[
		\mathbb P[Y_{n,t}=y]\ge\rho,
		\qquad y\in\mathcal Y,\ 1\le t\le n .
		\]
		Let \(N_n\) be the count vector of \(Y_{n,1},\ldots,Y_{n,n}\), and define
		\[
		\bar q_n:=\frac1n\sum_{t=1}^n \mathbb P[Y_{n,t}=\cdot],
		\]
		\[
		\bar\Sigma_n
		:=
		\frac1n\sum_{t=1}^n
		\left[
		\diag(r_{n,t})-r_{n,t}r_{n,t}^{\top}
		\right],
		\qquad
		r_{n,t}:=\mathbb P[Y_{n,t}=\cdot].
		\]
		Then, for every compact \(K\subset\mathcal H\), uniformly over all such
		arrays and all count vectors \(\nu\) satisfying
		\[
		z_{n,\nu}:=\frac{\nu-n\bar q_n}{\sqrt n}\in K,
		\]
		we have
		\[
		\mathbb P[N_n=\nu]
		=
		\frac{\Delta_{\mathcal H}(1+o(1))}
		{(2\pi n)^{(m-1)/2}\sqrt{\det_{\mathcal H}\bar\Sigma_n}}
		\exp\left\{
		-\frac12 z_{n,\nu}^{\top}\bar\Sigma_n^{-1}z_{n,\nu}
		\right\}.
		\]
		Here \(\Delta_{\mathcal H}\) is the fundamental volume of the count lattice
		\(\mathbb Z^m\cap\mathcal H\) in the Euclidean geometry of \(\mathcal H\).
	\end{lemma}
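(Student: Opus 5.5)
The plan is to reduce the count-vector statement to Lemma~\ref{lem:appendix-uniform-categorical-local-clt} with \(r=m-1\), using a linear coordinate change that drops the last output symbol. Write \(\mathcal Y=\{1,\ldots,m\}\) and define \(X_{n,t}:=e_{Y_{n,t}}\in\bbZ^{m-1}\) if \(Y_{n,t}\le m-1\), and \(X_{n,t}:=e_0=0\) if \(Y_{n,t}=m\). Then \(X_{n,t}\in E_{m-1}\), each value has probability at least \(\rho\), and the first \(m-1\) coordinates of \(N_n\) equal \(S_n=\sum_t X_{n,t}\). Since \(\sum_y\nu_y=n\) is fixed, \(\{N_n=\nu\}=\{S_n=\nu'\}\), where \(\nu'\) is \(\nu\) with its last coordinate removed. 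Likewise, \(\mu_{n,t}\) is \(r_{n,t}\) with the last coordinate removed, and \(V_n\) is the upper-left \((m-1)\times(m-1)\) block of \(\bar\Sigma_n\). Also \(w_{n,\nu'}=\Pi z_{n,\nu}\), where \(\Pi:\mathcal H\to\bbR^{m-1}\) deletes the last coordinate.

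The map \(\Pi\) is a linear bijection with inverse \(w\mapsto(w,-\mathbf 1^\top w)\), so compact sets \(K\subset\mathcal H\) correspond to compact sets \(\Pi K\subset\bbR^{m-1}\). Applying the previous lemma on \(\Pi K\) gives, uniformly,
\[
\mathbb P[N_n=\nu]=\frac{1+o(1)}{(2\pi n)^{(m-1)/2}\sqrt{\det V_n}}\exp\left\{-\frac12 w^\top V_n^{-1}w\right\},\qquad w=\Pi z_{n,\nu}.
\]
It remains to convert the quadratic form and the determinant to intrinsic quantities on \(\mathcal H\).

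For the quadratic form: \(\bar\Sigma_n\) maps \(\mathcal H\) into \(\mathcal H\), because \(\mathbf 1^\top\bar\Sigma_n=0\). Regarded as an operator on \(\mathcal H\), it is the pushforward covariance \(\Pi^{-1}V_n\Pi^{-\top}\) in the sense of covariance of \(\mathcal H\)-valued vectors. Hence \(z^\top\bar\Sigma_n^{-1}z=(\Pi z)^\top V_n^{-1}(\Pi z)\), where the inverse is taken on \(\mathcal H\); this is the coordinate-invariance of Mahalanobis distance. For the determinant, let \(G:=\Pi^{-1}\) viewed as a map from \(\bbR^{m-1}\) into \(\mathcal H\), and let \(U\) be an orthonormal basis matrix of \(\mathcal H\). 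Then \(\det_{\mathcal H}\bar\Sigma_n=\det(U^\top GV_nG^\top U)=\det(U^\top G)^2\det V_n\). The factor \(|\det(U^\top G)|\) is exactly the volume in \(\mathcal H\) of the image of the unit cube, namely the fundamental volume \(\Delta_{\mathcal H}\) of \(\bbZ^m\cap\mathcal H=G\bbZ^{m-1}\); this value is \(\sqrt m\). Therefore \(1/\sqrt{\det V_n}=\Delta_{\mathcal H}/\sqrt{\det_{\mathcal H}\bar\Sigma_n}\), which yields the stated formula.

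The only delicate point is this bookkeeping: that \(\bar\Sigma_n\) is nondegenerate on \(\mathcal H\), which follows from \(V_n\succeq c_-I\), and that the lattice factor comes out as precisely \(\Delta_{\mathcal H}\) rather than its inverse. I would confirm the orientation of the factor with the i.i.d. \(m=2\) case, where the count lattice in \(\mathcal H\) has spacing \(\sqrt2\). Uniformity over arrays is inherited directly from Lemma~\ref{lem:appendix-uniform-categorical-local-clt}, since the reduction involves no additional choices.
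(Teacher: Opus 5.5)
Your proposal is correct and follows the paper's route: delete one output coordinate to identify \(\mathcal H\) with \(\mathbb R^{m-1}\), apply Lemma~\ref{lem:appendix-uniform-categorical-local-clt} with \(r=m-1\), and use \(\bar\Sigma_n=GV_nG^\top\) to convert the quadratic form and the determinant, with \(\Delta_{\mathcal H}\) landing in the numerator exactly as you checked. Your \(\det(U^\top G)^2\) is the paper's \(\det(G^\top G)\), since \(G\) maps into \(\mathcal H\). Your explicit value \(\Delta_{\mathcal H}=\sqrt m\) is a correct detail that the paper leaves implicit.
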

	
	\begin{proof}
		Fix a reference output symbol \(y_0\), and identify \(\mathcal H\) with
		\(\bbR^{m-1}\) by deleting the \(y_0\)-coordinate.  Let
		\(G:\bbR^{m-1}\to\mathcal H\) be the inverse linear map, so that \(G\) inserts
		the deleted coordinate by enforcing the zero-sum constraint.  The count lattice
		in \(\mathcal H\) is \(G\bbZ^{m-1}\), whose fundamental volume is
		\[
		\Delta_{\mathcal H}=\sqrt{\det(G^\top G)} .
		\]
		
		Apply Lemma~\ref{lem:appendix-uniform-categorical-local-clt} to the projected
		summands.  Let \(V_n\) be the covariance matrix of the projected summands.
		Then
		\[
		\bar\Sigma_n=G V_nG^\top
		\]
		as a covariance operator on \(\mathcal H\).  If determinants and inverses of
		\(\bar\Sigma_n\) are taken in the Euclidean geometry of \(\mathcal H\), then
		\[
		\det_{\mathcal H}\bar\Sigma_n
		=
		\det(G^\top G)\det V_n
		=
		\Delta_{\mathcal H}^2\det V_n,
		\]
		and for \(z=Gw\in\mathcal H\),
		\[
		z^\top\bar\Sigma_n^{-1}z
		=
		w^\top V_n^{-1}w .
		\]
		Substituting these identities into the coordinate local CLT proves the
		displayed formula.
	\end{proof}

	\section{Proof of Proposition \ref{prop:stratified-uniform-testing-bound}} \label{apx:proof_stratified-uniform-testing-bound}

	Let \(p_n:=\pi_{x^n}\) and \(q_n:=p_nW\).  Choose \(a>0\) small enough as
	specified below, and set
	\[
	\tau_n:=a\sqrt{\frac{\log n}{n}} .
	\]
	We split input sequences into the interior zone
	\[
	\operatorname{dist}(q_n,\relbd\mathcal P_W)\ge \tau_n
	\]
	and the boundary zone, where the reverse inequality holds.  Put
	\[
	a_n:=\frac d2\log n+\log\mathcal J_W-\frac d2\log(2\pi).
	\]
	
	First consider the interior zone.  Since
	\[
	Q_{Y^n}^{\rm str}\ge (1-\delta_n)Q_{\mathcal P_W,n}^{J},
	\]
	we have
	\[
	\begin{aligned}
		&-\log\beta_\alpha(P_{Y^n|x^n}^{\rm perm},Q_{Y^n}^{\rm str})\nonumber \\
		&\le
		-\log\beta_\alpha(P_{Y^n|x^n}^{\rm perm},Q_{\mathcal P_W,n}^{J})
		-\log(1-\delta_n),
	\end{aligned}
	\]
	and \(-\log(1-\delta_n)=o(1)\).  It is therefore enough, in the interior
	zone, to bound the testing term with \(Q_{\mathcal P_W,n}^{J}\).
	
	We claim that, uniformly over the interior zone,
	\[
	\begin{aligned}
		&-\log\beta_\alpha
		\left(
		P_{Y^n|x^n}^{\rm perm},
		Q_{\mathcal P_W,n}^{J}
		\right) \nonumber \\
		&\le
		a_n+
		\sup_{p\in\Delta_{|\mathcal X|-1}}[-\log B_\alpha(p)]
		+o(1).
	\end{aligned}
	\]
	Suppose not.  Then there exist \(\delta>0\), a subsequence \(n_\ell\to\infty\),
	and input sequences \(x^{n_\ell}\) in the interior zone such that, with
	\[
	P_\ell:=P_{Y^{n_\ell}|x^{n_\ell}}^{\rm perm},
	\quad
	Q_\ell:=Q_{\mathcal P_W,n_\ell}^{J},
	\quad
	p_\ell:=\pi_{x^{n_\ell}},
	\]
	we have
	\[
	-\log\beta_\alpha(P_\ell,Q_\ell)
	>
	a_{n_\ell}
	+
	\sup_{p\in\Delta_{|\mathcal X|-1}}[-\log B_\alpha(p)]
	+\delta
	\]
	for all \(\ell\).  By compactness, after passing to a further subsequence,
	\(p_\ell\to p\).  Let \(q_\ell=p_\ell W\) and \(q=pW\).
	
	Although \(q\) may lie on \(\relbd\mathcal P_W\), the interior-zone condition gives
	\[
	\sqrt{n_\ell}\operatorname{dist}(q_\ell,\relbd\mathcal P_W)
	\ge
	a\sqrt{\log n_\ell}\to\infty .
	\]
	Hence Lemma~\ref{lem:jeffreys-laplace-interior-zone} applies along this
	subsequence.  Together with Lemma~\ref{lem:jeffreys-local-count-lr}, and using
	the tightness of
	\[
	Z_{n_\ell}:=\frac{N(Y^{n_\ell})-n_\ell q_\ell}{\sqrt{n_\ell}},
	\]
	we obtain
	\[
	\log\frac{dP_\ell}{dQ_\ell}(Y^{n_\ell})-a_{n_\ell}
	=
	H_{p_\ell}(Z_{n_\ell})+o_{P_\ell}(1).
	\]
	The local CLT gives
	\[
	Z_{n_\ell}\Rightarrow Z_p\sim\mathcal N(0,\Sigma(p)).
	\]
	Moreover \(H_{p_\ell}\to H_p\) uniformly on compact subsets of \(\mathcal H\).
	Therefore
	\[
	H_{p_\ell}(Z_{n_\ell})\Rightarrow H_p(Z_p),
	\]
	and hence
	\[
	\log\frac{dP_\ell}{dQ_\ell}(Y^{n_\ell})-a_{n_\ell}
	\Rightarrow
	H_p(Z_p).
	\]
	
	Let
	\[
	W_\ell
	:=
	2^{-\left(
		\log\frac{dP_\ell}{dQ_\ell}(Y^{n_\ell})-a_{n_\ell}
		\right)},
	\qquad
	W:=2^{-H_p(Z_p)} .
	\]
	Then \(W_\ell\Rightarrow W\).  For a nonnegative random variable \(X\), let
	\[
	\mathsf B_\alpha(X)
	:=
	\inf_{\substack{\varphi:\mathbb R_+\to[0,1]\\
			\mathbb E[\varphi(X)]\ge\alpha}}
	\mathbb E[X\varphi(X)] .
	\]
	By the Neyman-Pearson lemma, it is enough to consider tests measurable
	with respect to the likelihood-ratio statistic.  For a nonnegative random variable \(X\), the corresponding lower-tail
	functional admits the variational representation
	\[
	\mathsf B_\alpha(X)
	=
	\sup_{\lambda\ge0}
	\left\{
	\alpha\lambda-\mathbb E[(\lambda-X)_+]
	\right\}.
	\]
	This variational identity is the lower-tail form of the CVaR variational
	formula \cite[Theorem~1]{rockafellar_uryasev_cvar_2000}. It is obtained by
	applying that formula to the loss \(-X\) at confidence level \(1-\alpha\),
	with the standard randomized-threshold interpretation when atoms are present.  Since
	\(W_\ell\Rightarrow W\), for each fixed \(\lambda\ge0\),
	\[
	\mathbb E[(\lambda-W_\ell)_+]\to
	\mathbb E[(\lambda-W)_+],
	\]
	because \(x\mapsto(\lambda-x)_+\) is bounded and continuous on
	\(\mathbb R_+\).  The equality \(\mathsf B_\alpha(W)=B_\alpha(p)\) follows by conditioning:
	for any admissible test \(\varphi(Z_p)\), replacing it by
	\(\mathbb E[\varphi(Z_p)\mid W]\) preserves its power and its value of
	\(\mathbb E[W\varphi(Z_p)]\).  Conversely, every test measurable with respect
	to \(W=2^{-H_p(Z_p)}\) is also an admissible test measurable with respect to
	\(Z_p\). Therefore
	\[
	\liminf_{\ell\to\infty}\mathsf B_\alpha(W_\ell)
	\ge
	\mathsf B_\alpha(W)
	=
	B_\alpha(p).
	\]
	Since
	\[
	\frac{dQ_\ell}{dP_\ell}(Y^{n_\ell})
	=
	2^{-a_{n_\ell}}W_\ell,
	\]
	the Neyman-Pearson reduction gives
	\[
	\beta_\alpha(P_\ell,Q_\ell)
	=
	2^{-a_{n_\ell}}\mathsf B_\alpha(W_\ell).
	\]
	We then obtain
	\[
	\limsup_{\ell\to\infty}
	\left[
	-\log\beta_\alpha(P_\ell,Q_\ell)-a_{n_\ell}
	\right]
	\le
	-\log B_\alpha(p).
	\]
	This contradicts the preceding strict inequality, because
	\[
	-\log B_\alpha(p)
	\le
	\sup_{\tilde p\in\Delta_{|\mathcal X|-1}}[-\log B_\alpha(\tilde p)] .
	\]
	Thus the claimed interior-zone bound holds.  Combining it with
	\(Q_{Y^n}^{\rm str}\ge(1-\delta_n)Q_{\mathcal P_W,n}^{J}\) gives
	\[
	-\log\beta_\alpha
	\left(
	P_{Y^n|x^n}^{\rm perm},
	Q_{Y^n}^{\rm str}
	\right)
	\le
	\frac d2\log n+\log\mathcal J_W+C_{W,\alpha}^{J,{\rm int}}+o(1)
	\]
	uniformly over the interior zone.
	
	It remains to treat the boundary zone.  Let
	\[
	\mathcal A_n:=\{\|N(Y^n)-nq_n\|\le L\sqrt n\},
	\]
	where \(L\) is chosen so that
	\[
	P_{Y^n|x^n}^{\rm perm}(\mathcal A_n)\ge1-\frac{\alpha}{2}
	\]
	uniformly in \(x^n\).  This follows from the uniform second-moment bound
	\(\mathbb E\|N(Y^n)-nq_n\|^2\le C_W n\).  On \(\mathcal A_n\),
	Lemma~\ref{lem:count-space-likelihood-ratio} together with
	Lemma~\ref{lem:uniform-count-local} gives
	\[
	\frac{P_{Y^n|x^n}^{\rm perm}(y^n)}
	{q_n^{\otimes n}(y^n)}
	\le C_1
	\]
	uniformly.
	
	Now suppose \(q_n\) is in the boundary zone.  Choose
	\(\bar q_n\in\relbd\mathcal P_W\) with
	\[
	\|q_n-\bar q_n\|\le\tau_n .
	\]
	Let \(F_n\in\mathfrak F_\partial\) be the unique proper face whose relative
	interior contains \(\bar q_n\), and write \(r_n:=\dim F_n\le d-1\).
	Since \(\mathcal P_W\) is a polytope with finitely many faces, its face lattice
	has a positive minimum relative solid angle.  Equivalently, there exist
	constants \(c_{\rm face}>0\) and \(\rho_0>0\), depending only on
	\(\mathcal P_W\), such that for every proper face \(F\), every
	\(\bar q\in F\), and every \(0<\rho\le\rho_0\),
	\[
	\operatorname{vol}_{\dim F}\bigl(F\cap B(\bar q,\rho)\bigr)
	\ge
	c_{\rm face}\rho^{\dim F}.
	\]
	Applying this with \(F=F_n\) and \(\rho=\rho_0/\sqrt n\), and absorbing
	\(\rho_0^{r_n}\) into the constant, gives
	\[
	\operatorname{vol}_{r_n}
	\left(F_n\cap B(\bar q_n,\rho_0/\sqrt n)\right)
	\ge
	c_F n^{-r_n/2}.
	\]
	For \(r_n=0\), we use the convention
	\(\operatorname{vol}_0(\{\bar q_n\})=1\).  For positive-dimensional faces, the
	normalized face Jeffreys densities are uniformly bounded below; vertex faces
	are handled by the convention \(Q_{F,n}^{J}=q_F^{\otimes n}\).
	
	For \(v\in F_n\cap B(\bar q_n,\rho_0/\sqrt n)\),
	\[
	\|v-q_n\|
	\le
	\tau_n+\rho_0/\sqrt n
	\le
	2a\sqrt{\frac{\log n}{n}}
	\]
	for all large \(n\).  Write \(\delta_y:=v(y)-q_n(y)\).  On \(\mathcal A_n\),
	write
	\[
	N_y=nq_n(y)+\Delta_y,
	\qquad
	\|\Delta\|\le L\sqrt n .
	\]
	Since \(q_n(y)\ge p_{\min}\) and
	\(\|\delta\|\le 2a\sqrt{\log n/n}\), for all sufficiently large \(n\),
	\[
	\left|\frac{\delta_y}{q_n(y)}\right|\le \frac12,
	\qquad y\in\mathcal Y .
	\]
	Using the Taylor lower bound
	\(\log(1+t)\ge (\log e)t-C_Tt^2\) for \(|t|\le1/2\), we obtain
	\[
	\begin{aligned}
		\sum_y N_y\log\frac{v(y)}{q_n(y)}
		&=
		\sum_y (nq_n(y)+\Delta_y)
		\log\left(1+\frac{\delta_y}{q_n(y)}\right)\\
		&\ge
		(\log e)n\sum_y\delta_y
		-
		C_{\rm T,1} n\|\delta\|^2
		-
		C_{\rm T,2}\|\Delta\|\|\delta\|.
	\end{aligned}
	\]
	The first term is zero because both \(v\) and \(q_n\) are probability
	distributions.  On \(\mathcal A_n\), \(\|\Delta\|\le L\sqrt n\), and hence
	\[
	\sum_y N_y\log\frac{v(y)}{q_n(y)}
	\ge
	-C_{\rm T,1} n\|\delta\|^2-C_{\rm T,2}L\sqrt n\,\|\delta\|.
	\]
	Using \(\|\delta\|\le 2a\sqrt{\log n/n}\), we get
	\[
	\sum_y N_y\log\frac{v(y)}{q_n(y)}
	\ge
	-4C_{\rm T,1}a^2\log n
	-
	2C_{\rm T,2}La\sqrt{\log n}.
	\]
	Since \(a>0\) is fixed, the second term is \(o(\log n)\).  Thus, for a
	constant \(K<\infty\) depending only on \(W,L\), and the Taylor constants,
	\[
	\sum_y N_y\log\frac{v(y)}{q_n(y)}
	\ge
	-Ka^2\log n+o(\log n)
	\]
	uniformly over the boundary zone. Hence
	\[
	v^{\otimes n}(y^n)
	\ge
	n^{-Ka^2+o(1)}q_n^{\otimes n}(y^n).
	\]
	Integrating over \(F_n\cap B(\bar q_n,\rho_0/\sqrt n)\) gives
	\[
	Q_{F_n,n}^{J}(y^n)
	\ge
	c_{\rm J} n^{-r_n/2-Ka^2+o(1)}
	q_n^{\otimes n}(y^n).
	\]
	Since
	\[
	Q_{Y^n}^{\rm str}
	\ge
	\delta_n\omega_{F_n}Q_{F_n,n}^{J},
	\]
	we obtain
	\[
	Q_{Y^n}^{\rm str}(y^n)
	\ge
	c_{\rm str} n^{-\gamma-r_n/2-Ka^2+o(1)}
	q_n^{\otimes n}(y^n).
	\]
	Together with the count likelihood-ratio bound, this yields on \(\mathcal A_n\)
	\[
	\frac{P_{Y^n|x^n}^{\rm perm}(y^n)}
	{Q_{Y^n}^{\rm str}(y^n)}
	\le
	C_{\rm LR} n^{r_n/2+\gamma+Ka^2+o(1)} .
	\]
	Choose \(a>0\) so small that
	\[
	\gamma+Ka^2<\frac12 .
	\]
	Since \(r_n\le d-1\), there exists \(\kappa>0\) such that
	\[
	r_n/2+\gamma+Ka^2
	\le
	\frac d2-\kappa .
	\]
	Thus, on \(\mathcal A_n\),
	\[
	\frac{dP}{dQ}(y^n)
	\le
	C_{\rm LR} n^{d/2-\kappa+o(1)},
	\]
	where \(P:=P_{Y^n|x^n}^{\rm perm}\) and \(Q:=Q_{Y^n}^{\rm str}\).
	
	Let \(0\le\varphi\le1\) be any randomized test with
	\(\mathbb E_P\varphi\ge\alpha\).  Since \(P(\mathcal A_n)\ge1-\alpha/2\),
	\[
	\mathbb E_P[\varphi\mathbf 1_{\mathcal A_n}]
	\ge
	\frac{\alpha}{2}.
	\]
	Using the likelihood-ratio bound on \(\mathcal A_n\),
	\[
	\mathbb E_Q[\varphi]
	\ge
	C_{\rm LR}^{-1}n^{-d/2+\kappa+o(1)}
	\mathbb E_P[\varphi\mathbf 1_{\mathcal A_n}]
	\ge
	c_\alpha n^{-d/2+\kappa+o(1)} .
	\]
	Taking the infimum over all such tests gives
	\[
	\beta_\alpha(P,Q)
	\ge
	c_\alpha n^{-d/2+\kappa+o(1)}.
	\]
	Equivalently,
	\[
	-\log\beta_\alpha(P,Q)
	\le
	\left(\frac d2-\kappa\right)\log n+o(\log n)
	\]
	uniformly over the boundary zone.
	
	Therefore, after subtracting \(\frac d2\log n\), the boundary-zone contribution
	has limsup \(-\infty\), while the interior zone contributes at most
	\[
	\log\mathcal J_W+C_{W,\alpha}^{J,{\rm int}} .
	\]
	This proves the proposition.

	\bibliographystyle{IEEEtran}
	\bibliography{low_rank_gaussian_extension_refs}

	\begin{IEEEbiographynophoto}{Lugaoze Feng}
		received the B.S. degree from Xidian University, Xi'an, China, in 2023. He is currently pursuing the Ph.D. degree in communication and information systems with Peking University, Beijing. His research interests include information theory and channel coding.
	\end{IEEEbiographynophoto}
	
	\begin{IEEEbiographynophoto}{Guocheng Lv}
		received the B.S. degree from Peking University, Beijing, China, in 2006, and the M.S. degree from Peking University, Beijing, China, in 2009. He is currently a Senior Engineer with the School of Electronics, Peking University. His research interests include satellite communication, physical layer modem and non-orthogonal multiple access.
	\end{IEEEbiographynophoto}
	
	\begin{IEEEbiographynophoto}{Xunan Li}
		received the B.S. degree in Telecommunications Engineering from Nankai University, Tianjin, China, in 2013, and the Ph.D. degree in Communications and Information System from Peking University, Beijing, China, in 2018. His research interests include communication signal processing and Satellite Communications.
	\end{IEEEbiographynophoto}
	
	\begin{IEEEbiographynophoto}{Ye Jin}
		received the B.E. and M.S. degrees from Peking University, Beijing, China, in 1986 and 1989, respectively. He is currently a Professor with the Institute of Modern Communications, Peking University. He has been the Principal Investigator of over 30 funded research projects. His general research interests are in the areas of satellite and wireless communications and networking. Prof. Jin was a recipient of the First Prize of the National Science and Technology Progress Awards of China.
	\end{IEEEbiographynophoto}

\end{document}